\newcommand{\C}{\ensuremath{\mathcal{C}}}
\renewcommand\mid{\,|\,}
\newcommand{\z}{\ensuremath{\mathbf{x}}}
\renewcommand{\setZ}{\setX}
\newcommand{\x}{\ensuremath{\mathbf{x}}} % State/particle
\newcommand{\ix}{\widetilde\x} % "Incremental" state/particle
\newcommand{\w}{\ensuremath{\mathbf{w}}} % Importance weight
\newcommand{\rx}{\ensuremath{\check{\mathbf{x}}}} % Resampled particles
\newcommand{\isetX}{\widetilde\setX} % "Incremental" state space
\newcommand{\setC}{\ensuremath{\mathsf{C}}}
\renewcommand{\aa}{\ensuremath{\mathbf{a}}} % Ancestor indices
\newcommand{\nw}{\ensuremath{\bm{W}}} % Normalised weights
\newcommand{\piN}{\widehat \pi^N} % Empirical point-mass distribution
\newcommand\bwdkernel[1]{L_{#1}} % SMC backward kernel
\def\P{{\mathbb P}}
\def\pcv{\stackrel{\scriptscriptstyle \P}{\longrightarrow}}
\newcommand\rmd{\mathrm{d}}
\newcommand\ud{\rmd}
\def\1{{\mathbf 1}}
\newcommand\Nbound{\underline{N\!}\,}
\newcommand\dcsmc[1][\smc]{{D\&C-#1}\xspace} % Divide-and-conquer SMC
\newcommand\sir{SIR\@\xspace} % Sequential importance resampling
\newcommand\RMC{\ensuremath{\mathsf{dc\_smc}}}
\newcommand\stdsir{\ensuremath{\mathsf{sir}}}
\newcommand\jj{j} % Tempering index
\newcommand\nn{n} % Number of annealing steps
\newtheorem{prop}{Proposition}
\newtheorem{lemma}{Lemma}
\newtheoremstyle{smallremark}
  {0.5\topsep}   % ABOVESPACE
  {0.5\topsep}   % BELOWSPACE
  {\normalfont\small}  % BODYFONT
  {0pt}       % INDENT (empty value is the same as 0pt)
  {\itshape} % HEADFONT
  {.}         % HEADPUNCT
  {5pt plus 1pt minus 1pt} % HEADSPACE
  {}          % CUSTOM-HEAD-SPEC
\theoremstyle{remark}
\newtheorem*{example*}{Example}
\newtheorem{remark}{Remark}
\newcommand\sigmaalg{{\mathcal F}}
\newcommand\sigmagen{{\mathcal A}}
\newcommand\compl[1]{{#1}^{\textrm{c}}}
\newcommand{\mis}[1]{\langle D \rangle_2}
\newcommand{\miss}[1]{\mis{#1}^\prime}
\newcommand{\mi}[2]{(#1,#2)}
\newcommand\tleft{{t_1}}
\newcommand\tright{{t_2}}
\begin{document}
\title{\thetitle}
\date{30 June 2015}

\author{F. Lindsten{$\vphantom{u}^{1,3}$},
A. M. Johansen{$\vphantom{n}^2$},
C. A. Naesseth{$\vphantom{n}^3$},
B. Kirkpatrick{$\vphantom{n}^4$},\\
T. B. Sch{\"o}n{$\vphantom{n}^5$},
J. A. D. Aston{$\vphantom{n}^1$},
and A. Bouchard-C\^ot\'e{$\vphantom{e}^6$}
\let\thefootnote\relax\footnote{\noindent
    \scriptsize {1} University of Cambridge, {2} University of Warwick,
    {3} Link{\"o}ping University, {4} University of Miami, {5} Uppsala University,
    {6} University of British Columbia. Address for Correspondence: Fredrik
    Lindsten, Signal Processing Laboratory, CUED, Trumpington Street,
    Cambridge, CB2 3PU. Email: \url{fredrik.lindsten@eng.cam.ac.uk}.
}}

\maketitle

% We show that it can outperform standard methods using a Markov random
% field and further demonstrate it's performance on a hierarchical
% Bayesian model."

\begin{abstract}
  We propose a novel class of Sequential Monte Carlo (\smc) algorithms, appropriate for inference in
  probabilistic graphical models. This class of algorithms adopts a divide-and-conquer approach
  based upon an auxiliary tree-structured decomposition of the model of interest, turning the
  overall inferential task into a collection of recursively solved sub-problems.
  The proposed method is applicable to a broad class of probabilistic graphical models, \emph{including} models
  with loops. Unlike a standard
  \smc sampler, the proposed %method,
  \mbox{Divide-and-Conquer SMC} employs multiple independent
  populations of weighted particles, which are resampled, merged, and propagated as the method
  progresses. We illustrate empirically that this approach can outperform standard methods in terms
  of the accuracy of the posterior expectation and marginal likelihood
  approximations. Divide-and-Conquer SMC also opens up novel parallel implementation options and the
  possibility of concentrating the computational effort on the most challenging sub-problems.
  We demonstrate its performance on a Markov random field and on a hierarchical
  logistic regression problem.
  \vspace{1ex} \\
  \textbf{Keywords:} Bayesian methods, Graphical models, Hierarchical models, Particle filters
\end{abstract}

% SECTION
\section{Introduction}

%%% SMC for non-chain models
Sequential Monte Carlo (\smc) methods are a popular class of algorithms for
approximating some sequence of probability distributions of interest,
$(\pi_t(\x_t) : t = \range{1}{n} )$. This is done by simulating, for each
distribution in the sequence, a collection of $\Np$ particles $\{\x_t^i\}_{i=1}^\Np$
with corresponding nonnegative importance weights $\{\w_t^i \}_{i=1}^\Np$, such
that the weighted empirical distribution $\piN_t(\rmd\x_t) \eqdef (\sum_j
\w_t^j)^{-1}\sum_i \w_t^i \delta_{\x_t^i}(\rmd\x_t)$ approximates $\pi_t$. The weighted
particles are generated sequentially, in the sense that the particles
generated at iteration $t$ depends on the particles generated up to iteration
$t-1$. 
 
The most well-known application of \smc is to the filtering problem in
general state-space hidden Markov models, see \eg, \citet{DoucetJ:2011} and
references therein. However, these methods are  much more
generally applicable and there has been much recent interest in using \smc for
sampling from probability distributions that do not arise from chain-shaped
probabilistic graphical models (\pgm{s}).  This typically involves using \smc
to target a sequence of auxiliary distributions which are constructed to admit
the original distribution as an appropriate marginal \citep{DelMoralDJ:2006}.
Examples include likelihood tempering \citep{DelMoralDJ:2006}, data tempering
\citep{Chopin:2002}, and sequential model decompositions
\citep{Bouchard-CoteSJ:2012,NaessethLS:2014}, to mention a few.  

For many statistical models of interest, however, a \emph{sequential} decomposition
might not be the most natural, nor computationally efficient, way of approaching the inference problem. 
In this contribution we propose an extension of the classical \smc framework, Divide-and-Conquer \smc (\dcsmc),
which we believe will further widen the scope of \smc samplers and
provide efficient computational tools for Bayesian inference within a broad
class of probabilistic models.

%%% Idea behind D&C-SMC
The idea underlying \dcsmc is that an approximation can be made to any
multivariate distribution by splitting the
collection of model variables into \emph{disjoint} sets and defining, for each
of these sets, a suitable auxiliary target distribution. Sampling from these
distributions is typically easier than sampling from the original distribution
and can be done 
in parallel, whereafter the results are merged to provide a solution to the
original problem of interest (correcting for the discrepancy between the
approximating and exact distributions by importance sampling techniques).   
Using the divide-and-conquer methodology, we recurse and repeat this
procedure for each of the components. This corresponds to breaking the overall inferential task into a collection of
simpler problems.
At any intermediate iteration of the \dcsmc algorithm we maintain multiple independent
sets of weighted particles, which are subsequently merged and propagated as the algorithm progresses,
 using rules similar to those employed in standard \smc.
% Theoretical properties
The proposed method inherits some of the theoretical guarantees of standard
\smc methods. In particular, our simulation scheme can be used to provide
\emph{exact  approximations} of costly or intractable MCMC algorithms, via the
particle MCMC methodology \citep{AndrieuDH:2010}. 

Furthermore, we introduce a method for constructing the aforementioned decompositions
for a broad class of \pgm{s} of interest,
 which we call \emph{self-similar graphical models}. To construct auxiliary distributions, we remove edges and nodes in a \pgm of interest, creating smaller connected components as sub-graphical models. These sub-graphical models are then recursively decomposed as well. Note that this decomposition does not assume that the \pgm of interest is tree-shaped. 
Indeed, we demonstrate that the proposed methodology is effective not only when
the model has an obvious hierarchical structure
(for example, Figure~\ref{fig:method:hierarchical_model}), but also in cases where the
hierarchical decomposition is artificial (Figure~\ref{fig:ising}).
In either case, one iteratively exploits solutions to easier sub-problems
as a first step in the solution of a more complex problem.

%%% Outline
We conclude this section with a summary of the structure of the remainder of
the paper. Section~\ref{sec:back} provides details on the background of the work
presented here: algorithms upon which it builds and those to which it is
related.
% The next section summarizes the proposed methodology itself:
% self-similar graphical models and their associated D\&C decompositions are formalized in Section~\ref{sec:tree-structured-auxiliary},
% in Section~\ref{sec:dc-sir} the simplest instance of \dcsmc is presented and
% the remainder of the section is dedicated to a discussion of its theoretical
% properties.
%%%
% The next section summarizes the proposed methodology itself:
% D\&C model decompositions are formalized in Section~\ref{sec:tree-structured-auxiliary},
% in Sections~\ref{sec:dc-sir} and \ref{sec:theory} the simplest instance of \dcsmc is presented and
% its theoretical properties are investigated, and in Section~\ref{sec:decompositions}
% we discuss how to construct tree-structured auxiliary decompositions from self-similar \pgm{s}.
%%%%%%%
The basic \dcsmc and decomposition methodology is  presented in Section~\ref{sec:method},
including a discussion of its theoretical properties. %, and the proposed  strategy are
%%%
A number of methodological extensions are presented in Section~\ref{sec:extensions} and
two realistic applications are presented in Section~\ref{sec:experiments}. The
paper concludes with a discussion.

%%% Local Variables:
%%% mode: latex
%%% TeX-master: "dc-smc.tex"
%%% End:

% SECTION
\section{Background and problem formulation}\label{sec:back}

\subsection{Problem formulation}\label{sec:problem}

We let $\pi$ denote a probability distribution of interest, termed the
\emph{target distribution}. With a slight abuse of notation, we also denote
its density by $\pi(\x)$, $\x \in \setX$ (with respect to an anonymous reference
measure). The set $\setX$ is called the \emph{state space}, and could be
discrete, continuous or mixed (we assume throughout the paper that all spaces
are Polish and equipped with Borel $\sigma$-algebras). We assume that the density
$\pi$ can be written as $\pi(\x) = \gamma(\x) / Z$, where the unnormalized density
$\gamma(\x)$ can be computed point-wise, whereas evaluating the \emph{normalization
  constant} $Z = \int \gamma(\x) \ud \x$ may be computationally challenging.
The two problems with we are concerned are \emph{(1)} approximating the
normalization constant $Z$, and \emph{(2)}, computing integrals under $\pi$ of
some \emph{test function} $f:\setX \to \mathbb{R}$, $\int f(\x) \pi(\x) \ud \x$,
where $f(\x)$ can be computed point-wise. In a Bayesian context, \emph{(1)}
corresponds to approximating the marginal likelihood of the observed data, and
\emph{(2)}, computing the posterior expectation of some function, $f$, of the
parameters and latent variables, $\x$.

\subsection{Probabilistic graphical models}\label{sec:factor-graphs}

Problems \emph{(1)} and \emph{(2)} often arise in the context of \pgm{s}, a
formalism to encode dependencies between random variables in a probabilistic
model. Two sorts of graphical structures are commonly used by statisticians to
describe model dependencies: the Bayesian Network \citep{Pearl:1985}, which summarizes the
conditional independence structure of a Bayesian model using a directed
acyclic graph, and undirected graphs, which are often used to describe models
specified via the full conditional distribution of each node such as Markov
random fields (see below) and many spatial models such as conditional
autoregressions \citep{Besag:1974}. Here, we focus on the abstract factor graph
formalism, and remind the reader that the two formalisms mentioned above can
be easily converted to factor graphs; see, \eg, \citet{Bishop2006ML} for
details.

Two assumptions are required in order to write a model as a factor
graph. First, that the state space, $\setX$, takes the form of a product
space, $\setX = \setX_n = \isetX_1 \times \isetX_2 \times \dots \times \isetX_n$. It is convenient
to define the set of \emph{variables}, $V$, corresponding to the elements of
this factorization, $1, 2, \dots, n$. Second, that the unnormalized density $\gamma$
can be decomposed as, $\gamma(\x_n=(\ix_1,\ldots,\ix_n)) = \prod_{\phi \in F} \phi(S_\phi(\ix_1,\dots,\ix_n))$, where
$F$ is a set of \emph{factors} and the function $S_\phi$ returns a sub-vector of
$(\ix_1,\ldots\ix_n)$ containing those elements upon which factor $\phi$ depends.

Under these assumptions, a factor graph can be defined as a bipartite graph,
where the set of vertices is given by $F \cup V$, and where we place an edge
between a variable $v \in V$ and a factor $\phi \in F$ whenever the function $\phi$
depends on $\isetX_v$, i.e. when $\x_v$ is included in the vector returned by
$S_\phi(\x_1,\ldots,\x_n)$.
%(\ie, where $\phi$ is non-constant with respect to
%$\isetX_v$: there are $\x \neq \x'$ such that $v' \neq v$ implies $\ix_{v'} =
%\ix_{v'}$ but $\phi(\x) \neq \phi(\x')$).
%
Throughout the paper, we use the convention that a variable with a tilde denotes a variable
taking values in a single dimension ($\ix_n \in \isetX_n$), while variables without tilde are
elements of a product space ($\x_n = (\ix_1,\ldots,\ix_n) \in \setX_n = \isetX_1 \times \ldots \isetX_n$).

\subsection{Sequential Monte Carlo}\label{sec:back-smc}
Sequential Monte Carlo (\smc) methods are a class of sampling algorithms able to address problems \emph{(1)} and \emph{(2)} defined in Section~\ref{sec:problem}.
More precisely, \smc can be used to simulate from a sequence of probability
distributions defined on a sequence of spaces of increasing dimension.
Let $\pi_t(\x_t)$, with $\x_t \eqdef \prange{\ix_1}{\ix_t}$, be a \pdf defined on the product space
\begin{align}
  \label{eq:bkg:smc:product-space}
  \setX_t = \isetX_1 \times \isetX_2 \times \dots \times \isetX_t.
\end{align}
Furthermore, as above, assume that $\pi_t(\x_t) = \gamma_t(\x_t)/Z_t$ where $\gamma_t$ can be evaluated
point-wise, but where the normalizing constant $Z_t$ is computationally intractable.
\smc provides a way to sequentially approximate the sequence of distributions
$\pi_1,\,\pi_2,\,\dots,\,\pi_n$. As a byproduct, it also provides \emph{unbiased} estimates
of the normalizing constants $Z_1,\,Z_2,\,\dots,\,Z_n$  \citep[Prop.~7.4.1]{DelMoral:2004}.

The SMC approximation of $\pi_t$ at iteration $t$ ($1 \leq t \le n$) takes the form of a \emph{particle population}. This population consists in a collection of $N$ pairs of \emph{particles} and \emph{weights}: $\{\x_t^i, \w_t^i\}_{i=1}^\Np$, where $\x_t^i \in \setX_t$ and $\w_t^i \ge 0$.
The particle population provides an approximation of $\pi_t$, in the (weak)
sense that expectations of a (sufficiently regular) \emph{test} function, $f$,
with respect to the discrete probability distribution obtained after normalizing the weights,
\begin{align}
  \label{eq:particle-approx}
  \piN_t(\cdot) \eqdef \frac{1}{\sum_{j=1}^N \w_t^j} \sum_{i=1}^N \w_t^i \delta_{\x_t^i}(\cdot),
\end{align}
%can serve as an approximation to an integral of a \emph{test function} $f$:
approximate the expectation of that test function under $\pi_t$:
\[
\int \pi_t(\x_t) f(\x_t) \ud \x_t \approx (\sum_{j=1}^N \w_t^j)^{-1} \sum_{i=1}^N \w_t^i
f(\x_t^i).
\]
One can consider test functions of direct interest (as well as
considering the weak convergence of the approximating distributions which can
be established under various conditions) for
example, one would use $f(\x) = \x$ to approximate a mean, and $f(\x) =
\1_A(\x)$ to approximate the probability that $\x \in A$.
% ABC: I am removing the exponent since x could be a vector, making the expression slightly non-standard notationally

We review here the simplest type of SMC algorithm, Sequential Importance Resampling (\sir),
%---a combination of sequential importance sampling \citep{HandschinM:1969} and resampling \citep{Rubin:1987}---
and refer the reader to \citet{DoucetJ:2011} for a more in-depth exposition.
Pseudo-code for the \sir method is given in Algorithm~\ref{alg:std-sir}.
We present the algorithm in a slightly non-standard recursive form because it
will be convenient to present the proposed \dcsmc algorithm recursively, and
presenting \sir in this way makes it easier to compare the two algorithms. Furthermore, since the focus of this paper
is ``static'' problems (\ie, we are not interested in online inference, such as filtering),
the sequential nature of the procedure need not be emphasized.
For ease of notation, we allow the procedure to be called for $t=0$, which returns an ``empty'' set of particles
and, by convention, $\gamma_0(\emptyset) = 1$.
(Hence, we do not need to treat the cases $t=1$ and $t>1$ separately in the algorithm.)
The main steps of the algorithm, \emph{resampling}, \emph{proposal sampling}, and \emph{weighting}, are detailed below.

\begin{algorithm}[h]\singlespace
  \caption{$\stdsir(t)$ }
  \label{alg:std-sir}
{\footnotesize
  \begin{enumerate}
 \item If $t = 0$, return $(\{\emptyset, 1\}_{i=1}^\Np, 1)$.
 \item $(\{ \x_{t-1}^i, \w_{t-1}^i \}_{i=1}^\Np, \widehat Z_{t-1}^\Np ) \gets \stdsir(t-1)$.
  \item \label{step:std-sir-resample} Resample $\{ \x_{t-1}^i, \w_{t-1}^i \}_{i=1}^\Np$
     to obtain the %equally weighted
     unweighted particle population $\{ \rx_{t-1}^i, 1 \}_{i=1}^\Np$.
 \item \label{step:std-propagate} For particle $i = \range{1}{\Np}$:
   \begin{enumerate}
   \item Simulate $\ix_t^i \sim q_t(\cdot \mid \rx_{t-1}^i)$.
   \item Set $\x_t^i = (\rx_{t-1}^i, \ix_t^i)$.
   \item \label{step:std-sir-weighting} Compute
     $ {  \displaystyle    \w_t^i =
       \frac{ \gamma_t(\x_t^i) }{ \gamma_{t-1} (\rx_{t-1}^i)}
       \frac{1}{q_t(\ix_t^i \mid \rx_{t-1}^i)}   }$.
   \end{enumerate}
   \item \label{step:z-hat} Compute
$       \widehat Z_t^\Np = \left\{ \frac{1}{\Np} \sum_{i=1}^\Np \w_t^i  \right\} \widehat Z_{t-1}^\Np$.
 \item Return $( \{\x_t^i, \w_t^i\}_{i=1}^\Np, \widehat Z_t^\Np)$.
  \end{enumerate}
}
\end{algorithm}

Resampling (Line~\ref{step:std-sir-resample}), in its simplest form, consists
of sampling $N$ times from the previous population approximation $\piN_{t-1}$,
as defined in \eqref{eq:particle-approx}. This is equivalent to sampling the
number of copies to be made of each particle from a multinomial distribution with number of trials $N$ and probability vector $(\w_{t-1}^1, \dots, \w_{t-1}^N)/(\sum_{i=1}^N \w_{t-1}^i)$. Since resampling is done with replacement, a given particle can be resampled zero, one, or multiple times. Informally, the goal of the resampling step is to prune particles of low weights in order to focus computation on the promising parts of the state space. This is done in a way that preserves the asymptotic guarantees of importance sampling. After resampling, the weights are reset to $1/N$, since the weighting is instead encoded in the random multiplicities of the particles. Note that more sophisticated resampling methods are available, see, \eg, \citet{DoucCM:2005}.
Proposal sampling (Line~\ref{step:std-propagate}), is based on user-provided proposal densities $q_t(\ix_t \mid \x_{t-1})$.
For each particle $\rx^i_{t-1} \in \setX_{t-1}$ output from the resampling stage, we sample a successor state $\ix_t^i \sim q_t(\cdot \mid \rx_{t-1}^i)$.  The sampled successor is a single state $\ix_t^i \in \isetX_t$ which is appended to $\rx_{t-1}^i$, to form a sample for the $t$-th product space, $\x_t^i = (\rx_{t-1}^i, \ix_t^i) \in \setX_t$.
Finally, weighting (Line~\ref{step:std-sir-weighting}) is used to correct for
the discrepency between $\pi_{t-1}(\rx_{t-1}^i) q_t(\ix_t^i|\rx_{t-1}^i)$ and the
new target $\pi_t(\rx_{t-1}^i,\ix_{t}^i)$. Importantly, weighting can be performed
on the unnormalized target densities $\gamma_t$ and $\gamma_{t-1}$.
The algorithm returns a particle-based approximation $\piN_t$ of $\pi_t$, as in \eqref{eq:particle-approx}, as well as an unbiased estimate $\widehat Z_t^\Np$ of $Z_t$ (Line~\ref{step:z-hat}).
In practice, an important improvement to this basic algorithm is to perform resampling only when particle degeneracy is severe. This can be done by monitoring the effective sample size (ESS): $(\sum_{i=1}^N \w_t^i)^2/\sum_{i=1}^N (\w_t^i)^2$, and by resampling only when ESS is smaller than some threshold, say $N/2$ \citep{KongLW:1994}.
In Figure~\ref{fig:smc-flow} we illustrate the execution flow of the algorithm as arising from the recursive
function calls.

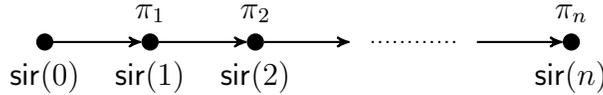
\begin{figure}[ptb]
  \centering
  %%% 
\tikzstyle{arrw} = [->,thick]
\tikzstyle{nd} = [draw,circle,thick,inner sep=0,minimum width=0.2cm,fill=black]
\def\voffset{0.8}
\def\voffsety{0.9}
\def\hoffset{1.4}
%\resizebox{\linewidth}{!}{ % Rescale to line width
\begin{tikzpicture}[>=stealth',node distance=0.6cm]
  % Nodes
  \node at (-\hoffset,0) (pi0) [nd,label=below:{$\stdsir(0)$}] {};
  \node at (0,0) (pi1) [nd,label={$\pi_1$},label=below:{$\stdsir(1)$}] {};
  \node at (\hoffset,0) (pi2) [nd,label={$\pi_2$},label=below:{$\stdsir(2)$}] {};
  \node at (2*\hoffset,0) (l) {};
  \node at (3*\hoffset,0) (r) {};
  \node at (4*\hoffset,0) (pi3) [nd,label={$\pi_n$},label=below:{$\stdsir(n)$}] {};
  % Arrows
  \draw [arrw] (pi0)--(pi1);
  \draw [arrw] (pi1)--(pi2);
  \draw [arrw] (pi2)--(l);
  \draw [dotted, thick] (l)--(r);
  \draw [arrw] (r)--(pi3);
\end{tikzpicture}
%}%
  \caption{Computational flow of \sir (analogous for any \smc sampler). Each node corresponds
    to a call to $\stdsir$, the labels above show the corresponding target distribution,
    and the arrows illustrate the recursive dependencies of the algorithm. Note that this ``computational graph''
    of \smc is a chain, even if the sequence of target distributions does not correspond
    to a chain-structured \pgm.}
  \label{fig:smc-flow}
\end{figure}

The sequence of \emph{target distributions} $\{ \pi_t : t = \range{1}{n}\}$ can be constructed in
many different ways, which largely explains the generality and success of \smc.
The most basic construction, which is the classical application of \smc,
arises from chain-structured factor graphs (for example, state-space models or hidden Markov models).
For a chain-graph, the joint \pdf can be factorized as
$\pi(\x) = \frac{1}{Z} \prod_{t=1}^{n} \phi_t(\ix_{t-1}, \ix_{t})$, where $\x = \prange{\ix_1}{\ix_n}$;
see Section~\ref{sec:factor-graphs}.
(As above, to simplify the notation we have, without loss of generality, introduced a ``dummy variable'' $\ix_0 = \emptyset$.)
To simulate from the target distribution, the standard SIR algorithm employs a sequence of
\emph{intermediate distributions}: $\pi_t(\x_t) \propto \prod_{s=1}^{t} \phi_{s}(\ix_{s-1}, \ix_{s})$, where $\x_t = \prange{\ix_1}{\ix_t}$, $\ix_{s} \in \isetX_{s}$. Each $\pi_t$ can be written as $\gamma_t/Z_t$, where again $\gamma_t$ can be evaluated point-wise, but $Z_t$ is hard to compute. Importantly, we also have that $\pi_n = \pi$ by construction.
In fact, it is possible to make use of similar \emph{sequential decompositions} even when the original graph
is not a chain \citep{NaessethLS:2014}, as long as it is possible to find a sequence of auxiliary distributions
defined on increasing subsets of the model variables.

\subsection{\smc samplers and tempering}\label{sec:smc-samplers-review}

Another common approach is to make use of a sequence of auxiliary distributions for which we are interested
only in one of the marginals. Suppose that the densities of interest are defined over spaces which are not product spaces, $\widetilde\pi_t : \isetX_t \to [0, \infty)$.
For example, we may have $\widetilde\pi_t(\ix) \propto (\widetilde\pi(\ix))^{\alpha_t}$ as a tempered target distribution, with $\isetX_t = \isetX_{t-1} = \dots = \isetX_1$, and $q_t(\ix_t \mid \ix_{t-1})$ derived from a local MCMC move.
% ABC: I removed R^d, which is not incorrect, but could lead to confusion for some reader as the previous
% sentence said we do not assume product space form
We can transform problems
of this type into a form suitable for \smc by using an auxiliary construction proposed by \citet{DelMoralDJ:2006},
which can be viewed as a (substantial) generalization of the annealed importance sampling method \citep{Neal:2001}.

The construction used by \citet{DelMoralDJ:2006} is to re-introduce a sequence of distributions defined
on product spaces $\setX_t = \isetX_1 \times \dots \times \isetX_t$ by defining,
\begin{align}
  \label{eq:auxiliary-pi}
  \pi_t(\x_t) = \widetilde\pi_t(\ix_t) \prod_{s=1}^{t-1} \bwdkernel{s}( \ix_s \mid \ix_{s+1}),
\end{align}
where $\x_t = \prange{\ix_1}{\ix_t} \in \setX_t$ as before.
In the above, $\bwdkernel{s}$ is a transition kernel from $\isetX_{s+1}$ to
$\isetX_{s}$---for instance an MCMC kernel---chosen by the user. For any choice of these
\emph{backward kernels}, $\pi_t$ admits $\widetilde\pi_t$ as a marginal by construction,
and it can thus be used as a proxy for the original target distribution $\widetilde\pi_t$.
Standard \smc algorithms can then be applied to the sequence of auxiliary distributions $\pi_t$, $t = \range{1}{n}$.
Using the structure of $\pi_t$ in \eqref{eq:auxiliary-pi}, the weight computation (Line~\ref{step:std-sir-weighting} of Algorithm~\ref{alg:std-sir}) is given by:
\begin{align}
  \label{eq:smc-samplers-weight}
  \w_t^i = \frac{\widetilde\gamma_t(\ix^i_t)}{\widetilde\gamma_{t-1}(\ix^i_{t-1})} \frac{\bwdkernel{t-1}(\ix^i_{t-1} \mid \ix^i_t)}{q_t(\ix^i_t \mid \ix^i_{t-1})},
\end{align}
where $\widetilde\gamma_t \propto \widetilde\pi_t$.
While the backward kernels $\bwdkernel{t}$ are formally arbitrary (subject to certain support restrictions),
they will critically influence the estimator variance.
If $q_t$ is a $\widetilde\pi_{t-1}$-reversible MCMC kernel, a typical choice is $L_{t-1} = q_t$ which results in a cancellation in
the weight expression \eqref{eq:smc-samplers-weight}: $\w_t^i = \widetilde\gamma_t(\ix_t) / \widetilde\gamma_{t-1}(\ix_{t})$.
See \citet{DelMoralDJ:2006} for further details and
guidance on the selection of the backward kernels.

%%% Local Variables:
%%% mode: latex
%%% TeX-master: "dc-smc.tex"
%%% End:

\subsection{Related work}
Before presenting the new methodology in Section~\ref{sec:method}
we note that a number of related ideas have appeared in the
literature, although all have differed in key respects from the approach
described in the next section.

\citet{Koller1999ISBP} and \citet{BriersDS:2005,SudderthIIFW:2010}
address belief propagation using importance sampling and SMC, respectively, 
and these methods feature coalescence of particle systems, although they do not
provide samples targeting a distribution of interest in an \emph{exact}  
sense \citep{AndrieuDH:2010}. In contrast, the method proposed here yields
consistent estimates of the marginals and normalization constant, even when
approximating a graphical model with loops. Moreover, our method can handle
variables with constrained or discrete components, while much of the existing
literature relies on Gaussian approximations which may not be practical in
these cases.
% another difference which we could potentially add: previous PF BP might have a hard time with higher order
% potentials (despite what some of the papers claim---it seems Briers et al might have overlooked the fact
% that the standard transform to reduce the arity of factors introduces dirac deltas which might cause problems)

Coalescence of particle systems in a
different sense is employed by \citet{JasraDSH:2008} who also use multiple
populations of particles; here the state space of the full parameter 
vector is partitioned, rather than the parameter vector itself. The
\emph{island particle model} of \citet{VergeDDM:2014} employs an ensemble of
particle systems which themselves interact according to the usual rules of SMC,
with every particle system targeting the same distribution over the full set
of variables. The \emph{local particle filtering} approach by \citet{RebeschiniH:2013}
attempts to address degeneracy (in a hidden Markov model context) via an
(inexact) localisation technique. Numerous authors have proposed custom SMC
algorithms for the purpose of inferring the structure of a latent tree, see \cite{TehDR:2008,Bouchard-CoteSJ:2012,LakshminarayananRT:2013}. These methods generally employ a single particle population. In contrast, 
our method assumes a known tree decomposition, and uses several particle populations.

% SECTION
\section{Methodology} \label{sec:method}
The proposed methodology is useful when the inference problem described in Section~\ref{sec:problem} can be
decomposed into a ``tree of auxiliary distributions'', as defined in Section~\ref{sec:tree-structured-auxiliary} below.
We present the basic \dcsmc method in Section~\ref{sec:dc-sir}, followed by fundamental convergence results
in Section~\ref{sec:theory}. Thereafter, we provide a concrete strategy for constructing the aforementioned tree-structured auxiliary distributions on which the \dcsmc algorithm operates. This strategy applies to many directed and undirected graphical modelling scenarios of practical interest (including models with cycles).  
It should be noted that, as with standard \smc algorithms,
a range of techniques are available to improve on the basic method presented in this section, and we discuss
several possible extensions in Section~\ref{sec:extensions}.

\subsection{Tree structured auxiliary distributions}\label{sec:tree-structured-auxiliary}

The proposed \dcsmc methodology generalizes the classical \smc framework from sequences (or chains) to trees.
As noted in Section~\ref{sec:back}, the \smc methodology is a general framework for simulating from essentially any
\emph{sequence} of distributions. Any such sequence can be organized on a chain, with subsequent distributions
being associated with neighbouring nodes on the chain; see Figure~\ref{fig:smc-flow}.
Note that the graph notion here
is used to describe the execution flow of the algorithm, and the sequence of distributions
organized on the chain does not necessarily correspond to a chain-structured \pgm.

In a similar way, \dcsmc operates on a \emph{tree of distributions}, which need not 
correspond to a tree-structured \pgm. 
Specifically, as in Section~\ref{sec:back-smc}, assume that we have a collection of (auxiliary) distributions, $\{\pi_t:t\in T\}$. However, instead of taking the index set $T$ to be nodes in a sequence, $T = \{1, 2, \dots, n\}$, we generalize $T$ to be nodes in a tree.
For all $t\in \T$, let $\C(t)\subset\T$ denote the
children of node $t$, with $\C(t) = \emptyset$
if $t$ is a leaf, and let $r\in\T$ denote the root of the tree.
We assume $\pi_t$ to have a density, also denoted by $\pi_t$, defined on a set $\setX_t$. We call such a collection a tree structured auxiliary distributions a \emph{tree decomposition} of
the target distribution $\pi$ (introduced in Section~\ref{sec:problem}) if it
has two properties. First, the root distribution is required to coincide with the target distribution, $\pi_r = \pi$.
The second is a consistency condition: we require that the spaces on which the node distributions are defined are
constructed recursively as
\begin{align}\label{eq:space-decomposition}
  \setX_t = \left( \otimes_{c \in \C(t)} \setX_c \right) \times \widetilde\setX_t
\end{align}
where the ``incremental'' set $\widetilde\setX_t$ can be chosen arbitrarily
(in particular, $\widetilde\setX_t = \emptyset$ for all $t$ in some proper subset of the nodes in $\T$ is a valid choice). Note that the second condition mirrors the
product space condition \eqref{eq:bkg:smc:product-space}. 
That is, the distributions $\{\pi_t : t\in T\}$ are defined on spaces of increasing
dimensions as we move towards the root from the leaves of the tree.
%I've removed the top/down explnation as it's ambibuous: I'd put the root of a
%tree at the bottom by default....

\begin{figure}[ptb]
  \centering
  %%%
\newcommand\mydots{$\,\dots$}

\tikzstyle{arrw} = [->,thick]
\tikzstyle{nd} = [draw,circle,thick,inner sep=0,minimum width=0.2cm,fill=black]
\def\voffset{1}
\def\voffsety{0.9}
\def\hoffset{1.4}
%\resizebox{\linewidth}{!}{ % Rescale to line width
\begin{tikzpicture}[>=stealth',node distance=0.6cm]
  % Nodes - level 2
  \node at (0,0) (pi1) [nd,label=below:{$\pi_{c_1}$}] {};
  \node at (\hoffset,0) (pi2) [nd,label=below:{$\pi_{c_C}$}] {};
  \node at (2*\hoffset,0) (pi3) [nd] {};
  \node at (3*\hoffset,0) (pi4) [nd] {};
  % Dots - level 2
  \node at (\hoffset/2, \voffset/4) {\mydots} ;
  \node at (\hoffset*5/2, \voffset/4) {\mydots} ;
  % Nodes - level 1
  \node at (\hoffset/2,\voffset) (pi5) [nd,label=above:{$\pi_t$}] {};
  \node at (\hoffset*5/2,\voffset) (pi6) [nd] {};
  % Dots - level 1
  \node at (\hoffset*3/2, \voffset) {\mydots} ;
  % Nodes - level 0
  \node at (\hoffset*3/2,2*\voffset) (r) [nd,label=above:{$\pi_r$}] {};
  % Arrows
  \draw [arrw] (pi1)--(pi5);
  \draw [arrw] (pi2)--(pi5);
  \draw [arrw] (pi3)--(pi6);
  \draw [arrw] (pi4)--(pi6);
  % Partly dotted arrows
  \draw[thick] (pi5)--($(pi5)!0.25!(r)$);
  \draw[thick,dotted] ($(pi5)!0.25!(r)$)--($(pi5)!0.6!(r)$);
  \draw[arrw] ($(pi5)!0.6!(r)$)--(r);
  \draw[thick] (pi6)--($(pi6)!0.25!(r)$);
  \draw[thick,dotted] ($(pi6)!0.25!(r)$)--($(pi6)!0.6!(r)$);
  \draw[arrw] ($(pi6)!0.6!(r)$)--(r);
\end{tikzpicture}
%}%
  \caption{Computational flow of \dcsmc. Each node corresponds to a target distribution $\{\pi_t : t\in T\}$ and,
    thus, to a call to \dcsmc (Algorithm~\ref{alg:dcsir}).
    The arrows illustrate the computational flow of the algorithm via its recursive dependencies.}
  \label{fig:dcsmc-flow}
\end{figure}
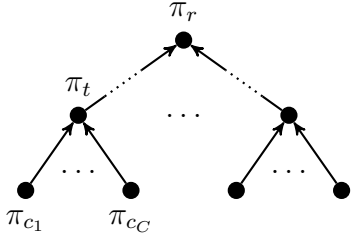

Figure~\ref{fig:dcsmc-flow} illustrates the execution flow of the \dcsmc algorithm
(which is detailed in the subsequent section), which performs inference for
the distributions $\{\pi_t : t \in T\}$ from leaves to root
in the tree. As pointed out above, the computational tree $T$ does not necessarily correspond
to a tree-structured \pgm. Nevertheless, when the \pgm of interest \emph{is} in fact a tree, the computational
flow of the algorithm can be easily related to the structure of the model
(just as the computational flow of standard \smc is easily understood when the
\pgm is a chain, although the \smc framework is in fact more general). Let us
therefore consider an example of how the target distributions $\{\pi_t : t \in T\}$
can be constructed in such a case, to provide some intuition for the proposed
inference strategy before getting into the details of the algorithm.

\begin{example*}[Hierarchical models]
Consider the simple tree-structured Bayesian network of
Figure~\ref{fig:method:hierarchical_model}~(rightmost panel), with three observations $y_{1:3}$, and five latent variables $\ix_{1:5}$.
The distribution of interest is the posterior $p(\ix_{1:5} \mid y_{1:3})$. To put this in the notation
introduced above, we define $\x_5 = \ix_{1:5}$ and $\pi(\x_5) = \pi_5(\x_5) = p(\ix_{1:5} \mid y_{1:3})$.
To obtain a tree decomposition of $\pi_5$ we can make use of the hierarchical structure of the \pgm.
By removing the root node $\ix_5$ we obtain two decoupled components (Figure~\ref{fig:method:hierarchical_model}, middle) for which we can define the auxiliary target distributions
$\pi_4(\x_4) = p(\x_4 \mid y_{1:2})$ and $\pi_3(\x_3) = p(\xi_3 \mid y_3)$, respectively, where $\x_4 = (\ix_1, \ix_2, \ix_4)$
and $\x_3 = \ix_3$. If the marginal priors for the root nodes in the decomposed models (here, $p(\ix_4)$ and $p(\ix_3)$)
are intractable to compute, we can instead define the auxiliary distribution $\pi_t(\x_t)$, $t=3,4$, using an arbitrary ``artificial prior''  $u_t(\ix_t)$
for its root (similar to the two-filter smoothing approach of \citet{BriersDM:2010}). This arbitrariness is ultimately corrected for by
importance weighting and does not impinge upon the validity of the proposed
inference algorithm (see Section~\ref{sec:theory}), although the choice of
$u_t$ can of course affect the computational efficiency of the algorithm.
Finally, by repeating this procedure, we can further decompose $\pi_4(\x_4)$ into two components, $\pi_1(\x_1)$ and $\pi_2(\x_2)$, as illustrated in Figure~\ref{fig:method:hierarchical_model}~(left). The target distributions
$\{\pi_t(\x_t) : t \in \crange{1}{5} \}$ can be organised on a tree (with the same graph topology as the \pgm under study,
excluding the observed variables) which satisfies the conditions for being a tree decomposition of
the sought posterior $p(\ix_{1:5} \mid y_{1:3})$.
\end{example*}

\begin{figure}[ptb]
  \centering
  %%% 
\tikzstyle{arrw} = [->,thick]
\tikzstyle{var} = [draw,circle,thick,inner sep=0,minimum width=0.55cm]
\tikzstyle{obs} = [draw,circle,thick,inner sep=0,minimum width=0.55cm,fill=black!15]
\def\voffset{0.8}
\def\voffsety{0.9}
\def\hoffset{1}
%\resizebox{\linewidth}{!}{ % Rescale to line width
  \begin{tikzpicture}[>=stealth',node distance=0.6cm]
    \begin{scope}
      \node at (0,\voffsety) (a) [] {Level 2:};
      \node at (0,\voffsety+\voffset) (b) [] {Level 1:};
      \node at (0,\voffsety+2*\voffset) (c) [] {Level 0:};
    \end{scope}  
    \begin{scope}[shift={(2,0)}]
    \begin{scope}
      % Observations
      \node at (0,0) (y1) [obs] {$y_1$};
      \node at (\hoffset,0) (y2) [obs] {$y_2$};
      \node at (2*\hoffset,0) (y3) [obs] {$y_3$};
      % Level 2
      \node at (0,\voffsety) (th21) [var] {$\widetilde\x_1$};
      \node at (\hoffset,\voffsety) (th22) [var] {$\widetilde\x_2$};
      \node at (2*\hoffset,\voffsety) (th23) [var] {$\widetilde\x_3$};
      \draw [arrw] (th21)--(y1);
      \draw [arrw] (th22)--(y2);
      \draw [arrw] (th23)--(y3);
    \end{scope}
    \begin{scope}[shift={(4,0)}]
      % Observations
      \node at (0,0) (y1) [obs] {$y_1$};
      \node at (\hoffset,0) (y2) [obs] {$y_2$};
      \node at (2*\hoffset,0) (y3) [obs] {$y_3$};
      % Level 2
      \node at (0,\voffsety) (th21) [var] {$\widetilde\x_1$};
      \node at (\hoffset,\voffsety) (th22) [var] {$\widetilde\x_2$};
      \node at (2*\hoffset,\voffsety) (th23) [var] {$\widetilde\x_3$};
      \draw [arrw] (th21)--(y1);
      \draw [arrw] (th22)--(y2);
      \draw [arrw] (th23)--(y3);
      % Level 1
      \node at (0.5*\hoffset,\voffsety+\voffset) (th11) [var] {$\widetilde\x_4$};
      \draw [arrw] (th11)--(th21);
      \draw [arrw] (th11)--(th22);
    \end{scope}
    \begin{scope}[shift={(8,0)}]
      % Observations
      \node at (0,0) (y1) [obs] {$y_1$};
      \node at (\hoffset,0) (y2) [obs] {$y_2$};
      \node at (2*\hoffset,0) (y3) [obs] {$y_3$};
      % Level 2
      \node at (0,\voffsety) (th21) [var] {$\widetilde\x_1$};
      \node at (\hoffset,\voffsety) (th22) [var] {$\widetilde\x_2$};
      \node at (2*\hoffset,\voffsety) (th23) [var] {$\widetilde\x_3$};
      \draw [arrw] (th21)--(y1);
      \draw [arrw] (th22)--(y2);
      \draw [arrw] (th23)--(y3);
      % Level 1
      \node at (0.5*\hoffset,\voffsety+\voffset) (th11) [var] {$\widetilde\x_4$};
      \draw [arrw] (th11)--(th21);
      \draw [arrw] (th11)--(th22);
      % Level 0
      \node at (1.25*\hoffset,\voffsety+2*\voffset) (th01) [var] {$\widetilde\x_5$};
      \draw [arrw] (th01)--(th11);
      \draw [arrw] (th01)--(th23);
    \end{scope}
    \end{scope}
  \end{tikzpicture}
%}%%
  \caption{{\footnotesize Decomposition of a hierarchical Bayesian model.}\vspace{-0.2in}}
  \label{fig:method:hierarchical_model}
\end{figure}
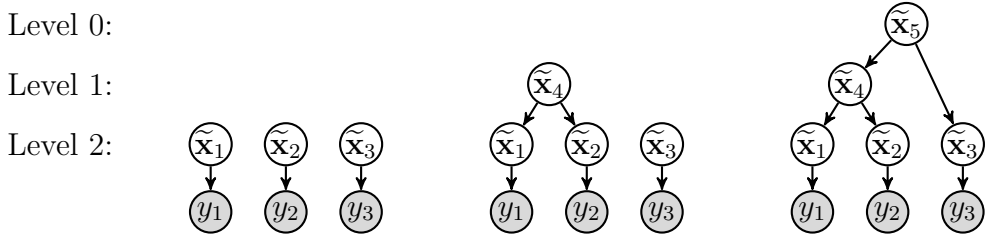

In Section~\ref{sec:decompositions} we formalise the decomposition strategy illustrated in the
example above, and also generalise it to a broader class of, so called, self-similar \pgm{s}.

\subsection{Divide-and-Conquer Sequential Importance Resampling}\label{sec:dc-sir}

We now turn to the description of the \dcsmc algorithm---a Monte Carlo procedure for approximating
the target distribution $\pi = \pi_r$ based on the auxiliary distributions ${\{\pi_t : t\in T\}}$.
For pedagogical purposes, we start by presenting the simplest possible implementation of the algorithm,
which can be thought of as the analogue to the \sir implementation of \smc.
Several possible extensions are discussed in Section~\ref{sec:extensions}.

As in standard SMC, \dcsmc approximates each $\pi_t$ by a collection of weighted samples, also referred to as a
particle population. 
Unlike a standard \smc sampler, however, the method maintains multiple \emph{independent} populations
of weighted particles, $(\{\x_t^i, \w_t^i\}_{i=1}^\Np : t\in\T_k)$, which are propagated  and merged as the algorithm progresses. Here $T_k \subset T$ is the set of indices of ``active'' target distributions at iteration $k$, $1 \leq k \leq \textrm{depth}(T)$.

The \dcsmc algorithm uses a bottom-up approach to simulate from the auxiliary target distributions
defined on the tree, by repeated resampling, proposal, and weighting steps,
which closely mirror standard SMC. We describe the algorithm by specifying the operations that are carried out at each node of the tree, leading to a recursive definition of the method.
For $t\in\T$, we define a procedure $\RMC(t)$ which returns, \emph{(1)} a weighed particle population $\{ \x_t^i, \w_t^i\}_{i=1}^\Np$
approximating $\pi_t$ as $\piN_t$ in Equation~\eqref{eq:particle-approx}, and \emph{(2)} an estimator $\widehat Z^\Np_t$ of the normalizing constant $Z_t$ (such that $\pi_t(\x_t) = \gamma_t(\x_t)/Z_t$). The procedure is given in Algorithm~\ref{alg:dcsir}.

\begin{algorithm}[ptb]\singlespace
  \caption{$\RMC(t)$}
  \label{alg:dcsir}
{\footnotesize 
  \begin{enumerate}
  \item For $c\in\C(t)$:
    \begin{enumerate}
    \item \label{step:recursive-call} $(\{\x_c^i, \w_c^i \}_{i=1}^\Np, \widehat Z_c^\Np ) \gets \RMC(c)$.
    \item \label{step:resample} Resample $\{\x_c^i, \w_c^i \}_{i=1}^\Np$
      to obtain the equally weighted particle system $\{ \rx_c^i, 1 \}_{i=1}^\Np$.
    \end{enumerate}
  \item \label{step:propagate} For particle $i = \range{1}{\Np}$:
    \begin{enumerate}
    \item If $\tilde \setX_t \neq \emptyset$, simulate $\widetilde\x_t^i \sim q_t(\cdot \mid \rx_{c_1}^i, \dots, \rx_{c_C}^i)$, where $(c_1, c_2, \dots, c_C) = \C(t)$; \\ else $\widetilde\x_t^i \gets \emptyset$.
    \item Set $\x_t^i = (\rx_{c_1}^i, \dots, \rx_{c_C}^i, \widetilde\x_t^i)$.
    \item \label{step:weighting} Compute
 $ {  \displaystyle    \w_t^i = 
        \frac{ \gamma_t(\x_t^i) }{ \prod_{c\in\C(t)} \gamma_{c} (\rx_c^i)}
        \frac{1}{q_t(\widetilde\x_t^i \mid \rx_{c_1}^i, \dots, \rx_{c_C}^i)}   }$.
    \end{enumerate}
    \item Compute
 $       \widehat Z_t^\Np = \left\{ \frac{1}{\Np} \sum_{i=1}^\Np \w_t^i  \right\} \prod_{c\in\C(t)} \widehat Z_c^\Np$.
  \item Return $( \{ \x_t^i, \w_t^i \}_{i=1}^\Np, \widehat Z_t^\Np)$.
  \end{enumerate}
}
\end{algorithm} 

The first step of the algorithm is to acquire, for each child node $c \in \C(t)$, a particle approximation
of $\pi_c$ 
by a recursive call (Line~\ref{step:recursive-call}).
Jointly, these particle populations provide an approximation of the product measure,
\begin{align}
  \label{eq:dcsir-product-measure}
  \kronecker_{c\in\C(t)} \pi_c(\rmd\x_c) \approx  \kronecker_{c\in\C(t)} \piN_c(\rmd\x_c).
\end{align}
Note that this point-mass approximation has support on $N^{C}$, $C = |\C(t)|$, points, although these
support points are implicitly given by the $NC$ unique particles (assuming no duplicates among the particles in the individual
child populations).

To obtain a computationally manageable approximation of the product measure, we generate $\Np$ samples
from the approximation in \eqref{eq:dcsir-product-measure}.
This is equivalent to performing standard multinomial resampling for each child particle population (Line~\ref{step:resample}), obtaining equally weighted samples $\{ \rx_c^i, 1 \}_{i=1}^\Np$ for each $c$, and for all
$i = \range{1}{\Np}$, combining all indices $i$ of the $c$ lists to create $N$ equally weighted tuples,
$\{(\rx_{c_1}^i, \dots, \rx_{c_{C}}^i), 1 )_{i=1}^\Np$.
This basic merging strategy can thus be implemented in $\Ordo(\Np)$ computational cost,
since there is no need to explicitly form the approximation of the product measure in \eqref{eq:dcsir-product-measure}.

The latter, resampling-based, description of how the child populations are merged
provide natural extensions to the methodology, \eg by using low-variance resampling schemes (\eg, \citet{CarpenterCF:1999})
and adaptive methods that monitor effective sample size to perform resampling only when particle degeneracy is severe \citep{KongLW:1994}.
\begin{remark}
 Note, however, that if we perform resampling amongst the child populations
 separately and then combine the resulting particles in this way, we require
$\Prb(\rx_c^i = \x_c^j) =  ( \sum_{l=1}^\Np \w_c^l)^{-1} \w_c^j$, $j=\range{1}{\Np}$,
for each $i = \range{1}{\Np}$, since the particles are combined based on their indices
(\ie, it is not enough that the marginal equality $\sum_{i=1}^\Np \Prb(\rx_c^i =
\x_c^j) = \Np (\sum_{l=1}^\Np \w_c^l)^{-1} \w_c^j$ holds).
Consequently, if the resampling mechanism that is employed results in an ordered list of resampled particles,
then a random permutation of the particles indices should be carried out before combining particles
from different child populations.
\end{remark}

Proposal sampling (Line~\ref{step:propagate}), similarly to standard SMC, is based on user-provided proposal densities $q_t$. However, the proposal has access to more information in \dcsmc, namely to the state of all the children $c_1, c_2, \dots, c_C$ of node $t$: $q_t(\cdot \mid \rx_{c_1}^i, \dots, \rx_{c_C}^i)$. For each particle tuple $(\rx_{c_1}^i, \dots, \rx_{c_{C}}^i)$ generated in the resampling stage, we sample a successor state $\widetilde\x_t^i \sim q_t(\cdot \mid \rx_{c_1}^i, \dots, \rx_{c_C}^i)$. Note that in some cases, parts of the tree structured decomposition do not require this proposal sampling step, namely when $\isetX_t = \emptyset$. We simply set $\widetilde\x_t^i$ to $\emptyset$ in these cases (the resampling and reweighting are still non-trivial).

Finally, we form the $i$-th sample at node $t$ of the tree by concatenating the tuple of resampled
child particles $(\rx_{c_1}^i, \dots, \rx_{c_{C}}^i)$ and the proposed state $\widetilde\x_t^i$ (if it is non-empty).
The importance weight is given by the ratio of the (unnormalised) target densities, divided by the proposal
density (Line~\ref{step:weighting}). We use the convention here that $\prod_{c\in\emptyset}(\cdot) = 1$, to take into account the base case of this recursion, at the leaves of the tree.

\begin{example*}[Hierarchical models, continued]
A simple choice for $q_t(\,\cdot \mid \rx_{c_1},\dots, \rx_{c_C}^i)$ in this example is to use $u_t$, the (artificial)
prior at the sub-tree rooted at node $\ix_t$. An alternative is to select $u_t$ as a conjugate prior to the distributions of the children, $p(\ix_c \mid \ix_t)$, $c\in\C(t)$, and to propose according to the posterior distribution of the conjugate pair. 
To illustrate the weight update, we show its simplified form in the simplest situation, where $q_t = u_t$:
\begin{align*}
  \w_t^i = 
  \frac{ \gamma_t(\x_t^i) }{ \prod_{c\in\C(t)} \gamma_{c} (\rx_c^i)}
  \frac{1}{q_t(\widetilde\x_t^i \mid \rx_{c_1}^i, \dots, \rx_{c_C}^i)}
  = \frac{ u_t(\x_t^i) \prod_{c\in\C(t)} p(\rx_c^i \mid \ix_{t}^i) }{\prod_{c\in\C(t)} u_{c}(\rx_{c}^i)}
  \frac{1}{u_t(\x_t^i)}
  = \prod_{c\in\C(t)} \frac{ p(\rx_c^i \mid \ix_{t}^i) }{ u_{c}(\rx_{c}^i)}.
\end{align*}
\end{example*}

If executed serially, the running time of \dcsmc is $O(N\cdot |T|)$. However, a running time of $O(N\cdot \textrm{depth}(T))$ can be achieved via parallelized or distributed computing (see Section~\ref{sec:distributed}).  In terms of memory requirements, they grow at the rate of $O(N\cdot \textrm{depth}(T)\cdot \max_t |\C(t)|)$ in the serial case,\footnote{The $\textrm{depth}(T)$ factor comes from the maximum size of the recursion stack, and $N \max_t |\C(t)|$ comes from the requirement for each level of the stack to store a particle population for each child.} and at the rate of $O(N\cdot |T|)$ in the parallel case.\footnote{In the extreme case where $|T|/2$ compute nodes are used, one for each leaf of a binary tree.}
Note that the \dcsmc algorithm generalizes the usual \smc framework; if $|\C(t)| = 1$ for all internal nodes, then the \dcsmc[\sir] procedure
described above reduces to a standard \sir method (Algorithm~\ref{alg:std-sir}).

%%% Local Variables:
%%% mode: latex
%%% TeX-master: "dc-smc.tex"
%%% End:

\subsection{Theoretical Properties}\label{sec:theory}

As \dcsmc consists of standard \smc steps combined with merging of particle populations
via resampling, it is possible (with care) to extend many of the
results from the standard, and by now well-studied, \smc setting (see \eg, \cite{DelMoral:2004}
for a comprehensive collection of theoretical results). Here,
we present two results to justify Algorithm~\ref{alg:dcsir}.
The proofs of the two propositions stated below are given in Appendix~\ref{app:proofs}.

First of all, the unbiasedness of the normalizing constant estimate
of standard \smc algorithms \citep[Prop.~7.4.1]{DelMoral:2004} is inherited by \dcsmc. 

\begin{prop}\label{prop:bias}
Provided that %we employ a balanced binary tree decomposition
%, have $\widetilde\setX_t = \emptyset$ except when $t$ is a leaf node of $T$ 
%and  
%$\gamma_{\mi{d}{h}}(d\x_{\mi{d}{h}}) \ll (\gamma_{\mi{d+1}{2h-1}} \otimes
%\gamma_{\mi{d+1}{2h}})(d(\x_{\mi{d}{h}} \setminus \tilde\x_{\mi{d}{h}})) q_{\mi{d}{h}}(d\tilde\x_{\mi{d}{h}}|\x_{\mi{d}{h}} \setminus \tilde\x_{\mi{d}{h}})$ (with
%the obvious abuse of notation) for all $(d,h) \in \mis{D-1}$,
%$\gamma_{\mi{D}{h}} \ll q_{\mi{D}{h}}$ for all $h \in \{1,\ldots,2^D\}$ 
$\gamma_t \ll \otimes_{c\in\mathcal{C}(t)} \gamma_c \otimes q_t$ for every $t \in T$
and an unbiased,
exchangeable resampling scheme is applied to every population at every
iteration, we have for any $N 
\geq 1$:
\[\mathbb{E}[\widehat{Z}_r] = Z_r = \int \gamma_r(d\x_r). \]
\end{prop}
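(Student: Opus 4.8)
The plan is to prove, by structural induction over the computational tree $T$ from the leaves towards the root, the following strengthening: for every $t\in T$ and every bounded measurable $\varphi:\setX_t\to\mathbb{R}$,
\begin{align}
  \mathbb{E}\bigl[\widehat\gamma_t^N(\varphi)\bigr] = \gamma_t(\varphi),\qquad \widehat\gamma_t^N(\varphi) \eqdef \Bigl(\prod_{c\in\C(t)}\widehat Z_c\Bigr)\frac1N\sum_{i=1}^N\w_t^i\,\varphi(\x_t^i),\quad \gamma_t(\varphi)\eqdef\int\varphi(\x_t)\,\gamma_t(\x_t)\,\ud\x_t,
\end{align}
the proposition being the special case $\varphi\equiv 1$, since $\widehat\gamma_r^N(1)=\widehat Z_r$ and $\gamma_r(1)=Z_r$. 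The base case is immediate: at a leaf, $\C(t)=\emptyset$, the particles are i.i.d.\ $\x_t^i=\ix_t^i\sim q_t$ (or $\x_t^i=\emptyset$ if $\widetilde\setX_t=\emptyset$, a trivial case), and $\w_t^i=\gamma_t(\x_t^i)/q_t(\ix_t^i)$, so $\mathbb{E}[\w_t^i\varphi(\x_t^i)]=\int(\gamma_t/q_t)\,\varphi\,q_t\,\ud\x=\gamma_t(\varphi)$, where $\gamma_t\ll q_t$ makes the ratio well defined (convention $0/0=0$) and the change of measure exact.

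For the inductive step, let $t$ be internal with children $\C(t)=(c_1,\dots,c_C)$, and let $\mathcal{G}_t$ be the $\sigma$-algebra generated by the outputs of the recursive calls $\RMC(c_1),\dots,\RMC(c_C)$; these $C$ subcomputations are mutually independent, so the random measures $\widehat\gamma_{c_1}^N,\dots,\widehat\gamma_{c_C}^N$ are independent. Conditionally on $\mathcal{G}_t$: unbiasedness together with (conditional) exchangeability of resampling, applied independently within each child population, yield $\mathbb{E}[\psi(\rx_c^i)\mid\mathcal{G}_t]=\piN_c(\psi)$ for each fixed index $i$, and $\mathbb{E}[\prod_c\psi_c(\rx_c^i)\mid\mathcal{G}_t]=\prod_c\piN_c(\psi_c)$ by independence across children --- this is exactly the per-index property stressed in the Remark following Algorithm~\ref{alg:dcsir}; and integrating out $\ix_t^i\sim q_t(\cdot\mid\rx_{c_1}^i,\dots,\rx_{c_C}^i)$, using $\gamma_t\ll\otimes_c\gamma_c\otimes q_t$,
\begin{align}
  \mathbb{E}\bigl[\w_t^i\,\varphi(\x_t^i)\,\big|\,\rx_{c_1}^i,\dots,\rx_{c_C}^i\bigr] = \frac{1}{\prod_{c\in\C(t)}\gamma_c(\rx_c^i)}\int\gamma_t\bigl(\rx_{c_1}^i,\dots,\rx_{c_C}^i,\ix\bigr)\varphi\bigl(\rx_{c_1}^i,\dots,\rx_{c_C}^i,\ix\bigr)\,\ud\ix =: g\bigl(\rx_{c_1}^i,\dots,\rx_{c_C}^i\bigr).
\end{align}
Averaging over $i$, taking the conditional expectation over the resampling given $\mathcal{G}_t$, multiplying by $\prod_c\widehat Z_c$, and using $\widehat Z_c\,\piN_c(\cdot)=\widehat\gamma_c^N(\cdot)$ gives $\mathbb{E}[\widehat\gamma_t^N(\varphi)]=\mathbb{E}\bigl[\int g\,\ud(\otimes_c\widehat\gamma_c^N)\bigr]$. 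One then peels off the children one at a time: by independence the integral equals $\widehat\gamma_{c_1}^N(\psi_1)$ with $\psi_1(\x_{c_1})=\int g(\x_{c_1},\x_{c_2},\dots)\,\otimes_{c\neq c_1}\widehat\gamma_c^N(\ud\x_c)$, which is $\gamma_{c_1}$-integrable because the factor $1/\gamma_{c_1}$ in $g$ is paired with a compensating $\gamma_{c_1}$; the inductive hypothesis then replaces $\widehat\gamma_{c_1}^N$ by $\gamma_{c_1}$, and iterating over $c_2,\dots,c_C$ gives $\mathbb{E}[\widehat\gamma_t^N(\varphi)]=\int g\,\ud(\otimes_c\gamma_c)$. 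Unfolding $g$ and using the space decomposition $\setX_t=(\otimes_c\setX_c)\times\widetilde\setX_t$ together with Fubini collapses this to $\int\gamma_t(\x_t)\varphi(\x_t)\,\ud\x_t=\gamma_t(\varphi)$.

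The conceptual content is modest; the real work, and the main obstacle, is the measure-theoretic bookkeeping of the inductive step. Three points require care. \emph{(a)} Weights and the densities $\gamma_c,q_t$ may vanish: the absolute-continuity hypotheses $\gamma_t\ll\otimes_c\gamma_c\otimes q_t$ are precisely what make the $0/0$ conventions harmless and validate the two displayed integral identities. \emph{(b)} Resampling must be exploited at the level of single output indices and independently across children --- aggregate/marginal unbiasedness does not suffice (the content of the Remark), which is where conditional exchangeability enters (and why a random permutation is needed for ordered resampling schemes). \emph{(c)} Interchanging expectation and integration while peeling off the child measures requires tracking enough integrability; the key point is that each inverse density $1/\gamma_c$ appearing in $g$ is always integrated against a measure carrying a compensating factor $\gamma_c$ (either the $\gamma_c$ in $\gamma_c(\cdot)$ after applying the hypothesis, or the $\gamma_c$-factor inside the weights defining $\widehat\gamma_c^N$), while $\varphi$ bounded and $Z_t=\int\gamma_t<\infty$ control the remainder. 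A cleaner but essentially equivalent alternative would be to exhibit $\widehat Z_r$ as an ordinary importance-sampling estimator on the extended space of all particle populations and ancestor indices in the tree, against an explicitly constructed target with normalizing constant $Z_r$; unbiasedness would then follow from the elementary unbiasedness of importance sampling, at the cost of setting up that extended target.
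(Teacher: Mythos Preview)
Your proof is correct but takes a genuinely different route from the paper. The paper follows the extended-space construction of \citet{AndrieuDH:2010}: it writes out the joint law $\widetilde q$ of all particles and ancestor indices generated during the algorithm (for a balanced binary tree), introduces an auxiliary probability $\widetilde\pi_r$ on that extended space which places $\pi_r$ on a distinguished ancestral tree indexed by $k$, and verifies directly that $d\widetilde\pi_r/d\bar q = \widehat Z_r/Z_r$; unbiasedness is then the trivial fact that a Radon--Nikod\'ym derivative integrates to one. This is precisely the alternative you sketch in your final sentence. Your primary argument instead proceeds by structural induction on $T$, propagating unbiasedness of the random measures $\widehat\gamma_t^N$ from leaves to root via conditional expectations and independence of the child subcomputations.

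Each approach buys something. The extended-space route is cleaner once the auxiliary target is written down---no test functions to thread through an induction, no integrability bookkeeping---and the identity $\widehat Z_r/Z_r = d\widetilde\pi_r/d\bar q$ is exactly what is needed for the particle MCMC validity of Section~\ref{sec:pmcmc}, which the paper explicitly wants. Your inductive argument is more elementary and self-contained (no extended target to guess), and yields the slightly stronger intermediate statement $\mathbb{E}[\widehat\gamma_t^N(\varphi)]=\gamma_t(\varphi)$ at every node. One small technical remark: as you flag in point (c), the peeled-off functions $\psi_1,\psi_2,\dots$ are not bounded, so the inductive hypothesis ``for bounded $\varphi$'' does not literally apply to them; the tidiest fix is to state the induction for nonnegative measurable $\varphi$ and use Tonelli throughout (the target case $\varphi\equiv 1$ is nonnegative, and bounded signed $\varphi$ then follows by decomposition).
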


An important consequence of Proposition~\ref{prop:bias} is that the \dcsmc algorithm
can be used to construct efficient block-sampling \mcmc kernels in the framework of
particle \mcmc \citet{AndrieuDH:2010}; see Section~\ref{sec:pmcmc}. 
Our second result shows that the particle system generated by the \dcsmc procedure is
consistent as the number of particles tends to infinity.

\begin{prop} \label{prop:consistent}
%\fredrik{I changed the notation here to agree with what is used in the proof ($\Np$ in the subscript). Ok, or do you prefer the old notation ($\Np$ in superscript)?} %ABC: looks good to me
Under regularity conditions detailed in Appendix~\ref{app:proofs:consistency},
the weighted particle system $(\x_{r,\Np}^i, \w_{r,\Np}^i)_{i=1}^\Np$ generated by $\RMC(r)$
is consistent in that for all functions %\fredrik{Don't we have additional conditions on $f$ in the proof?}
$f:\setZ\rightarrow \mathbb{R}$ satisfying the assumptions listed in Appendix~\ref{app:proofs:consistency}:
\begin{align*}
  \sum_{i=1}^{\Np} \frac{\w_{r,\Np}^i}{\sum_{j=1}^\Np \w_r^{\Np,j}}f(\x_{r,\Np}^i) &\pcv \int f(\x) \pi(\x)d\x,
&&\text{as $\Np\goesto\infty$}.
\end{align*}
\end{prop}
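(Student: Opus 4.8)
The plan is to prove consistency by induction on the tree, working from the leaves toward the root, mirroring the recursive structure of $\RMC$. The natural statement to induct on is stronger than the bare conclusion: for every node $t \in T$, the weighted particle system $(\x_{t,\Np}^i, \w_{t,\Np}^i)_{i=1}^\Np$ output by $\RMC(t)$ is consistent for $\pi_t$ (in the $\pcv$ sense, over a suitable class of test functions $f:\setX_t \to \R$), \emph{and} additionally the normalized-weight average $\frac1N\sum_i \w_{t,\Np}^i$ converges in probability to the ratio $Z_t / \prod_{c\in\C(t)} Z_c$ (i.e.\ the ``local'' normalizing constant contributed at node $t$), which is exactly the quantity needed to propagate the argument. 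The base case is a leaf node $t$ with $\C(t)=\emptyset$ and $\setX_t = \isetX_t$: here $\RMC(t)$ is plain importance sampling with proposal $q_t$ and weight $\w_t^i = \gamma_t(\ix_t^i)/q_t(\ix_t^i)$, so consistency is the textbook importance-sampling SLLN/WLLN, valid under the stated absolute-continuity condition $\gamma_t \ll q_t$ plus whatever integrability is imposed in Appendix~\ref{app:proofs:consistency} on $f$ and on the weights.

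For the inductive step at an internal node $t$ with children $\C(t) = (c_1,\dots,c_C)$, I would chain together three elementary operations, each of which preserves consistency. First, by the induction hypothesis each child population $(\x_{c},\w_{c})$ is consistent for $\pi_c$; since the $C$ child calls are run independently, the product of the empirical measures $\bigotimes_c \piN_{c}$ is consistent for the product measure $\bigotimes_c \pi_c$ on $\otimes_c \setX_c$ (a standard fact: convergence in probability of marginals of independent systems gives convergence of the product against product-form, hence general, test functions — one can reduce to tensor-product $f$ by density/approximation, or argue directly via a conditional-on-one-system argument). Second, the resampling step (Line~\ref{step:resample}): multinomial resampling of a consistent weighted system yields a consistent unweighted system — here one uses the exchangeability/unbiasedness of the resampling scheme and a conditional variance bound (conditionally on the particles, the resampled empirical mean is unbiased for $\piN_{c}(f)$ with variance $O(1/N)$), so $\frac1N\sum_i f(\rx_{c}^i) \pcv \pi_c(f)$; combining resampled lists index-by-index into tuples keeps this joint because the $C$ resamplings are conditionally independent. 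Third, the propagation–and–reweighting step (Lines~\ref{step:propagate}): drawing $\ix_t^i \sim q_t(\cdot\mid \rx_{c_1}^i,\dots,\rx_{c_C}^i)$ and forming $\x_t^i$ with weight $\w_t^i = \gamma_t(\x_t^i)\big/\big(\prod_c \gamma_c(\rx_c^i)\, q_t(\ix_t^i\mid\cdots)\big)$ is exactly one importance-sampling/weighting step against the instrumental distribution $\big(\bigotimes_c \pi_c\big)\otimes q_t$, whose Radon–Nikodym derivative with respect to $\gamma_t$ (up to the constant $\prod_c Z_c$) is well-defined precisely by the hypothesis $\gamma_t \ll \otimes_c \gamma_c \otimes q_t$ of Proposition~\ref{prop:bias}. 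A standard weighted-importance-sampling WLLN then gives both $\sum_i \w_t^i f(\x_t^i)/\sum_j \w_t^j \pcv \pi_t(f)$ and $\frac1N\sum_i \w_t^i \pcv Z_t/\prod_c Z_c$, closing the induction; applying it at $t=r$ and using $\pi_r=\pi$ yields the proposition.

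The main obstacle — and the reason ``with care'' appears in the text — is handling the non-i.i.d., triangular-array nature of the particles and the interaction between the three steps above, so that each ``standard'' WLLN actually applies. Concretely: after resampling, the tuples $(\rx_{c_1}^i,\dots,\rx_{c_C}^i)$ are exchangeable but not independent, and they depend on the entire history; the clean way to manage this is to condition on the $\sigma$-algebra $\mathcal G_t$ generated by all the child populations (and their resamplings), show the conditional empirical measure converges, control the conditional fluctuation of the weighting step by a second-moment bound, and then remove the conditioning using the induction hypothesis together with a uniform-integrability / bounded-moments argument on the weights. This forces the regularity conditions in Appendix~\ref{app:proofs:consistency}: one needs the weight functions at each node to have enough moments (so that Markov/Chebyshev gives the conditional $\pcv$ and so that products of weights up the tree stay integrable), the test-function class to be closed under the operations induced by concatenation and marginalization, and the absolute-continuity conditions to make every Radon–Nikodym derivative finite $\pi$- (resp.\ instrumental-) a.s. A secondary subtlety is the ratio estimator: since $\sum_j \w_t^j/N$ converges to a \emph{nonzero} constant $Z_t/\prod_c Z_c$ (guaranteed because $\gamma_t$ is a genuine unnormalized density and the supports are compatible), the continuous-mapping theorem turns the joint convergence of numerator and denominator into convergence of the ratio — but one must verify that limiting constant is indeed strictly positive, which is where one invokes that $\setX_t = (\otimes_c \setX_c)\times\isetX_t$ and the decomposition conditions of Section~\ref{sec:tree-structured-auxiliary} guarantee the instrumental distribution dominates (does not misalign with) $\gamma_t$.
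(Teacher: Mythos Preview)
Your inductive outline matches the paper's proof closely: induction up the tree, with each step decomposed into (i) forming a product approximation from the child populations, (ii) resampling, and (iii) one importance-sampling/mutation step, the latter two handled by invoking the stability results of Douc and Moulines (2008). Two points of comparison are worth making.

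First, the step you call ``a standard fact'' --- that independent consistent particle systems for $\pi_{t_1}$ and $\pi_{t_2}$ yield a consistent system for $\pi_{t_1}\otimes\pi_{t_2}$ against \emph{general} bounded measurable $f$ --- is precisely where the paper concentrates its technical effort. Convergence for tensor-product test functions is immediate from independence, but the extension to arbitrary $f$ in the product $\sigma$-algebra is done via two measure-theoretic lemmas (a Carath\'eodory cover argument and an Egorov-type uniform approximation by simple functions on rectangles), followed by a careful $\epsilon$--$\delta$ bound that uses the uniform boundedness of the weight function to control the contribution of the ``bad set'' where the simple-function approximation fails. Your parenthetical ``density/approximation'' is the right idea, but it is not as routine as the phrasing suggests; this is exactly the place where the paper's regularity condition (a uniform bound $C$ on the Radon--Nikod\'ym derivative, rather than the moment bounds you propose) is used.

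Second, the paper does \emph{not} carry the normalizing-constant convergence $\frac1N\sum_i \w_{t}^i \pcv Z_t/\prod_c Z_c$ in the induction hypothesis; it works purely with consistency of the normalized empirical measures and closes the induction via DM08 Theorems~1 and~3. Your stronger hypothesis is harmless but unnecessary. The paper also frames resampling as reducing the $N^2$ implicit product particles to $N$ (so that DM08 Theorem~3 applies directly with $M_N=N^2$, $\widetilde M_N=N$), rather than your equivalent ``resample each child, then combine by index'' description.
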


\newcommand{\model}{{\mathcal M}}

\subsection{Tree structured auxiliary distributions from graphical models}\label{sec:decompositions}

We now present one strategy for building tree structured auxiliary distributions from graphical models. There are other ways of constructing these auxiliary distributions,
but for concreteness we focus here on a method targeted at posterior inference
for \pgm{s}. On the other hand, the method we present here is more general than it may appear at first: in particular, although the flow of the algorithm follows a tree structure, we do not assume that the graphical models are acyclic.

To illustrate the concepts in this section, we will use two running examples: one coming from a directed \pgm, and one coming from an undirected one. We use the factor graph notation from Section~\ref{sec:factor-graphs} to introduce the features of these examples salient to the present discussion. We give a more detailed description of the two examples in Section~\ref{sec:experiments}.

\begin{example*}[Hierarchical models, continued]
Consider a situation where the data is collected according to a known hierarchical structure. For example, test results for an examination are collected by school, which belong to a known school district, which belong to a known county. 
This situation is similar to the example shown in Figure~\ref{fig:method:hierarchical_model}, but where we generalize the number of level to be an arbitrary integer, $\alpha$. This yields the factor graph shown in Figure~\ref{fig:factor-graphs}(a), where we assume for simplicity a binary structure (this is lifted in Section~\ref{sec:experiments}). The nodes in the set $V$ correspond to latent variables specific to each level of the hierarchy. For example, a variable, $x_v$, at a leaf encodes school-specific parameters from a set, $\isetX_v$, those at the second level, district-specific parameters, etc. The set of factors, $F$, contain one binary factor, $\phi(x_v, x_{v'})$, between each internal node, $v'$, and its parent, $v$. There is also one factor, $u_r$, at the root to encode a top level prior.
\end{example*}

\begin{figure}[tp]
\begin{center}
  \includegraphics[width=3.5in]{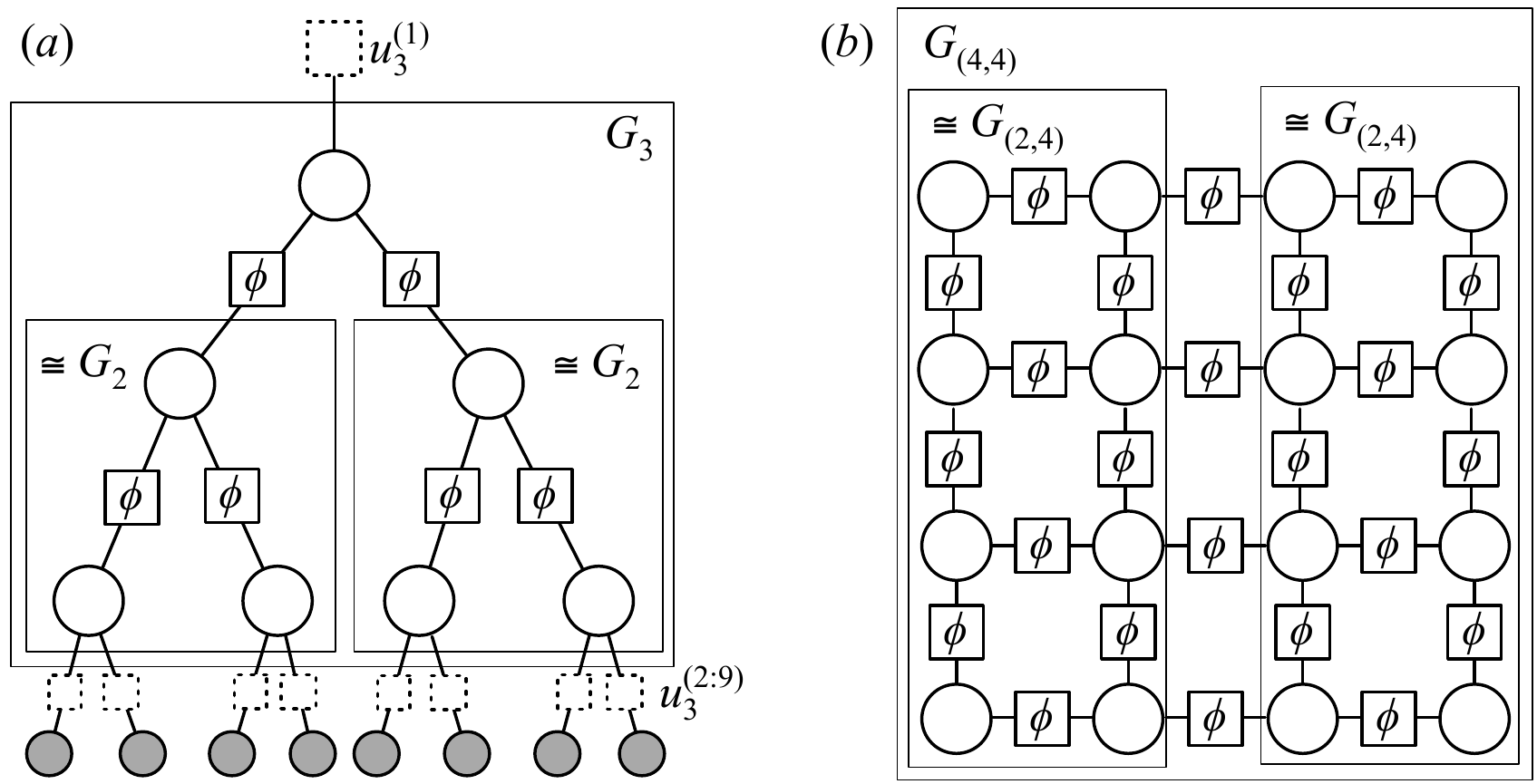}
  \caption{Examples of factor graph families, and self-similarities among them. (a) A hierarchical model for $\alpha = 3$. The unaries $u_\alpha$ consist in a product of 9 individual unary factors: one for the root, and 8 for the leaves (note that the binary factors connected to the 8 observed leaves can be considered as unary factors since one of their arguments is fixed and known for the purpose of posterior inference). Hence, $|V_3| = 7$, $|F_3| = 6$, $k_\alpha = 2$, $\alpha_1 = \alpha_2 = 2$. (b) A rectangular lattice model (\eg, an Ising model) for $\alpha = (4,4)$. Here, $k_\alpha = 2$, $\alpha_1 = \alpha_2 = (2,4)$. }
  \label{fig:factor-graphs}
\end{center}
\end{figure}

\begin{example*}[Lattice models]
Two-dimensional regular lattice models such as the Ising model are frequently used in spatial statistics and computer vision to encourage nearby locations in a spatial latent field to take on similar values; see Figure~\ref{fig:factor-graphs}(b). We denote the width of the grid by $\alpha^{(1)}$ and the height by $\alpha^{(2)}$. The cardinality of $V$ is thus $\alpha^{(1)} \alpha^{(2)}$. The bivariate factors connect variables with Manhattan distance of one to each other.
\end{example*}

Note that the previous two examples actually describe a collection of factor graphs indexed by $\alpha$: in the hierarchical model example, $\alpha$ in a positive integer encoding the number of hierarchical levels; in the lattice model example, $\alpha$ is a pair of positive integers, $\alpha = (\alpha^{(1)}, \alpha^{(2)})$ encoding the  width and height of the grid.
To formalize this idea, we define a \emph{model family} as a collection of factor graphs: $\model = \{G_\alpha = (V_\alpha, F_\alpha)\}$, where $V_\alpha \neq \emptyset$. Since we would like the concept of model family to encode the model structure rather than some observation-specific configurations, it will be useful in the following to assume that the sets $F_\alpha$ only contain factors linking at least two nodes. Given $G_\alpha$ and a dataset, it is trivial to add back the unary factors, denoted $u_\alpha$. We assume that for all $\alpha$, adding these unary factors to the product of the factors in $F_\alpha$ yields a model of finite normalization, $\int u_\alpha \prod_{\phi\in F_\alpha} \phi \ud \mu < \infty$.

To build a tree of auxiliary distributions, we rely on a notion of self-similarity. We start with an illustration of what we mean by self-similarity in the two running examples.
\begin{example*}[Hierarchical models, continued]
Consider the factor graph $G_3 = (V_3, F_3)$ corresponding to a three-level hierarchical model. If we exclude the unary factor at the root, we see that $G_3$ contains $G_2$ as a subgraph (see Figure~\ref{fig:factor-graphs}(a)). In fact, $G_3$ contains two distinct copies of the graph $G_2$.
\end{example*}
\begin{example*}[Lattice models, continued]
Consider the factor graph $G_{(4,4)}$ corresponding to a $4$-by-$4$ Ising model (Figure~\ref{fig:factor-graphs}(b)). The graph $G_{(4,4)}$ contains the graph $G_{(2,4)}$ as a subgraph. Again, $G_{(4,4)}$ contains in fact two distinct copies of the subgraph.
\end{example*}
Formally, we say that a model family is \emph{self-similar}, if given any $G_\alpha \in \model$ with $|V_\alpha| > 1$, we can find $\alpha_1, \alpha_2, \dots, \alpha_{k_\alpha}, k_\alpha > 1$ such that the disjoint union $\sqcup_i G_{\alpha_i}$ can be \emph{embedded} in $G_\alpha$. By embedding, we mean that there is a one-to-one graph homomorphism from $\sqcup_i G_{\alpha_i}$ into $G_\alpha$. This graph homomorphism should respect the labels of the nodes and edge (i.e. differentiates variable, factors, and the various types of factors).
\begin{example*}[Lattice models, continued]
Therefore, if $|V_\alpha| > 1$, then at least one of $\alpha^{(1)}$ or $\alpha^{(2)}$ is greater than one, let us say the first one without loss of generality. As shown in Figure~\ref{fig:factor-graphs}(b), we can therefore pick $k_\alpha = 2$ and $\alpha_1 = (\lfloor\alpha^{(1)}/2\rfloor, \alpha^{(2)})$, $\alpha_2 = (\lceil\alpha^{(1)}/2\rceil, \alpha^{(2)})$. 
\end{example*}
Given a member $\alpha_0$ of a self-similar model family, there is a natural way to construct a tree decomposition
of auxiliary distributions. First, we recursively construct $T$ from the self-similar model indices: we set $r = \alpha_0$, and given any $t = \alpha \in T$, we set  
$\C(t)\subset\T$ to $\alpha_1, \alpha_2, \dots, \alpha_{k_\alpha}$.\footnote{This recursive process will yield a finite set $T$: since $V_\alpha$ is assumed to be non-empty, it suffices to show that $|V_{\alpha_i}| < |V_{\alpha}|$ for all $i \in \{1, \dots, k_\alpha\}$ whenever $|V_\alpha| > 1$. But since $k_\alpha > 1$, and that the disjoint union $\sqcup_i G_{\alpha_i}$ can be embedded in $G_\alpha$, it follows that $|V_{\alpha}| \geq |V_{\alpha_i}| + \sum_{j \neq i} |V_{\alpha_j}|$. Since $|V_{\alpha_j}| > 0$, the conclusion follows.} Second, given an index $t = \alpha \in T$, we set $\pi_t$ to $u_\alpha \prod_{\phi\in F_\alpha} \phi$. Note that by the embedding property, this choice is guaranteed to satisfy Equation~(\ref{eq:space-decomposition}), where $\setX_{c_i}$ corresponds to the range of the random vector defined from the indices in $V_{\alpha_i}$

%%% Local Variables:
%%% mode: latex
%%% TeX-master: "dc-smc.tex"
%%% End:

% SECTION
\section{Extensions}\label{sec:extensions}

Algorithm~\ref{alg:dcsir} is essentially an SIR algorithm and variables are
not rejuvenated after their first sampling. Inevitably, as in particle
filtering, this will lead to degeneracy as repeated resampling steps reduce
the number of unique particles. 
Techniques employed to ameliorate this problem in the
particle filtering literature could be used---fixed lag techniques \citep{Kitagawa:1996} might
make sense in some settings, as could incorporating MCMC moves
\citep{GilksB:2001}.
In this section we present several extensions to address the degeneracy problem
more directly, and we also discuss adaptive schemes for improving the computational efficiency
of the proposed method.
The extensions presented here comprise fundamental components of the general strategy
introduced in this paper, and may be required to obtain good performance in
challenging settings. 

\subsection{Merging subpopulations via mixture sampling}\label{sec:extensions:mixturesampling}

The resampling in Step~\ref{step:resample} of the $\RMC$ procedure, which combines subpopulations to target a new distribution on a larger space,
is critical. The independent multinomial resampling of child populations in the basic
\dcsmc[\sir] procedure corresponds to sampling $\Np$ times with replacement
from the product measure \eqref{eq:dcsir-product-measure}.
The low computational cost of this approach is appealing, but unfortunately it can lead to high variance when the product
 $\prod_{c\in\C(t)} \pi_c(\x_c)$ differs substantially from the corresponding
marginal of $\pi_t$.

An alternative approach, akin to the mixture proposal approach \citep{CarpenterCF:1999} or
the auxiliary particle filter \citep{PittS:1999}, is described below.
%might be particularly useful in the present setting, as it allows us to 
The idea is to exploit the fact that the product
measure has mass upon $\Np^{|\C(t)|}$ points, in order to capture the dependencies among
the variables in the target distribution $\pi_t(\x_t)$. 
Let $\check\pi_{t}(\x_{c_1},\ldots,\x_{c_C})$ be some distribution which
incorporates this dependency (in the
simplest case we might take  $\check\pi_{t}(\x_{c_1},\ldots,\x_{c_C}) \approx
\int\pi_{t}(\x_{c_1},\ldots,\x_{c_C},\ix_t) \rmd\ix_t$ or, when 
$\widetilde{\setX}_t = \emptyset$, $\check\pi_{t} \equiv \pi_{t}$; see below for an
alternative). We can then replace  
Step~\ref{step:resample} of Algorithm~\ref{alg:dcsir} with simulating $\{(\rx_{c_1}^i, \dots,
\rx_{c_C}^i)\}_{i=1}^{\Np}$ from
\begin{align}
  \label{eq:extensions:resampling}
  Q_{t}(\rmd\x_{c_1}, \dots, \rmd\x_{c_C} ) &:=
  \sum_{i_1=1}^{\Np} \ldots \sum_{i_C=1}^{\Np}
  \frac{v_{t}(i_1,\ldots,i_C)  \delta_{(\x_{c_1}^{i_1},\ldots,\x_{c_C}^{i_C})} ( \rmd\x_{c_1}, \dots, \rmd\x_{c_C} )}{
    \sum_{j_1=1}^{\Np} \ldots \sum_{j_C=1}^{\Np}  v_{t}(j_1,\ldots,j_C)
  },\\
  v_{t}(i_1,\ldots,i_C) &:=  \left( \prod_{c\in\C(t)} \w_c^{i_c}
  \right)\check\pi_t(\x_{c_1}^{i_1},\ldots,\x_{c_C}^{i_C}) \bigg/ \prod_{c\in\C(t)}
    \pi_{c} (\x_c^{i_c}), \notag
\end{align}
with the weights of Step~\ref{step:weighting} computed using
$ \w_t^i \propto \pi_t(\x_t^i) / \left[ \check\pi_{t}(\rx_{c_1}^{i_1},\ldots,\rx_{c_C}^{i_C}) q_t(\ix_t^i \mid \rx_{c_1}^{i_1}, \dots, \rx_{c_C}^{i_C}) \right]$.
Naturally, if we take $\check\pi_{t}(\x_{c_1},\ldots,\x_{c_C}) = \prod_{c\in\C(t)} \pi_c(\x_c)$ we recover the
basic approach discussed in Section~\ref{sec:dc-sir}.

Clearly, the computational cost of simulating from $Q_t$ will be
$O(\Np^{|\mathcal{C}(t)|})$. However, we envisage that both $|\mathcal{C}(t)|$
and the number of coalescence events (\ie combinations of subpopulations via
this step) are sufficiently small that this is not a problem in many cases.
Should it be problematic, computationally efficient use of products of mixture
distributions is possible employing the strategy of \citet{BriersDS:2005}.
At the cost of introducing a small and controllable bias, techniques borrowed from $N$-body problems could also be
used when dealing with simple local interactions
\citep{GrayM:2001}.
Furthermore, if the mixture sampling
approach is employed it significantly mitigates the
impact of resampling, and it is possible to reduce the branching factor
by introducing additional (dummy) internal nodes in $T$.
For example, by introducing additional nodes in order to obtain a binary tree
(see Section~\ref{sec:expts:mrf}), the merging of the child populations
will be done by coalescing pairs,
then pairs of pairs, \etc., gradually taking the dependencies between the variables into account.

\subsection{SMC samplers and tempering within \dcsmc} \label{sec:tempering}

As discussed in Section~\ref{sec:back-smc}, a common strategy when simulating from some
complicated distribution using \smc is to construct a synthetic sequence of distributions \eqref{eq:auxiliary-pi} which
moves from something tractable to the target distribution of interest \citep{DelMoralDJ:2006}.
The \smc proposals can then, for instance, be chosen as MCMC transition kernels---this is the approach that
we detail below for clarity.

Step~\ref{step:propagate} of Algorithm~\ref{alg:dcsir} corresponds essentially to a (sequential)
importance sampling step.
Using the notation introduced in the previous section, we obtain after the resampling/mixture sampling
step an unweighted sample $\{(\rx_{c_1}^i, \dots, \rx_{c_C}^i)\}_{i=1}^{\Np}$
targeting $\check\pi_t$, which is extended by sampling
from $q_t(\ix_t \mid \x_{c_1}, \dots, \x_{c_C})$, and then re-weighted
to target $\pi_t(\x_t)$. We can straightforwardly replace this
with several \smc sampler iterations, targeting
distributions which bridge from
$\pi_{t,0}(\x_t) = \check\pi_t(\x_{c_1}, \dots, \x_{c_C}) q_t(\widetilde \x_t \mid \x_{c_1}, \dots, \x_{c_C})$
to $\pi_{t,\nn_t}(\x_t) = \pi_t(\x_t)$,
typically by following a geometric path $\pi_{t,\jj}
\propto \pi_{t,0}^{1-\alpha_\jj} \pi_{t,\nn_t}^{\alpha_\jj}$ with $0 < \alpha_1 < \ldots < \alpha_{\nn_t} = 1$.
Step~\ref{step:propagate} of Algorithm~\ref{alg:dcsir} is then replaced by:
\begin{enumerate}[$2^\prime$.]
\item\begin{enumerate}
  \item For $i=1$ to $\Np$, simulate $\ix_t^i \sim q_t(\cdot \mid \rx_{c_1}^i, \dots, \rx_{c_C}^i)$.
  \item For $i=1$ to $\Np$, set $\x_{t,0}^i = (\rx_{c_1}^i, \dots, \rx_{c_C}^i, \ix_t^i)$ and $\w_{t,0}^i = 1$.
  \item For SMC sampler iteration $\jj=1$ to $\nn_t$:
    \begin{enumerate}
    \item For $i=1$ to $\Np$, compute
      $ \w_{t,\jj}^i = \w_{t,\jj-1}^i \gamma_{t,\jj}(\x_{t,\jj-1}^i) / \gamma_{t,\jj-1}(\x_{t,\jj-1}^i) $.
    \item Optionally, resample $\{ \x_{t,\jj-1}^i, \w_{t,\jj}^i \}_{i=1}^\Np$ and override the notation $\{ \x_{t,\jj-1}^i, \w_{t,\jj}^i \}_{i=1}^\Np$
      to refer to the resampled particle system.
    \item For $i=1$ to $\Np$, draw
      \( \x_{t,\jj}^i \sim K_{t,\jj}(\x_{t,\jj-1}^i ,\cdot) \) using a $\pi_{t,\jj}$-reversible Markov kernel $K_{t,\jj}$.
    \end{enumerate}
  \item Set $\x_t^i = \x_{t,\nn_t}^i$ and $\w_t^i = \w_{t,\nn_t}^i$.
  \end{enumerate}
\end{enumerate}
The computation of normalizing constant estimates has been omitted
for brevity, but follows by standard arguments (the complete algorithm is provided in Appendix~\ref{app:algo}).

We believe that the mixture sampling approach \eqref{eq:extensions:resampling}
can be particularly useful when combined with \smc tempering as described above.
Indeed, mixture sampling can be used to enable efficient initialization of each (node-specific) \smc sampler
by choosing,
for $\alpha_\star \in [0,1]$,
\begin{align}
  \label{eq:extensions:alphastar}
  \check\pi_t(\x_{x_1},\ldots,\x_{c_C}) \propto
  \left[ \prod_{c\in\C(t)} \pi_{c} (\x_c)  \right]^{1-\alpha\star} \left[\int
    \pi_{t,n_t} (\x_{c_1},\ldots,\x_{c_C},\tilde{\x}_t) d\tilde{\x}_t\right]^{\alpha_{\star}}.
\end{align}
That is, we exploit the fact that the distribution \eqref{eq:extensions:resampling} has support
on $N^{|\C(t)|}$ points to warm-start the annealing procedure at a non-zero value of the annealing parameter $\alpha_\star$.
In practice, this has the effect that we can typically use fewer temperatures $n_t$.
In particular, if simulating from the MCMC kernels $K_{t,j}$ is computationally costly,
requiring fewer samples from these kernels can
compensate for the $\Ordo(\Np^{|\C(t)|})$ computational cost associated with \eqref{eq:extensions:resampling}.

% The combination of tempering distributions in this way within this algorithm
% provides an algorithm class which allows us to bridge two standard strategies
% for employing \smc within the field of Bayesian inference: the trivial case in
% which the full graph is included from the outset corresponds to the usual
% tempering approach and the case in which everything except the data is
% included from the outset and these observed leaves are gradually included
% corresponds to \cite{Chopin:2002}.
% Combining data tempering and the approach proposed here is
% straightforward and might be advantageous when likelihood evaluation is costly.
% \fredrik{Is this type of bridging really what we do? Both of the approaches mentioned above include all \emph{latent}
% variables from the start, and thus only employ a single particle population. }

\subsection{Particle MCMC} \label{sec:pmcmc}

The seminal paper by \cite{AndrieuDH:2010} demonstrated that \smc algorithms
 can be used to produce approximations of idealized 
block-sampling proposals within MCMC algorithms.
By interpreting these particle MCMC algorithms as standard
algorithms on an extended space, incorporating all of the variables sampled
during the running of these algorithms, they can be shown to be exact, in the sense that the
apparent approximation does not change the
invariant distribution of the resulting MCMC kernel. Proposition~\ref{prop:bias}, and in particular the construction used
in its proof, demonstrates how the class of \dcsmc algorithms can be incorporated within
the particle MCMC framework.
Such techniques are now essentially standard, and we do not dwell on this approach here.

\subsection{Adaptation} \label{sec:adaptation}

Adaptive SMC algorithms have been the focus of much attention in recent years.
\citet{DelMoralDJ:2012} provides the first formal validation of algorithms in
which resampling is conducted only sometimes according to the value of some
random quantity obtained from the algorithm itself.
We advocate the use of low variance resampling algorithms \citep[e.g.]{DoucCM:2005} to be
applied adaptively.
Other adaptation is possible within SMC algorithms. Two approaches
are analyzed formally by \citet{BeskosJKT:2014}: the adaptation of the
parameters of the MCMC kernels employed (step $2^\prime$(c)iii.) and of the
number and location of distributions employed within tempering algorithms,
\ie, $\nn_t$ and $\alpha_1,\ldots,\alpha_{\nn_t}$; see \eg, \cite{ZhouJA:2015} for one
approach to this.

Adaptation is especially appealing within \dcsmc: beyond the usual advantages
it allows for the concentration of computational effort on the more
challenging parts of the sampling process. Using adaptation will lead to more
intermediate distributions for the subproblems (\ie, the steps of the \dcsmc algorithm)
for which the starting and ending distributions are more different. 
Furthermore, it is also possible to adapt the parameter $\alpha_\star$~in~\eqref{eq:extensions:alphastar}---that is,
the starting value for the annealing process---based, \eg, on the effective sample size of the $\Np^{|\C(t)|}$
particles comprising \eqref{eq:extensions:resampling}.
In simulations (see Section~\ref{sec:expts:mrf}) we have found that the effect of such adaptation can result in $\alpha_\star = 1$
for many of the ``simple'' subproblems, effectively removing the use of tempering when this is not needed
and significantly reducing the total number of MCMC simulations.
%\adam{We need to flesh this out and to tell people exactly what we do / recommend.}

As a final remark, we have assumed throughout that all particle populations are
of size~$\Np$, but this is not necessary. Intuitively, fewer particles are required
to represent simpler low-dimensional distributions than to represent more
complex distributions. Manually or adaptively adjusting the number of
particles used within different steps of the algorithm remains a direction for
future investigation.

%%% Local Variables:
%%% mode: latex
%%% TeX-master: "dc-smc.tex"
%%% End:

% SECTION
\section{Experiments}\label{sec:experiments}

\subsection{Markov Random Field}\label{sec:expts:mrf}
One model class for which the \dcsmc algorithm can potentially be useful are Markov random fields (\mrf{s}).
To illustrate this, we consider the well-known square-lattice Ising model: each lattice site is associated with
a binary spin variable $x_k \in \{-1,1\}$ and the configuration probability is given by
$p(\z) \propto e^{-\beta E(\z)}$, where $\beta \geq 0$ is the inverse temperature and
$E(\z) = -\sum_{(k,\ell) \in \mathcal{E}} x_kx_{\ell}$ is the energy of the system.
Here, $\mathcal{E}$ denotes the edge set for the graphical model which we
assume correspond to nearest-neighbour interactions with periodic boundary
conditions, see Figure~\ref{fig:ising} (rightmost figure).

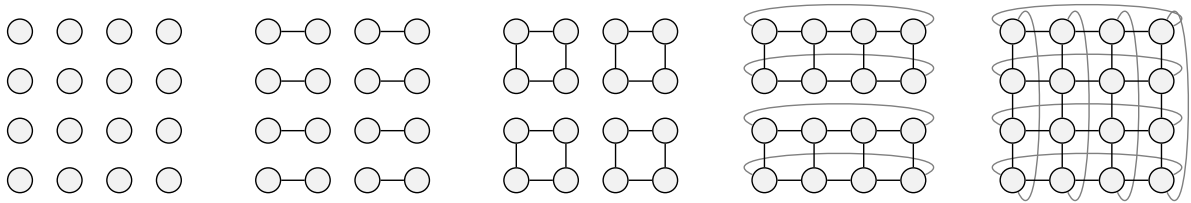
\begin{figure}[ptb]
  \centering
  %%%
\tikzstyle{edge} = [-,thick]
\tikzstyle{thinedge} = [-,thick,color=black!50]
%\tikzstyle{arrw} = [very thick,shorten <=2pt,shorten >=2pt]
\tikzstyle{var} = [draw,circle,thick,inner sep=0,minimum width=0.5cm,fill=black!5]
\tikzstyle{angles} = [out=25,in=155] % (0,180)+-ang
\tikzstyle{angles2} = [out=65,in=295] % (90,270)-+ang
%\tikzstyle{varcond} = [draw,circle,inner sep=0,minimum width=0.5cm,fill=black]
\resizebox{\linewidth}{!}{ % Rescale to line width
  \begin{tikzpicture}[>=stealth',node distance=0.6cm]
    \begin{scope}
      % Draw x-nodes and observations
      \foreach \x in {0,1,2,3} {
        \foreach \y in {0,1,2,3} {
          \node at (\x,\y) (x\x\y) [var] {};
        }
      }
    \end{scope}
    \begin{scope}
      % Draw x-nodes and observations
      \foreach \x in {0,1,2,3} {
        \foreach \y in {0,1,2,3} {
          \node at (5+\x,\y) (x\x\y) [var] {};
        }
      }
      % Draw horizontal edges
      \foreach \x in {0,2} {
        \pgfmathtruncatemacro\xend{\x+1}
        \foreach \y in {0,1,2,3} {
          \draw[edge] (x\x\y) -- (x\xend\y) {};
        }
      }
    \end{scope}
    \begin{scope}
      % Draw x-nodes and observations
      \foreach \x in {0,1,2,3} {
        \foreach \y in {0,1,2,3} {
          \node at (10+\x,\y) (x\x\y) [var] {};
        }
      }
      % Draw horizontal edges
      \foreach \x in {0,2} {
        \pgfmathtruncatemacro\xend{\x+1}
        \foreach \y in {0,1,2,3} {
          \draw[edge] (x\x\y) -- (x\xend\y) {};
        }
      }
      % Draw vertical edges
      \foreach \x in {0,1,2,3} {
        \foreach \y in {0,2} {
          \pgfmathtruncatemacro\yend{\y+1}
          \draw[edge] (x\x\y) -- (x\x\yend) {};
        }
      }
    \end{scope}
    \begin{scope}
      % Draw x-nodes and observations
      \foreach \x in {0,1,2,3} {
        \foreach \y in {0,1,2,3} {
          \node at (15+\x,\y) (x\x\y) [var] {};
        }
      }
      % Draw horizontal edge-to-edge
      \foreach \y in {0,1,2,3} {
        \draw[thinedge] (x3\y) to [angles] (x0\y) {};
      }
      % Draw horizontal edges
      \foreach \x in {0,1,2} {
        \pgfmathtruncatemacro\xend{\x+1}
        \foreach \y in {0,1,2,3} {
          \draw[edge] (x\x\y) -- (x\xend\y) {};
        }
      }
      % Draw vertical edges
      \foreach \x in {0,1,2,3} {
        \foreach \y in {0,2} {
          \pgfmathtruncatemacro\yend{\y+1}
          \draw[edge] (x\x\y) -- (x\x\yend) {};
        }
      }
    \end{scope}
    \begin{scope}
      % Draw x-nodes and observations
      \foreach \x in {0,1,2,3} {
        \foreach \y in {0,1,2,3} {
          \node at (20+\x,\y) (x\x\y) [var] {};
        }
      }
      % Draw horizontal edge-to-edge
      \foreach \y in {0,1,2,3} {
        \draw[thinedge] (x3\y) to [angles] (x0\y) {};
      }
      % Draw vertical edge-to-edge
      \foreach \x in {0,1,2,3} {
        \draw[thinedge] (x\x3) to [angles2] (x\x0) {};
      }
      % Draw horizontal edges
      \foreach \x in {0,1,2} {
        \pgfmathtruncatemacro\xend{\x+1}
        \foreach \y in {0,1,2,3} {
          \draw[edge] (x\x\y) -- (x\xend\y) {};
        }
      }
      % Draw vertical edges
      \foreach \x in {0,1,2,3} {
        \foreach \y in {0,1,2} {
          \pgfmathtruncatemacro\yend{\y+1}
          \draw[edge] (x\x\y) -- (x\x\yend) {};
        }
      }
    \end{scope}
  \end{tikzpicture}
}%%
  \vspace{-1.2\baselineskip} %%%%% UGLY FIX, I DON'T KNOW WHY THERE IS SO MUCH WHITE SPACE BELOW THE TIKZ FIGURE
  \caption{The disconnected components correspond to the groups of variables that are targeted by the
    different populations of the \dcsmc algorithm. At the final iteration, corresponding to the rightmost figure, we recover the original, connected model.}
  \label{fig:ising}
\end{figure}

Let the lattice size be $M\times M$, with $M$ being a power of~2 for simplicity.
To construct the computational tree $T$ we make use of the strategy of Section~\ref{sec:decompositions}.
That is, we start by dividing the lattice into two halves, removing all the edges between them.
We then continue recursively, splitting each sub-model in two, until we obtain
a collection of $M^2$ disconnected nodes; see Figure~\ref{fig:ising}.
This decomposition of the model defines a binary tree $T$, on which the \dcsmc algorithm operates.
At the leaves we initialize $M^2$ independent particle populations by sampling uniformly on $\{-1, 1\}$. These
populations are then resampled, merged, and reweighted as we proceed up the tree, successively reintroducing the ``missing'' edges of the model
(note that $\isetX_t = \emptyset$ for all non-leaf nodes $t$ in this example).
This defines the basic \dcsmc[\sir] procedure for the \mrf. We also consider three extensions of this procedure:
\begin{description}
\item[\dcsmc (mix)] uses the mixture sampling strategy described in Section~\ref{sec:extensions:mixturesampling}: the edges
  connecting any two sub-graphs are introduced before the corresponding sub-populations are merged.
\item[\dcsmc (ann)] uses the tempering method discussed in Section~\ref{sec:tempering}: when the edges connecting two sub-graphs are
reintroduced this is done gradually according to an annealing schedule to avoid severe particle depletion at the later stages of the algorithm.
\item[\dcsmc (mix+ann)] uses both mixture sampling and tempering.
\end{description}
For the annealed methods, we use single-flip Metropolis-Hastings kernels. The annealing schedules are set adaptively based on the conditional ESS (CESS)
criterion of \citet{ZhouJA:2015}, with CESS threshold 0.995. For \dcsmc (mix+ann) we warm-start each annealing process by selecting $\alpha_\star$ in
\eqref{eq:extensions:alphastar} based on the CESS (threshold 0.95) in each of the marginals of $\check \pi_t(\x_{c_1}, \x_{c_2})$.
We also compare these methods with, \emph{(i)} a standard SMC sampler with adaptive annealing \cite{DelMoralDJ:2006}, and \emph{(ii)} a single-flip Metropolis-Hastings sampler. All methods were implemented in Matlab 8.0.

We consider a grid of size $64 \times 64$ with $\beta=0.4407$ (the critical temperature).
We ran the methods listed above with $\Np = 2^6$ to $2^{11}$ particles,
with the exception of \dcsmc[\sir] which got $2^{10}$ to $2^{15}$ particles
to more closely match its computational cost with the others'. % (with more than $2^{15}$ particles we ran into memory limitations).
The single-flip MH sampler was run for $2^{14}$ MCMC iterations (each iteration being one complete sweep), with the first $2^{10}$ iterations
discarded as burn-in.
We ran each method 50 times and considered the estimates of
\emph{(i)} the normalising constant $Z$ and \emph{(ii)} the expected value of the energy $\E[E(\x)]$.
The results are given in Figure~\ref{fig:expts:ising-E}.
%Figures~\ref{fig:expts:ising-logNC} and \ref{fig:expts:ising-E}, respectively.

% From these results we can draw some conclusions:
% \begin{itemize}
% \item \dcsmc \emph{without} annealing yields unsatisfactory results for this model.
% \item \dcsmc \emph{with} annealing gives the overall best performance, significantly outperforming both standard \smc
% and single flip MH sampling for the same computational time.
% \item \dcsmc (ann) and \dcsmc (mix+ann) give similar results.
% \end{itemize}

\begin{figure}[tb]
  \centering
%  \includegraphics[width=0.8\columnwidth]{ising-logNC}%
%  \caption{Box-plots (min, max, and inter-quartile) of estimates of $\log Z$ over 50 runs of each sampler (excluding single flip MH which does not readily
%      provide an estimate of $\log Z$). The boxes, as plotted from left to right, correspond to increasing number of particles $\Np$.}
%      %The overlaid axes is a zoom-in on the dashed region.}
%  \label{fig:expts:ising-logNC}
% \end{figure}%
% \begin{figure}[ptb]
%  \centering
%  \vspace{\baselineskip}
%
  \includegraphics[width=0.8\columnwidth]{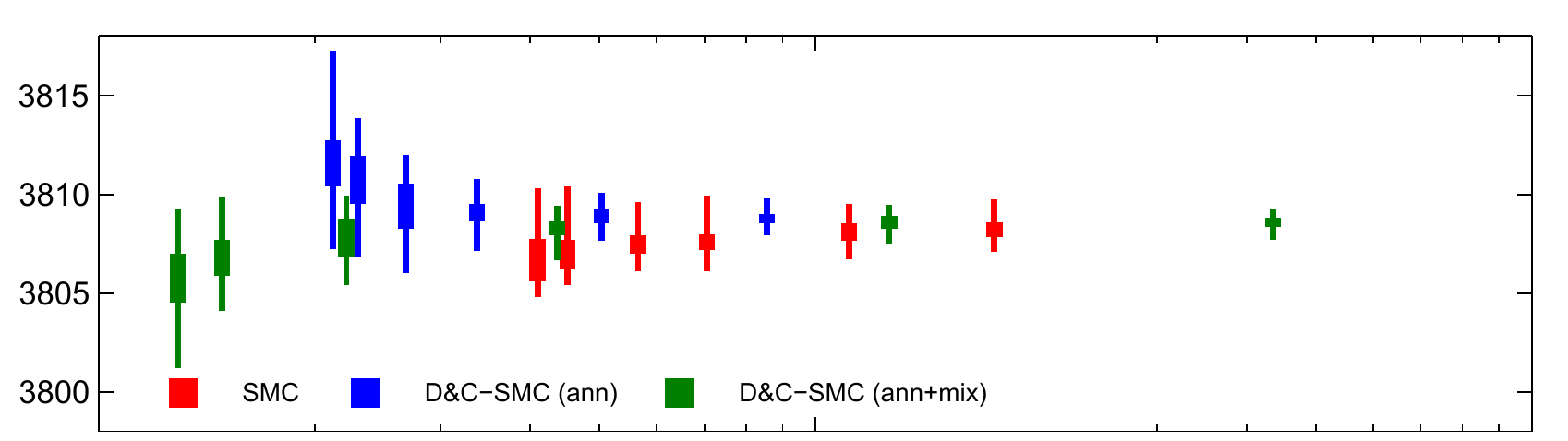}\\
  \includegraphics[width=0.8\columnwidth]{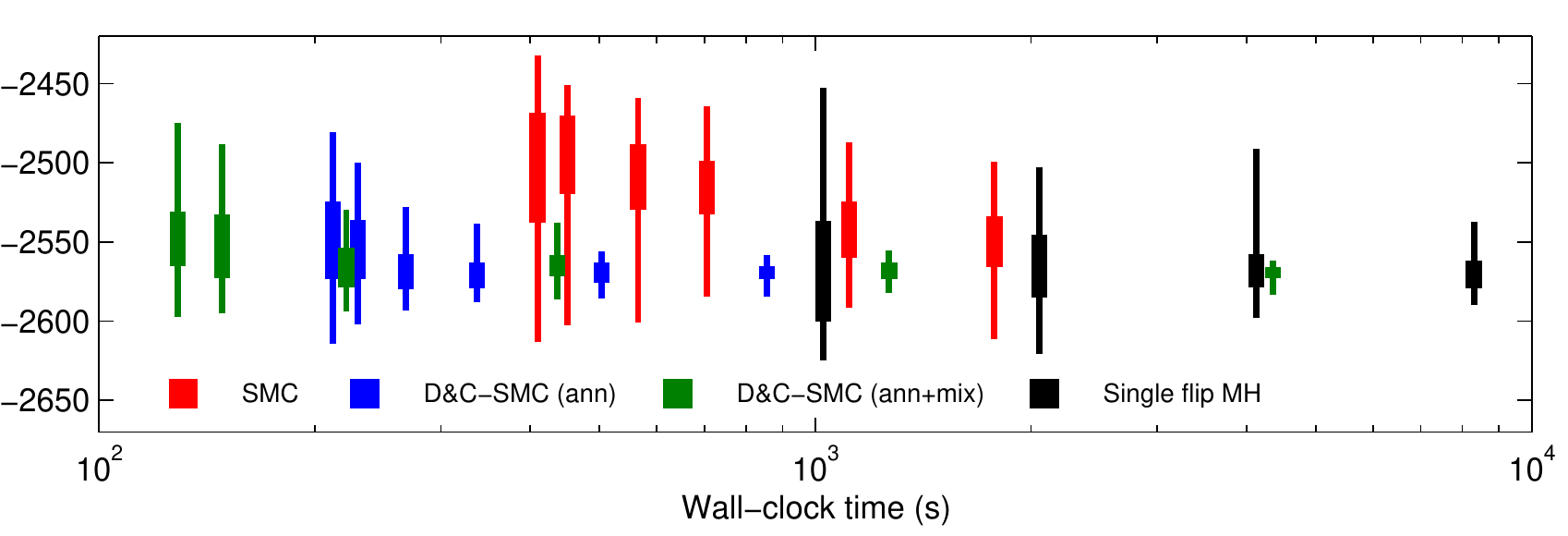}%
  \caption{Box-plots (min, max, and inter-quartile) of estimates of $\log Z$ (top) and $\E[E(\x)]$ (bottom) over 50 runs of each sampler
    (excluding single flip MH in the top panel since it does not readily  provide an estimate of $\log Z$).
    The boxes, as plotted from left to right, correspond to increasing number of particles $\Np$ (or number of MCMC iterations for single flip MH).}
  % The overlaid axes is a zoom-in on the dashed region.}
  \label{fig:expts:ising-E}
\end{figure}

\dcsmc[\sir] and \dcsmc~(mix) gave inferior results to \dcsmc~(ann) and \dcsmc~(ann+mix) for this model and
have therefore been excluded (the results for all methods are given in Appendix~\ref{app:mrf}).
% the results for these methods
%have therefore been excluded to avoid cluttering the plots (the results are given in Appendix~\ref{app:ising}).
Among the remaining methods, \dcsmc (ann) and \dcsmc (mix+ann) give the overall best performance, significantly outperforming both standard \smc
and single flip MH sampling for the same computational time. 

For the two \dcsmc samplers---\dcsmc (ann) and \dcsmc (mix+ann)---the performance is comparable.
%
%As for the last point,
Whether or not it is worthwhile to make use of the mixture sampling strategy (Section~\ref{sec:extensions:mixturesampling})
is likely to be highly problem dependent. The benefit of using mixture sampling is that it can result in that fewer
annealing steps need to be taken. Indeed, for the simulations presented above, the \smc sampler
used on average (over all different settings and runs) 685 MCMC updates for each site. The corresponding
numbers for \dcsmc (ann) and \dcsmc (mix+ann) were 334 and 176, respectively. That is, for this example mixture sampling
essentially halves the number MCMC iterations that are taken compared to \dcsmc (ann) (which in turn uses only half the number of MCMC iterations
compared to standard \smc). Hence, for models where simulation from the MCMC kernel is computationally expensive it can be worthwhile to
use mixture sampling, even though its intrinsic computational cost is $\Ordo(\Np^2)$.

The fact that mixture sampling automatically results in more computational effort being spent on the most difficult subproblems
can be further illustrated by considering the distribution of values of the parameter $\alpha_\star$ in \eqref{eq:extensions:alphastar}.
Recall that $\alpha_\star \in [0,1]$ is the value of the annealing parameter at which the annealing process is warm-started.
In Figure~\ref{fig:expts:ising-beta-star} we show the distribution of $\alpha_\star$ for \dcsmc (mix+ann) with $\Np = 2\thinspace048$ particles
(similar distributions were obtained for the other settings as well), at different levels at the computational tree~$T$.
Due to the way $T$ is constructed, the number of edges that are ``added to the model'' at the merge steps increases
as we move upward in $T$. Indeed, for this model the depth of $T$ is $2\log_2(64) + 1 = 13$ and the number of edges
that are added during the merge-steps of the 12 non-leaf levels are: 1, 2, 2, 4, 4, 8, 8, 16, 16, 32, 64, 128. For the first five levels of $T$, 
we obtained $\alpha_\star = 1$ for all nodes, meaning that no annealing was performed during these steps of the algorithm (these levels are excluded from the figure).
For the subsequent levels we obtained values of $\alpha_\star$ less than 1, as can be seen in Figure~\ref{fig:expts:ising-beta-star},
but we are still able to warm-start the annealing at a non-zero value, effectively reducing the number of annealing steps that needs to be taken.

\begin{figure}
  \centering
  \includegraphics[width=0.5\columnwidth]{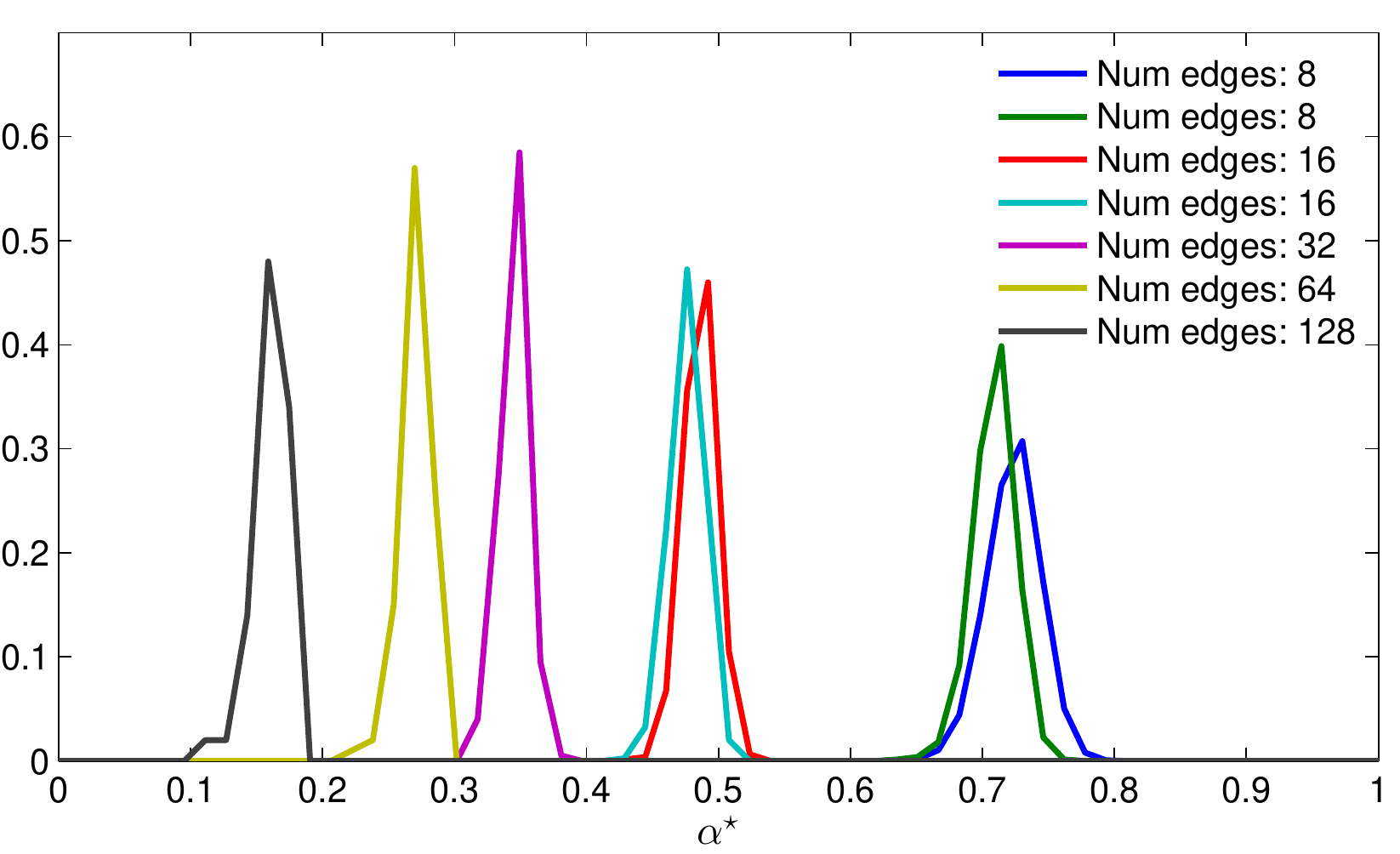}%
  \caption{Distributions of $\alpha_\star$ for the merge-steps at the 7 top-most levels of $T$ (computed for all nodes and over all 50 runs) for
    \dcsmc (mix+ann) with 2\thinspace048 particles.}
  \label{fig:expts:ising-beta-star}
\end{figure}

In Appendix~\ref{app:mrf} we report additional numerical results for the Ising model (different temperatures) as well as
for another square-lattice \mrf model with continuous latent variables and a multimodal posterior. These
additional results are in general agreement with the ones presented here.

%%% Local Variables:
%%% mode: latex
%%% TeX-master: "dc-smc.tex"
%%% End:

\subsection{Hierarchical Bayesian Model -- New York State Mathematics Test}\label{sec:hierarchical}

In this section, we demonstrate the scalability of our method by analysing a
dataset containing \emph{New York State Mathematics Test} results for
elementary and middle schools. % in New York City. 

\subsubsection*{Data and model}

After preprocessing (data acquisition and preprocessing are described in detail in Appendix~\ref{app:expts:dataAnalysis}), we organize the data into a tree $T$.
%with nodes indexed by a set $T$. 
A path from the root to a leaf has the following form: NYC (the root, denoted by $r\in T$), borough, school district, school, year. Each leaf $t\in T$ comes with an observation of $m_t$ exam successes out of $M_t$ trials. There were a total of 278\thinspace399 test instances in the dataset, split across five borough (Manhattan, The Bronx, Brooklyn, Queens, Staten Island), 32 distinct districts, and 710 distinct schools.

We use the following model, based on standard techniques from multi-level data analysis \citep{Gelman2006Multilevel}.
The number of successes $m_t$ at a leaf $t$ is assumed to be binomially distributed, with success probability parameter $p_t = \textrm{logistic}(\theta_t)$, where $\theta_t$ is a latent parameter. Moreover, we attach latent variables $\theta_t$ to internal nodes of the tree as well, and model the difference in values along an edge $e = (t \to t')$ of the tree with the following expression: $\theta_{t'} = \theta_t + \Delta_e$, where, $\Delta_e \sim \textrm{N}(0, \sigma_e^2)$.  We put an improper prior (uniform on $(-\infty, \infty)$) on $\theta_r$.\footnote{When $m_t \notin \{0, M_t\}$ for at least one leaf, this can be easily shown to yield a proper posterior.} We also make the variance random, but shared across siblings, $\sigma^2_e = \sigma_t^2 \sim \textrm{Exp}(1)$.

\subsubsection*{D\&C SMC implementation}

We apply the basic \dcsmc[\sir] to this problem, using the natural hierarchical structure provided by the model (see Section~\ref{sec:decompositions}).
Note that conditionally on values for $\sigma^2_t$ and for the $\theta_t$ at the leaves, the other random variables are multivariate normal. % and can be marginalized.
Therefore, we instantiate values for $\theta_t$ only at the leaves, and when proposing at an internal node $t'$, we only need to propose a value for $\sigma^2_{t'}$ as the internal parameters $\theta_{t'}$ can be analytically marginalized conditionally on $\sigma^2_{t'}$ and $\theta_{t'}$
% in time $O(L)$
using a simple message passing algorithm.

Each step of \dcsmc therefore falls in exactly one of two cases: \emph{(i)} At the leaves we propose a value for $p_t$
from a Beta distribution with parameters $1 + m_t$ and $1 + M_t - m_t$,
which we map deterministically to $\theta_t = \textrm{logit}(p_t)$. The corresponding weight update is a constant.
\emph{(ii)} At the internal nodes we propose $\sigma^2_t \sim \textrm{Exp}(1)$ from its prior. The weight update ratio involves the densities of marginalized multivariate normal distributions % (one for the proposed tree on the numerator, and one for each subtree on the denominator),
which can be computed efficiently using message passing.
Our Java implementation is open source and can be adapted to other multilevel Bayesian analysis scenarios. The code and scripts used to perform our experiments are available at {\texttt https://github.com/alexandrebouchard/multilevelSMC}.

The qualitative results obtained from DC with 10\,000 particles (Figure~\ref{fig:means}) are in broad agreement with other socio-economic indicators. For example, among the five counties corresponding to each of the five boroughs, Bronx County has the highest fraction (41\%) of children (under 18) living below poverty level.\footnote{Statistics from the New York State Poverty Report 2013, {\tiny \url{http://ams.nyscommunityaction.org/Resources/Documents/News/NYSCAAs\_2013\_Poverty\_Report.pdf}}} Queens has the second lowest (19.7\%), after Richmond (Staten Island, 16.7\%). However the fact that Staten Island contains a single school district means that our posterior distribution is much flatter for this borough.

\begin{figure}[tp]
\begin{center}
  \includegraphics[width=4.5in]{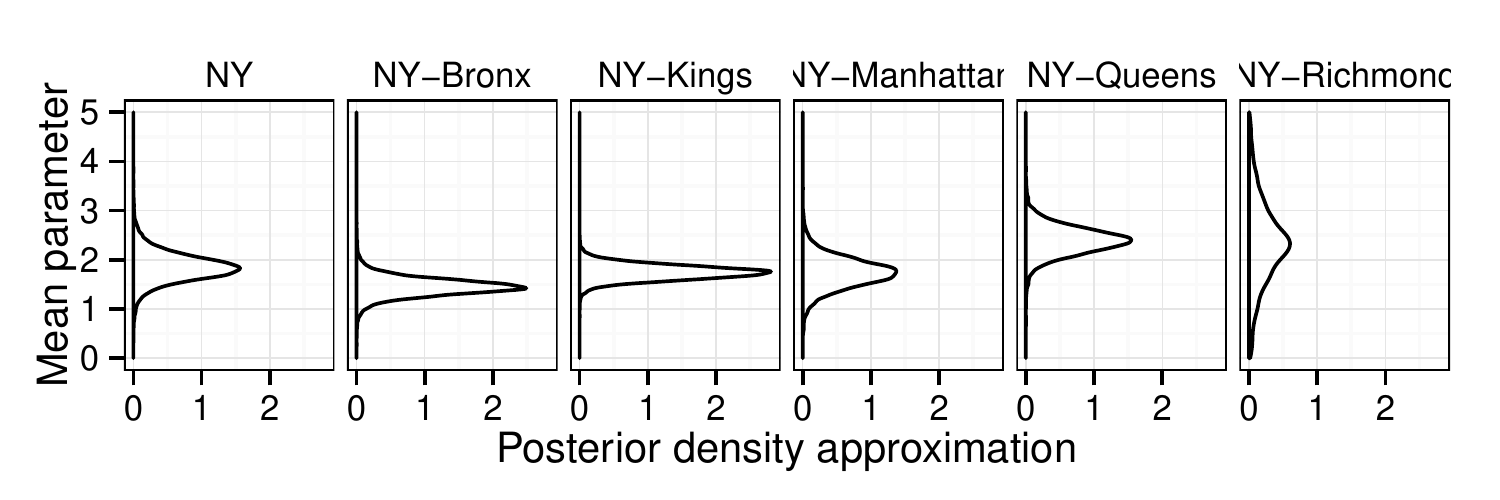}
  \caption[labelInTOC]{Posterior densities for the parameters $\theta_t$ with \dcsmc[\sir] ran with 10\thinspace000 particles.
    %The columns index different parameters $t$ (only the first two levels shown).
    The values of the internal $\theta_t$ are marginalized during inference, but can be easily reinstantiated from the samples as a post-processing step.}
  \label{fig:means}
\end{center}
\end{figure}

%\subsubsection*{Baselines and experimental setup}
\subsubsection*{Comparison of posterior inference methods}

For the purpose of comparison, we also applied three additional methods
\begin{description}
  \item[Gibbs:] A Metropolis-within-Gibbs algorithm, proposing a change on a single variable using a normal proposal of unit variance. As with \dcsmc[\sir], we marginalize the internal $\theta_t$ parameters. (Java implementation.)
  \item[STD:] A standard (single population) bootstrap filter with the intermediate distributions being sub-forests incrementally built in post-order. 
    The internal $\theta_t$-parameters are marginalized. (Java implementation.)
  \item[Stan:] An open-source implementation of the Hamiltonian Monte Carlo algorithm (see, \eg, \citet{Neal:2011}).
    We did not implement marginalization of the internal $\theta_t$-parameters. Stan includes a Kalman inference engine, however it is limited to chain-shaped \pgm{s} as of version 2.6.0. (C++ implementation.)
\end{description}
Further details on the baselines and the experimental setup can be found in Appendix~\ref{app:expts:dataAnalysis}.

% MCMC methods
%
% Method      Mean ESS/min (Std-dev)
%
% stan-20k    0.000848     (0.0016)
% gibbs-3M    0.215        (0.010)

% SMC methods
%
% Method      Mean ESS/min (Std-dev)
%
% dc-10k      636.8        (19.3)
% std-10k     537.8        (53.2)

We measure efficiency using effective sample size (ESS) per minute, as well as convergence of the posterior distributions on the parameters.  
For the MCMC methods (Gibbs and Stan), the ESS is estimated using the standard auto-regressive method, as implemented by \citet{plummer2012package}. 
For the SMC methods (\dcsmc and STD), the non-sequential nature of the samples dictates a different estimator, hence, again following standard practices, we use the estimator described by \citet{KongLW:1994}. For both MCMC and SMC methods, wall-clock time is measured on Intel Xeon E5430 quad-core processors, running at 2.66 GHz.
We replicated all running-time experiments ten times.

We begin with the ESS per minute results for the SMC methods ran with 10\,000 particles. For \dcsmc, we obtained a mean ESS/min of 636.8 (19.3), and for STD, of 537.8 (53.2). The diagnostics suggest that both methods perform reasonably well, with a slight advantage to \dcsmc. 
In contrast, the MCMC diagnostics raised inefficiency concerns. 
For Gibbs (300\,000 iterations), we obtained a mean ESS/min of 0.215 (standard deviation of 0.010). 
The performance of Stan (20\,000 iterations) was inferior, and more volatile, with a mean of 0.000848 (0.0016).
We attribute the poor performance of the Stan baseline to the fact that it does not marginalize the parameters $\theta$ (the reason for this is explained in the previous section).

Since the different types of samples impose the use of two different ESS estimators, direct comparisons of ESS/min between an SMC and an MCMC method should be taken with a pinch of salt. However, these results show that the sampling problem we are investigating in this section is indeed a  challenging one. This is not a surprise, given the high-dimensionality of the latent variables (3\thinspace555 remaining parameters after marginalization of the multivariate normal). Moreover, our results on the convergence of the posterior distributions on the parameters recapitulate that \emph{(i)} the SMC methods strongly outperform the MCMC baselines in this problem, and \emph{(ii)} \dcsmc and STD perform similarly, with a slight advantage for \dcsmc. Additional results supporting this claim
can be found in Appendix~\ref{app:expts:dataAnalysis}.

Next, to better differentiate the two SMC methods, we investigate estimation of the log-normalizing constant $\log(Z)$. The results are shown in Figure~\ref{fig:logZ}. Since there is little change between the \dcsmc estimate with 100k and 1M particles ($-3\,811.60 (0.80)$ and $-3\,811.15 (0.33)$ respectively), it is reasonable to assume that the true negative log-normalization is in the range 3\,811--3\,812.  
Under this assumption, \dcsmc outperforms STD on all computational budgets. Note that the abscissa is in logscale, suggesting that in the large computational budgets, \dcsmc requires roughly one order of magnitude less particles than STD to reach a similar accuracy. 
These results have practical implications to situations where particle MCMC is required. Indeed, in the light of \citet{DoucetPDK:2014}, where the authors recommend a standard deviation of the log-likelihood estimator in the range of 1--1.7, around 10\,000 particles would be sufficient in the case of \dcsmc (standard deviation for 10\,000 particles is 1.7), whereas closer to 100\,000 particles would be needed in the case of STD (standard deviation for 10\,000 particles is 2.5).

\begin{figure}[tp]
\begin{center}
  \includegraphics[height=2in]{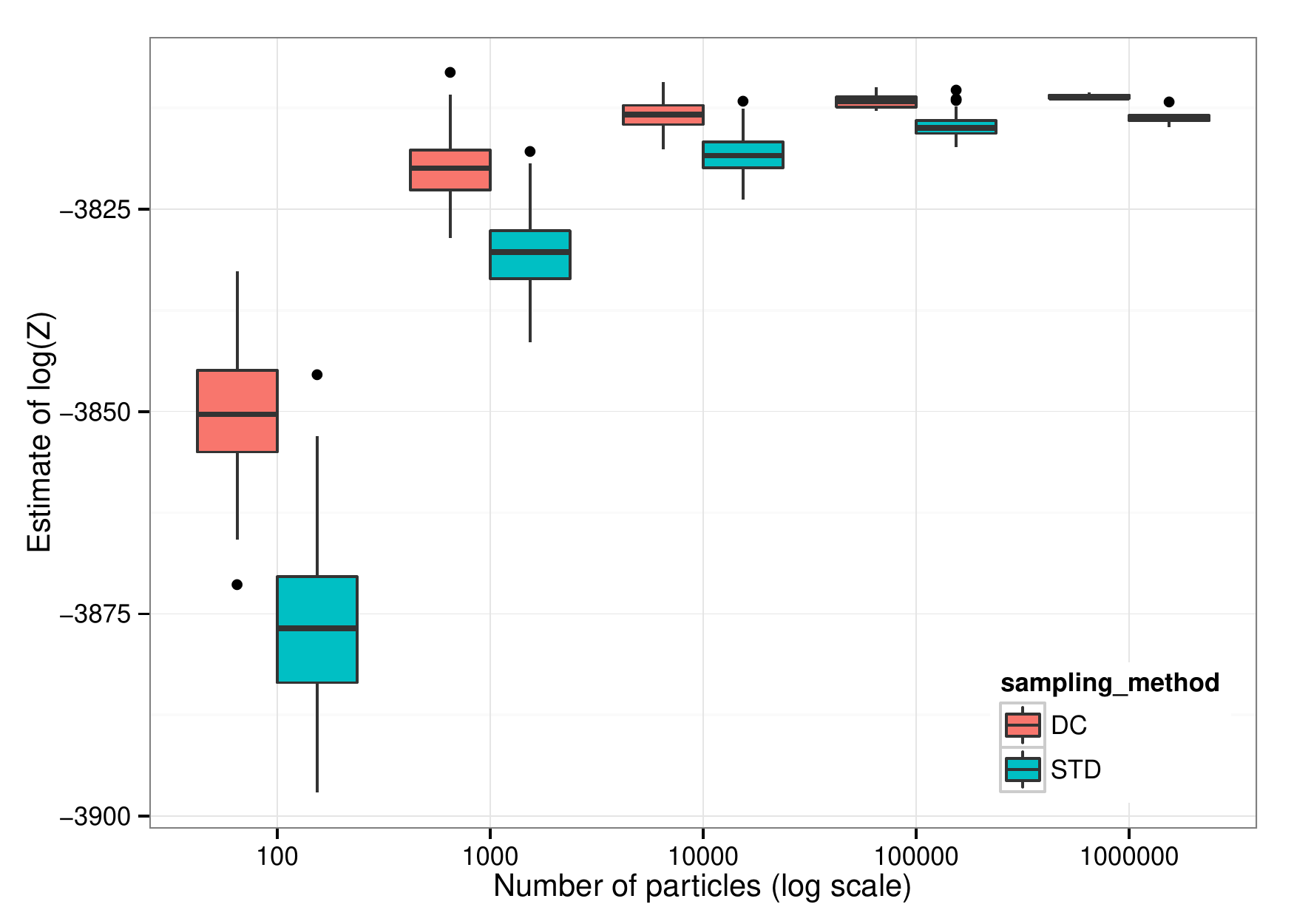}
  \includegraphics[height=2in]{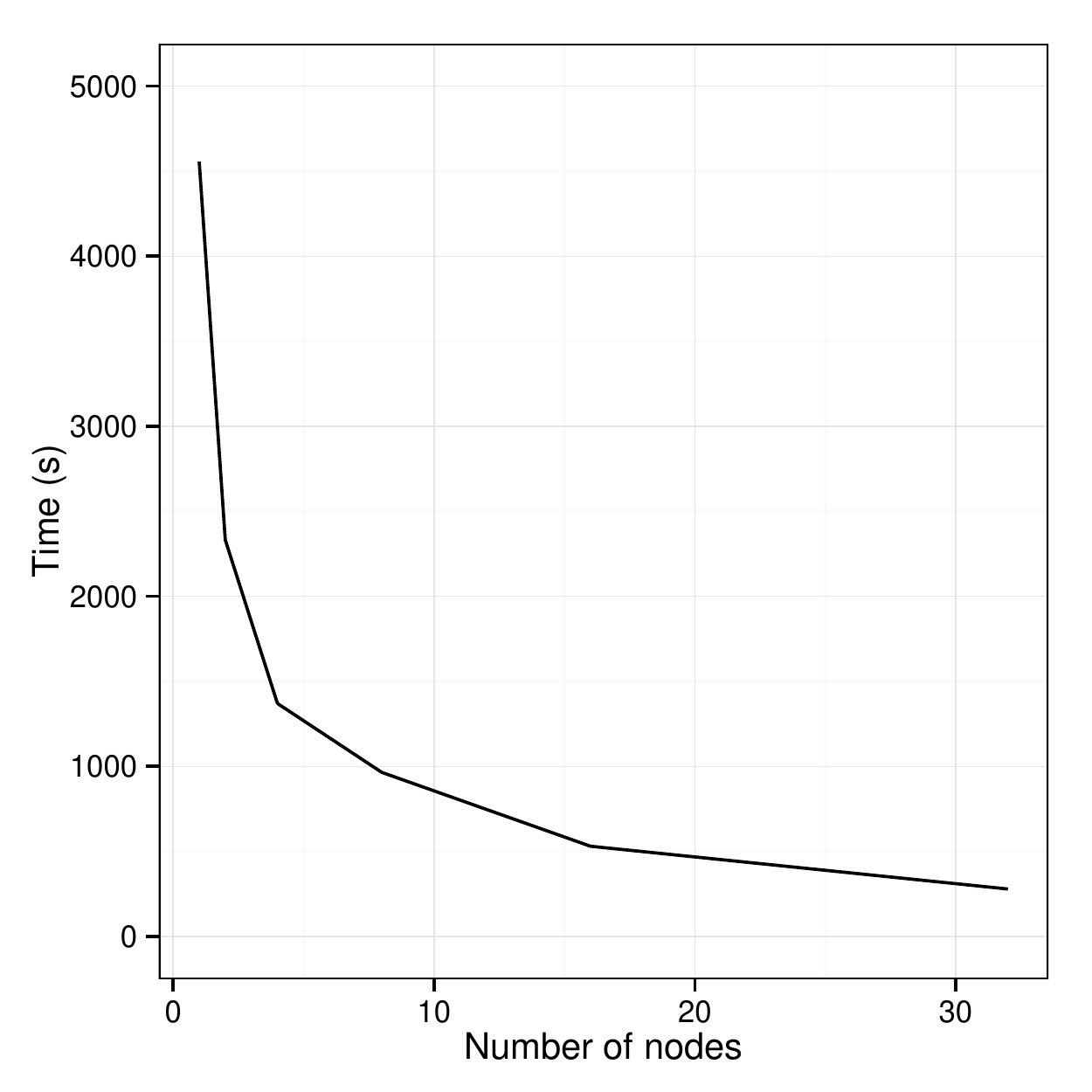}
  \caption{Left: Estimates of $\log(Z)$ obtained using \dcsmc and STD with different numbers of particles. Each experiment was replicated 110 times (varying the Monte Carlo seed), except for the experiments with 1M particles, which were replicated only 10 times. Right: Wall-clock times for the distributed \dcsmc algorithm. See Appendix~\ref{app:expts:dataAnalysis} for speed-up results.}
  \label{fig:logZ}
\end{center}
\end{figure}

\subsection{Distributed Divide-and-Conquer}\label{sec:distributed}

To demonstrate the suitability of \dcsmc to distributed computing environments, we have implemented a proof-of-concept distributed \dcsmc algorithm. The main idea in this implementation is to split the work at the granularity of populations, instead of the more standard particle granularity. The description and benchmarking of this implementation can be found in Appendix~\ref{app:expts:dataAnalysis}.
Using this distributed implementation, we see for example (Figure~\ref{fig:logZ}, right) that the running time for $100\thinspace000$ particles can be reduced from $4\,557$ seconds for one machine (about $1\frac{1}{4}$ hours), to $279$ seconds (less than five minutes) using 32 compute nodes (each using a single thread).

%%% Local Variables:
%%% mode: latex
%%% TeX-master: "dc-smc.tex"
%%% End:

% SECTION
\section{Discussion}
We have shown that trees of auxiliary distribution can be leveraged by \dcsmc samplers to provide computationally efficient approximations of the posterior distribution of high-dimensional and possibly loopy probabilistic graphical models. Our method, which generalizes the \smc framework, is additionally easy to distribute across several compute nodes.

As with standard \smc (and other advanced computational inference methods) \dcsmc allows for a large degree of flexibility, and the method 
should be viewed as a toolbox rather than as a single algorithm. 
Indeed, we have discussed several possible extensions of the basic method, and their utility is problem-specific. 
Furthermore, the interplay between these extensions needs to be taken into account.
In particular, based on the numerical results in Section~\ref{sec:expts:mrf} we argue that mixture
sampling can be useful when used in conjunction with MCMC-based tempering, especially when simulating from
the MCMC kernel is computationally costly. In such scenarios, the warm-starting of the tempering process
enabled by mixture sampling can compensate for the polynomial (in the branching factor of the tree) computational cost of mixture
sampling. 

We have assumed in this work that the topology of the tree of auxiliary distributions is known and fixed. In practice, several different decompositions are possible. We have presented one systematic way of obtaining a tree decomposition for self-similar graphical models. However, a natural question to ask is how to choose an optimal decomposition. We are exploring several approaches to address this question, including strategies that mix several decompositions. How the components of these mixtures should interact is a question we leave for future work.

%In the theoretical end of the spectrum, we conjecture that under simplified settings, minimizing the depth of the tree decomposition may satisfy certain optimality criteria. Informally, such tree implies that the number of resampling step ancestral to any variable of the model is logarithmic in the number of variables. A challenge in the investigation of this question is related to unexpected technical difficulties in the development of central limit theories for our setup.  

%%% Local Variables:
%%% mode: latex
%%% TeX-master: "dc-smc.tex"
%%% End:

{\footnotesize
\bibliographystyle{apalike}
\bibliography{references}
}

\appendix
%\clearpage
\section{Theoretical Properties} \label{app:proofs}

%\subsection{Unbiasedness of the normalizing constant estimate}
\subsection{Proof of Proposition~\ref{prop:bias}}
%We consider, in the unbiasedness result below, a simple case in which the sequence of subpopulations is specified
%by a balanced binary tree.
% and $\widetilde{\setX}_t = \emptyset$ except when $t$ is one ofthe leaves of the
% tree. 
%The extension to the general case is straightforward but notationally
%cumbersome.

% \begin{prop}
% Provided that %we employ a balanced binary tree decomposition
% %, have $\widetilde\setX_t = \emptyset$ except when $t$ is a leaf node of $T$ 
% %and  
% %$\gamma_{\mi{d}{h}}(\rmd\x_{\mi{d}{h}}) \ll (\gamma_{\mi{d+1}{2h-1}} \otimes
% %\gamma_{\mi{d+1}{2h}})(d(\x_{\mi{d}{h}} \setminus \widetilde\x_{\mi{d}{h}})) q_{\mi{d}{h}}(d\widetilde\x_{\mi{d}{h}}|\x_{\mi{d}{h}} \setminus \widetilde\x_{\mi{d}{h}})$ (with
% %the obvious abuse of notation) for all $(d,h) \in \mis{D-1}$,
% %$\gamma_{\mi{D}{h}} \ll q_{\mi{D}{h}}$ for all $h \in \{1,\ldots,2^D\}$ 
% $\gamma_t \ll \prod_{c\in\mathcal{C}(t)} \gamma_c \cdot q_t$ for every $t \in \mathcal{T}$
% and an unbiased,
% exchangeable resampling scheme is applied to every population at every
% iteration, we have for any $N 
% \geq 1$:
% \[\mathbb{E}[\widehat{Z}_r] = Z_r = \int \gamma_r(\rmd\x_r). \]
% \end{prop}
% \begin{proof}

We consider all of the random variables simulated in the running of the
algorithm, following the approach of \cite{AndrieuDH:2010} for standard SIR
algorithms. A direct argument provides for the unbiasedness of the normalizing
constant estimate by demonstrating that the ratio $\widehat{Z}_r / Z_r$ is equal
to the Radon-Nikod{\'y}m derivative of a particular extended target
distribution to the joint sampling distribution of all of the random variables
generated during the running of the algorithm (and is hence of unit expectation).

We provide explicit details for the case in which $T$ is a
balanced binary tree (i.e. $|\mathcal{C}(t)| = 2$ for every non-leaf
node). The more general case follows by the same argument \emph{mutatis mutandis}.
We note in particular that the extension to balanced trees of degree greater
than two is trivial and that unbalanced trees may be addressed by the
introduction of trivial dummy nodes (or directly, at the expense of further
complicating the notation).

We assume that subpopulation $h$ at depth $d$ is
obtained from subpopulations $2h-1$ and $2h$ at depth $d+1$. 
Let,
\[\x_{\mis{D}}^{1:N} : = \{ \x_{\mi{d}{h}}^i: \mi{d}{h} \in \mis{D}, i \in \{1,\ldots,N\}\}\]
be the collection containing $N$ particles within each sub-population and let
\[\aa_{\miss{D}}^{1:N} := \{ \aa_{\mi{d}{h}}^i : \mi{d}{h} \in \miss{D}, i \in \{1,\ldots,N\}\}\]
be the ancestor indices associated with the resampling step; $\aa_{\mi{d}{h}}^i$ is
the ancestor of the $i^\textrm{th}$ particle obtained in the resampling step
associated with subpopulation $h$ at level $d$ of the tree.
% (we omit the final resampling step).

%Some notation is required.
First we specify sets in which multi-indices of particle populations live:
\begin{align*}
\mis{D} =& \bigcup_{d=0}^D \{d\} \otimes \{1,\ldots,2^d\}, &
\miss{D} =& \mis{D} \setminus \{0\} \otimes \{1\}.
\end{align*}
Thus, population $h$ at depth $d$, where the root of the tree is at a depth of
$0$, may be identified as $(d,h) \in \mis{D}$ for any $d \in \{0,\ldots, D\}$.

The joint distribution from which variables are simulated during the running
of the algorithm may be written as: 
\begin{align*}
\widetilde{q}&\left(\rmd\x_{\mis{D}}^{1:N}, \rmd\aa_{\miss{D}}^{1:N}\right) =
\prod_{h=1}^{2^D} \prod_{i=1}^N q_{\mi{D}{h}}(\rmd\x_{\mi{D}{h}}^i) \times
\prod_{\mi{d}{h} \in \miss{D}} \prod_{i=1}^N \nw_{(d,h)}^{\aa_{\mi{d}{h}}^i} \rmd\aa_{\mi{d}{h}}^i\\ & \times
\prod_{\mi{d}{h} \in \mis{D-1}} \prod_{i=1}^N
\delta_{\left(\x_{\mi{d+1}{2h-1}}^{\aa_{\mi{d+1}{2h-1}}^i},
    \x_{\mi{d+1}{2h}}^{\aa_{\mi{d+1}{2h}}^i}\right)} (\rmd(\x_{\mi{d}{h}}^i \setminus
\widetilde\x_{\mi{d}{h}}^i)) q_{\mi{d}{h}}(\rmd\widetilde\x_{\mi{d}{h}}^i|\x_{\mi{d}{h}}^i \setminus
\widetilde\x_{\mi{d}{h}}^i)
\end{align*}
where $\nw_{(d,h)}^i$ denotes the normalized (to sum to one within the realized
sample) importance weight of particle $i$ in subpopulation $h$ of depth
$d$. Note that this distribution is over \(
\left( \otimes_{(d,h)\in\mis{D}} \setX_{(d,h)} \right)^{\Np} \otimes  \{1,\ldots,\Np\}^{|\miss{D}|\Np}
\) and the $\rmd\aa$ corresponds to a counting measure over the index set. The
inclusion of the singular transition emphasizes the connection between this
algorithm and the standard SIR algorithm.

It is convenient to define ancestral \emph{trees} for our particles using the
following recursive definition:
\begin{align*}
b_{\mi{0}{1}}^i &= i & &\text{and } &
b_{\mi{d}{h}}^i&= \left\{
\begin{array}{ll}
\aa_{\mi{d}{h}}^{b_{\mi{d-1}{(h+1)/2}}^i }    & d \text{ odd,}\\
\aa_{\mi{d}{h}}^{b_{\mi{d-1}{h/2}}^i     }    & d \text{ even,}
\end{array}
\right.
\end{align*}
the intention being that $\{ b^i_{(d,h)} : (d,h) \in \miss{D}\}$ contains the
multi-indices of all particles which are ancestral to the $i^\text{th}$
particle at the root.

It is also useful to define an auxiliary distribution over all of the sampled
variables and an additional variable $k$ which indicates a selected ancestral
tree from the collection of $N$, just as in the particle MCMC context. Here we
recall that $\pi_r = \gamma_r / Z_r$:

\begin{align*}
\widetilde{\pi}_r\left(\rmd\x_{\mis{D}}^{1:N}, \rmd\aa_{\miss{D}}^{1:N}, \rmd k\right) =
\frac{\pi_r(\rmd\x_{\mi{0}{1}}^k)  \prod\limits_{\mi{d}{h} \in \mis{D-1}}
  \delta_{\x_{\mi{d}{h}}^{b_{\mi{d}{h}}^k}\setminus \widetilde \x_{\mi{d}{h}}^{b_{\mi{d}{h}}^k}}\left(\rmd\left(
    \x_{\mi{d+1}{2h-1}}^{b_{\mi{d+1}{2h-1}}^k} ,
    \x_{\mi{d+1}{2h-1}}^{b_{\mi{d+1}{2h}}^k}\right)\right) }{N^{|\mis{D}|}
 } & \\
\times 
\frac{
  \widetilde{q}\left(\rmd\x_{\mis{D}}^{1:N}, \rmd\aa_{\miss{D}}^{1:N}\right)
}{
  \prod\limits_{h=1}^{2^D} q_{\mi{D}{h}}(\rmd\x_{\mi{D}{h}}^{b_{\mi{D}{h}}^k})
   \left( \prod\limits_{(d,h) \in \miss{D}} \nw_{\mi{d}{h}}^{b_{\mi{d}{h}}^k} \right)
%   \left(  \prod\limits_{(d,h) \in \mis{D-1}} %
\prod\limits_{(d,h)\in\mis{D-1}}q_{\mi{d}{h}}(\rmd\widetilde\x_{\mi{d}{h}}^{b^k_{d,h}} |
 \x_{\mi{d}{h}}^{b^k_{d,h}}\setminus \widetilde\x_{\mi{d}{h}}^{b^k_{d,h}})
}\quad & \\
\times \left(
\prod\limits_{(d,h)\in\mis{D-1}}
\delta_{\left(
         \x_{\mi{d+1}{2h-1}}^{b_{\mi{d+1}{2h-1}}^k} ,
         \x_{\mi{d+1}{2h-1}}^{b_{\mi{d+1}{2h}}^k}\right)} \left(\rmd\left(\x_{\mi{d}{h}}^{b_{\mi{d}{h}}^k}\setminus
   \widetilde\x_{\mi{d}{h}}^{b_{\mi{d}{h}}^k}\right) \right)
\right)^{-1} \rmd k
\qquad&
\end{align*} 
which can be straightforwardly established to be a properly-normalized
probability. Note that the division by a product of singular measures should be
interpreted simply as removing the corresponding singular component of the
numerator.

Augmenting the proposal distribution with $k \in \{1,\ldots,\Np\}$ obtained in the same way is
convenient (and does not change the result which follows as $\widehat{Z}$ does
not depend upon $k$):
\begin{align*}
\bar{q}\left(\rmd\x_{\mis{D}}^{1:N}, \rmd\aa_{\miss{D}}^{1:N}, \rmd k\right) = \widetilde{q}\left(\rmd\x_{\mis{D}}^{1:N}, \rmd\aa_{\miss{D}}^{1:N}\right) \nw_{\mi{0}{1}}^k \rmd k.
\end{align*}

It is straightforward to establish that $\widetilde{\pi}_r$ is absolutely
continuous with respect to $\bar{q}$. Taking the Radon-Nikod{\'y}m derivative yields the following useful
result (the identification between $\x_{\mi{0}{1}}^k$ and its ancestors is
taken to be implicit for notational simplicity; as this equality
holds with probability one under the proposal distribution, this is
sufficient to define the quantity of interest almost everywhere). We allow $\pi_r$
and $q_{d,h}$ to denote the \emph{densities} of the measures which they correspond
to (with respect to an appropriate version of Lebesgue or counting measure)
%\adam{Writing this out
%  in full now is cumbersome as you need to identify the point at which every
%  $\widetilde\x$ was first sampled}:
\begin{align*}
\frac{d\widetilde{\pi}_r}{d\bar{q}} &\left(\x_{\mis{D}}^{1:N}, \aa_{\miss{D}}^{1:N},
    k\right)\\
&=\frac{\pi_r\left(\x_{\mi{0}{1}}^k
%Following no longer enough:
%\left(\x_{\mi{D}{1}}^{b_{\mi{D}{1}}^k},\ldots,\x_{\mi{D}{2^D}}^{b_{\mi{D}{2^D}}^k}\right)
\right)}{N^{|\mis{D}|}
} 
\frac{
 1
}{
  \prod\limits_{h=1}^{2^D} q_{\mi{D}{h}}(\x_{\mi{D}{h}}^{b_{\mi{D}{h}}^k})
\prod\limits_{(d,h)\in\mis{D-1}}q_{\mi{d}{h}}(\widetilde\x_{\mi{d}{h}}^{b^k_{d,h}} |
 \x_{\mi{d}{h}}^{b^k_{d,h}}\setminus \widetilde\x_{\mi{d}{h}}^{b^k_{d,h}})
  \prod\limits_{(d,h) \in \mis{D}} \nw_{\mi{d}{h}}^{b_{\mi{d}{h}}^k}
}\\
&=
\frac{\pi_r\left(\x_{\mi{0}{1}}^k
%\left(\x_{\mi{D}{1}}^{b_{\mi{D}{1}}^k},\ldots,\x_{\mi{D}{2^D}}^{b_{\mi{D}{2^D}}^k}\right)
\right)}{
 \prod\limits_{h=1}^{2^D} q_{\mi{D}{h}}(\x_{\mi{D}{h}}^{b_{\mi{D}{h}}^k})\prod\limits_{(d,h)\in\mis{D-1}}q_{\mi{d}{h}}(\widetilde\x_{\mi{d}{h}}^{b^k_{d,h}} |
 \x_{\mi{d}{h}}^{b^k_{d,h}}\setminus \widetilde\x_{\mi{d}{h}}^{b^k_{d,h}})
  \prod\limits_{(d,h) \in \mis{D}} \frac{\w_{\mi{d}{h}}^{b_{\mi{d}{h}}^k}}{\frac{1}{N} \sum_{i=1}^{N} \w_{\mi{d}{h}}^{i}}
}
\end{align*}
where $\w_{\mi{d}{h}}^i$ denotes the \emph{unnormalized} importance weight of
  particle $i$ in subpopulation $h$ at depth~$d$.

We can then identify the product $ \prod_{(d,h) \in \mis{D}}
\w_{\mi{d}{h}}^{b_{\mi{d}{h}}^k}$ with the ratio of density $\pi_r$ to
%$\prod_{h=1}^{2^D} q_{\mi{D}{h}}$ evaluated at
%$\left(\x_{\mi{D}{1}}^{b_{\mi{D}{1}}^k},\ldots,\x_{\mi{D}{2^D}}^{b_{\mi{D}{2^D}}^k}\right)$
the assorted proposal densities evaluated at the appropriate ancestors of the
surviving particle 
%\adam{Is this too vague? Writing it out in full isn't very
%  enlightening, but can be done\dots} 
multiplied by the normalizing constant $Z_r$ (by construction; these are
exactly the unnormalized weights). Furthermore, we have that
\[
\prod_{\mi{d}{h} \in \mis{D}} \frac{1}{N} \sum_{i=1}^{N} \w_{\mi{d}{h}}^{i} = \widehat{Z}_r
\]
which implies that $\widetilde{\pi}_r = (\widehat{Z}_r / Z_r) \bar{q}$.
Consequently, we obtain the result as:
\begin{align*}
\mathbb{E}_{\bar{q}}[\widehat{Z}_r] =
\mathbb{E}_{\bar{q}}[Z_r \widetilde{\pi}_r / \bar{q}] =  Z_r
\mathbb{E}_{\widetilde{\pi}_r}[1] = Z_r.
\end{align*}
\hfill\qedsymbol

%\end{proof}

\subsection{Consistency -- Proof of Proposition~\ref{prop:consistent}}\label{app:proofs:consistency}

We now turn to consistency. We make a few amendments to the notation used in the main text. First, as we consider limits in the number of particles $N$, we add an index $N$ the particles and their weights. Second, as in the preceding proof we use $\nw$ to denote the normalized weights. For simplicity we also assume a perfect
binary tree, $\C(t) = (\tleft, \tright)$, where $\tleft$ and $\tright$ denote the left
and right children of node $t$. The proof in this case captures the essential
features of the general case. 
We base our argument on \cite{DoucM:2008} (henceforth, DM08). We will use the following definitions and assumptions.

\begin{enumerate}
  \item \label{cond:L} 
We require that the Radon-Nikod{\'y}m derivative
\begin{align}\label{eq:support}
\frac{\text{d} \gamma_t}{\text{d } \gamma_{\tleft} \otimes \gamma_{\tright}
  \otimes q_t} (\x_\tleft,\x_\tright,\widetilde{\x}_t) < \infty,
\end{align}
is well-defined and finite almost everywhere.
\item \label{cond:test-fcts} We define $\setC_t$ to denote the collection of bounded integrable test function as in DM08.
\item \label{cond:bounds} We also make the following assumption (which could
  be relaxed and which is too strong to cover some of the algorithms and
  applications discussed in the main text, but which simplifies exposition): there exists
  a constant $C$ such that for all $t, \x_\tleft, \x_\tright, \widetilde{\x}_t$:
\begin{align*}
\frac{\text{d} \gamma_t}{\text{d } \gamma_{\tleft} \otimes \gamma_{\tright}
  \otimes q_t} (\x_\tleft,\x_\tright,\widetilde{\x}_t) < C,
\end{align*}
\item \label{cond:resampling} Assume multinomial or residual resampling is performed at every step.
\end{enumerate}

\begin{prop}\label{prop:consistent-appendix}
Under the above assumptions, the normalized weighted particles $(\x_N^i, \nw_N^i : i \in\{1, \dots, N\})$, $\nw_N^i = \w_N^i / \sum_j \w_N^j$, obtained from \RMC($r$) are consistent with respect to $(\pi, \setC_r)$, i.e.: for all test function $f\in \setC_r$, as the number of particles $N \to \infty$:
$$
\sum_{i=1}^N \nw_N^i f(\x_N^i) \pcv \int f(\x) \pi(\ud \x).
$$
\end{prop}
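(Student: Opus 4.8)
### Proof proposal

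The plan is to prove consistency by structural induction on the tree $T$, using the single-step consistency-preservation lemmas of DM08 (Douc \& Moral 2008) at each internal node. The base case is the leaves: at a leaf $t$ we draw $\x_t^i \stackrel{\text{iid}}{\sim} q_t$ and weight by $\w_t^i = \gamma_t(\x_t^i)/q_t(\x_t^i)$, which is exactly an importance sampling population; by the law of large numbers (and finiteness of the Radon--Nikod\'ym derivative in assumption~\ref{cond:L}) this population is consistent for $(\pi_t,\setC_t)$. The bound in assumption~\ref{cond:bounds} gives the uniform integrability needed to stay within the class $\setC_t$ of bounded integrable test functions used by DM08.

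For the inductive step, fix an internal node $t$ with children $\tleft,\tright$, and suppose inductively that the populations returned by $\RMC(\tleft)$ and $\RMC(\tright)$ are consistent for $(\pi_\tleft,\setC_\tleft)$ and $(\pi_\tright,\setC_\tright)$ respectively. First I would handle the \emph{resampling} of each child population: under multinomial or residual resampling (assumption~\ref{cond:resampling}), DM08 shows that resampling preserves consistency, so $\{\rx_\tleft^i,1\}$ and $\{\rx_\tright^i,1\}$ remain consistent for $\pi_\tleft$ and $\pi_\tright$. Next comes the key point that is genuinely new relative to standard SMC: the two child populations are \emph{independent}, and we combine them index-wise into pairs $(\rx_\tleft^i,\rx_\tright^i)$. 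I would argue that these pairs are consistent for the product measure $\pi_\tleft \otimes \pi_\tright$ on $\setX_\tleft \times \setX_\tright$ --- this requires a short argument, since it is not a direct instance of a standard SMC step. One clean way: for a product test function $f(\x_\tleft)g(\x_\tright)$ with $f\in\setC_\tleft$, $g\in\setC_\tright$, write $\frac1N\sum_i f(\rx_\tleft^i)g(\rx_\tright^i)$, condition on the $\tleft$-population, use the $\tright$-consistency to get $\frac1N\sum_i f(\rx_\tleft^i)\big(\int g\,d\pi_\tright + o_{\P}(1)\big)$, then apply $\tleft$-consistency and a uniform bound to kill the cross term; then extend from products to general bounded test functions on the product space by a density/approximation argument (the measures are Polish, so products of bounded functions are dense enough for the class $\setC_{t}$ restricted to the product structure). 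Then the \emph{proposal} step ($\widetilde\x_t^i\sim q_t(\cdot\mid\rx_\tleft^i,\rx_\tright^i)$) and the \emph{weighting} step ($\w_t^i = \gamma_t(\x_t^i)/[\gamma_\tleft(\rx_\tleft^i)\gamma_\tright(\rx_\tright^i)\,q_t(\widetilde\x_t^i\mid\rx_\tleft^i,\rx_\tright^i)]$) together form exactly one mutation-plus-reweighting step of the DM08 formalism, moving from the product measure to $\pi_t$; the weight function is bounded by $C$ via assumption~\ref{cond:bounds}, so DM08's consistency-preservation theorem applies and yields consistency of $\{\x_t^i,\w_t^i\}$ for $(\pi_t,\setC_t)$. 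Unwinding the induction to $t=r$, where $\pi_r=\pi$, gives the claim.

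I expect the main obstacle to be the product-measure step: showing that index-wise pairing of two \emph{independently generated} consistent particle systems yields a system consistent for the product measure, and doing so at the level of generality of the test-function class $\setC_r$ rather than just for product test functions. The subtlety is that each child system is itself a complicated, non-iid, resampled particle system, so one must be careful that the conditioning argument is valid (the $\tright$-system is independent of the $\tleft$-system, which is what makes the conditioning clean) and that the approximation from product test functions to general ones in $\setC$ is compatible with the convergence-in-probability mode and the boundedness constraints. A secondary, more bookkeeping-level obstacle is verifying that the incremental weight functions land in the function classes required by DM08's hypotheses at each step --- this is where assumption~\ref{cond:bounds} does the work, at the cost (as the authors note) of excluding some of the unbounded-weight variants discussed in the main text. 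Everything else is a faithful transcription of the DM08 induction, with $|\C(t)|=2$ chosen only to keep notation light; the general finite-branching case follows \emph{mutatis mutandis} by iterating the pairing argument, and unbalanced trees by the dummy-node device already used in the proof of Proposition~\ref{prop:bias}.
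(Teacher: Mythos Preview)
Your plan is correct and follows the same induction-plus-DM08 skeleton as the paper; you also correctly identify the product-measure step as the genuinely new ingredient relative to standard SMC. The difference is in how that step is organized.

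You propose to resample each child first (DM08 preserves consistency), pair the results index-wise, and then argue via conditioning-on-one-side plus a density argument that $\tfrac{1}{N}\sum_i f(\rx_{\tleft}^i,\rx_{\tright}^i)\pcv(\pi_{\tleft}\otimes\pi_{\tright})(f)$. The paper instead forms the full weighted $N^2$ product system $\{(\x_{\tleft}^i,\x_{\tright}^j),\ \nw_{\tleft}^i\nw_{\tright}^j\}_{i,j=1}^N$ \emph{before} resampling. The payoff is that for a rectangle indicator $\1_{A\times B}$ the double sum factors exactly as $\big(\sum_i \nw_{\tleft}^i\1_A(\x_{\tleft}^i)\big)\big(\sum_j \nw_{\tright}^j\1_B(\x_{\tright}^j)\big)$, so consistency on rectangles is immediate from the induction hypothesis with no conditioning needed. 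The lift from rectangles to arbitrary bounded measurable $f$ is then carried out by explicit measure-theoretic lemmas: a Carath\'eodory-cover argument combined with Egorov's theorem yields uniform approximation by simple functions on rectangles off a set $B$ of small $\pi_{\tleft}\otimes\pi_{\tright}$-measure, and a separate estimate controls the contribution of $B$. Only after consistency of the $N^2$ system is established does the paper invoke DM08's resampling theorem (with $M_N=N^2$, $\widetilde M_N=N$, particles $(\x_{\tleft}^i,\x_{\tright}^j)$ and weights $\nw_{\tleft}^i\nw_{\tright}^j$) to pass to $N$ unweighted particles; DM08 Theorem~1 then closes the induction exactly as you describe. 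Your route works, but the paper's ordering trades your conditioning step for a trivial factorization, and it makes the approximation argument --- which you rightly flag as the main obstacle --- fully explicit rather than leaving it as an appeal to density.
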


\begin{lemma}\label{lemma:finite-approx}
Let  $\sigmaalg$ denote a $\sigma$-algebra generated by a semi-ring $\sigmagen$, $\sigmaalg = \sigma(\sigmagen)$. Let $\pi$ be a finite measure constructed using a Carath\'eodory extension based on $\sigmagen$. Then, given $\epsilon > 0$ and $E \in \sigmaalg$, there is a finite collection of disjoint simple sets $R_1, \dots, R_n$ that cover $E$ and provide an $\epsilon$-approximation of its measure:
\begin{align*}
\mu(A) &\le \mu(E) + \epsilon, \\
A &= \bigcup_{j=1}^n R_j \supset E.
\end{align*}
\end{lemma}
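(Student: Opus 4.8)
The plan is to prove Lemma~\ref{lemma:finite-approx}, which is a standard measure-theoretic approximation result: sets in a generated $\sigma$-algebra can be approximated from outside by finite unions of semi-ring elements, up to arbitrarily small excess measure, when the measure is finite and arises from a Carath\'eodory extension. I would first recall the construction: the outer measure is $\mu^*(E) = \inf \sum_{k=1}^\infty \mu(S_k)$, where the infimum is over countable covers $E \subset \bigcup_k S_k$ with $S_k \in \sigmagen$, and $\mu$ agrees with $\mu^*$ on $\sigmaalg$ by Carath\'eodory's theorem since every $E \in \sigmaalg$ is $\mu^*$-measurable. So given $\epsilon > 0$ and $E \in \sigmaalg$, there is a countable cover $\{S_k\}_{k\ge1} \subset \sigmagen$ with $E \subset \bigcup_{k} S_k$ and $\sum_k \mu(S_k) \le \mu(E) + \epsilon/2$.

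Next I would pass from the countable cover to a finite one. Since $\mu$ is \emph{finite}, the tail $\sum_{k > m} \mu(S_k) \to 0$, so choose $m$ with $\sum_{k>m} \mu(S_k) \le \epsilon/2$. The finite union $\bigcup_{k=1}^m S_k$ still has measure at most $\mu(E)+\epsilon/2$ but need not cover $E$; however $A_0 := \bigcup_{k=1}^{m} S_k \cup (\text{the rest})$ — more carefully, write $A := \bigcup_{k=1}^\infty S_k \supset E$, note $\mu(A) \le \mu(E) + \epsilon/2$, and observe $\mu(A \setminus \bigcup_{k\le m} S_k) \le \sum_{k>m}\mu(S_k) \le \epsilon/2$. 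The cleanest route is simply to keep $A = \bigcup_{k=1}^m S_k$ as the covering set after re-selecting $m$ large enough that the \emph{truncated} union still contains $E$ up to a null discrepancy — but that is not automatic, so instead I would take $A = \bigcup_{k=1}^{m} S_k$ and accept that it covers only ``most'' of $E$; re-reading the lemma statement, it asks for $A \supset E$ \emph{and} $\mu(A) \le \mu(E) + \epsilon$, so I must genuinely cover $E$. The fix: use $A = \bigcup_{k=1}^\infty S_k$ directly if finiteness of the index is not essential, or — the intended reading — replace the countable tail by noting $\bigcup_{k>m} S_k$ has small measure and absorbing it; then $A := \bigcup_{k=1}^{m} S_k$ may miss a piece $E \setminus A$ of measure $\le \epsilon/2$, and we enlarge by one more semi-ring set. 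I would resolve this by choosing the cover at level $\epsilon/2$ and then, since the lemma only needs $A \supseteq E$ with $\mu(A) \le \mu(E)+\epsilon$, taking $A$ to be the finite union $\bigcup_{k=1}^{m} S_k$ together with noting that the leftover is itself coverable; in the write-up I would simply state: pick a countable cover summing to within $\epsilon$, and pick $m$ so that $\mu\big(\bigcup_{k=1}^m S_k\big) \ge \mu(E) - 0$ is not needed — rather, the finite union $A = \bigcup_{k=1}^m S_k$ with $m$ chosen so $\sum_{k>m}\mu(S_k)$ is negligible does cover $E$ up to a set of outer measure $\le \epsilon/2$, which we include as one additional semi-ring cover element of that small piece.

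Then I would convert the finite union $A = \bigcup_{j} S_j$ of semi-ring elements into a \emph{disjoint} finite union $A = \bigcup_{j} R_j$ of semi-ring elements. This is exactly the defining property of a semi-ring: differences $S_i \setminus S_j$ are finite disjoint unions of semi-ring elements, and by an easy induction a finite union of semi-ring sets equals a finite disjoint union of semi-ring sets. I would cite this as a standard fact (e.g. the disjointification $S_1 \cup \dots \cup S_m$; set $R$'s inductively as $S_j \setminus (S_1 \cup \dots \cup S_{j-1})$, each a finite disjoint union of semi-ring elements, then relabel). Since the $R_j$ are disjoint and their union is still $A \supseteq E$ with $\mu(A) = \sum_j \mu(R_j) \le \mu(E) + \epsilon$, we are done.

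The main obstacle is purely bookkeeping: reconciling ``finite collection'' with ``countable Carath\'eodory cover,'' which requires the finiteness of $\mu$ (tail control) and a small extra covering step for the residual $E \setminus \bigcup_{k\le m}S_k$; once that is handled, the disjointification is routine. I would flag that the hypothesis ``$\pi$ finite'' is used precisely at the tail-truncation step, and that the semi-ring structure is used precisely at the disjointification step, so both hypotheses are essential. I expect the whole proof to be under half a page.
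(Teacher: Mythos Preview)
Your three-step skeleton---Carath\'eodory countable cover, truncate to a finite subfamily, then disjointify using the semi-ring property---is exactly the paper's proof, which occupies three sentences. You are also right to pause at the truncation step: the paper simply writes that ``since $\mu(E)<\infty$, the sum can be truncated to provide a finite $\epsilon$-approximation'' and does not explain why the truncated union still contains~$E$.

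However, the concern you raise is real and your proposed patch does not close it. You suggest absorbing the residual $E\setminus\bigcup_{k\le m}S_k$ (which has outer measure at most $\epsilon/2$) into ``one additional semi-ring cover element of that small piece'', but small measure does not guarantee the existence of a small finite semi-ring cover. Concretely, take $\sigmagen$ to be the half-open subintervals of $[0,1)$, let $\mu$ be Lebesgue measure (its Carath\'eodory extension), and let $E=\mathbb{Q}\cap[0,1)\in\sigma(\sigmagen)$. Then $\mu(E)=0$, yet any finite union of half-open intervals containing the dense set $E$ has an open complement disjoint from $\mathbb{Q}$, hence empty, so the only finite interval cover is $[0,1)$ itself, with measure~$1$. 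The same example shows that the lemma, exactly as stated, fails for a general semi-ring, so no purely bookkeeping fix will succeed at that level of generality. What \emph{does} follow from the truncation argument is the symmetric-difference bound $\mu(A\,\triangle\,E)<\epsilon$ for some finite disjoint union $A$ of semi-ring sets; you should either check that this weaker conclusion suffices for the downstream applications, or add a hypothesis on $\sigmagen$ that rules out examples like the one above.
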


\begin{proof}
From the definition of the Carath\'eodory outer measure, and the fact that it coincide with $\pi$ on measure sets such as $E$, we have a countable cover $R_1, R_2, \dots$ with:
\begin{align*}
\mu(\widetilde A) &\le \mu(E) + \frac{\epsilon}{2}, \\
\widetilde A &= \bigcup_{j=1}^\infty R_j \supset E.
\end{align*}Moreover, since $\mu(E) < \infty$, the sum can be truncated to provide a finite $\epsilon$-approximation. Finally, since $\sigmagen$ is a semi-ring, we can rewrite the finite cover as a disjoint finite cover.
\end{proof}

\begin{lemma}\label{lemma:uniform-approx}
Let $\pi_1$, $\pi_2$ be finite measures, and $f$ a measurable function on the product $\sigma$-algebra $\sigmaalg_1 \otimes \sigmaalg_2$. Then for all $\epsilon > 0$, there is a measurable set $B$ such that $\pi_1 \otimes \pi_2(B) < \epsilon$, and outside of which $f$ is uniformly approximated by simple functions on rectangles:
\begin{align*}
\lim_{M\to\infty} \sup_{x \notin B} &\left| f(x) - \sum_{m=1}^M c_{m}^M \1_{R_m^M}(x) \right| = 0,
\end{align*}
for some $R_m^M = F_m^M \times G_m^M$, $F_m^M \in \sigmaalg_1, G_m^M \in \sigmaalg_2$.
\end{lemma}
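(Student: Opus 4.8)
The plan is to reduce the statement to a countable approximation of $f$ by simple functions on rectangles, followed by an Egorov-type argument to upgrade pointwise convergence to uniform convergence off a small set. First I would recall that, since $\sigmaalg_1 \otimes \sigmaalg_2$ is by definition generated by the semi-ring of measurable rectangles $\{F \times G : F \in \sigmaalg_1, G \in \sigmaalg_2\}$, and $\pi_1 \otimes \pi_2$ is a finite measure, Lemma~\ref{lemma:finite-approx} applies: every set $E \in \sigmaalg_1 \otimes \sigmaalg_2$ can be covered by a finite disjoint union of rectangles whose measure exceeds $\pi_1 \otimes \pi_2(E)$ by at most an arbitrarily small amount. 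Hence the indicator $\1_E$ can be approximated in $L^1(\pi_1 \otimes \pi_2)$ by a simple function supported on finitely many rectangles. By the standard machinery (linearity for simple functions, monotone/dominated convergence for general measurable functions, truncation of $f$ at level $K$ and letting $K \to \infty$), there exists a sequence $g_M = \sum_{m=1}^{M} c_m^M \1_{R_m^M}$ of simple functions on rectangles with $g_M \to f$ in $L^1(\pi_1 \otimes \pi_2)$; passing to a subsequence we may also assume $g_M \to f$ $(\pi_1 \otimes \pi_2)$-almost everywhere.

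Next I would invoke Egorov's theorem, which is available because $\pi_1 \otimes \pi_2$ is finite: for any $\epsilon > 0$ there is a measurable set $B$ with $\pi_1 \otimes \pi_2(B) < \epsilon$ such that $g_M \to f$ uniformly on the complement of $B$, i.e.
\begin{align*}
\lim_{M\to\infty} \sup_{x \notin B} \left| f(x) - \sum_{m=1}^{M} c_m^M \1_{R_m^M}(x) \right| = 0.
\end{align*}
Relabelling the approximating sequence (and the exceptional set $B$) this is exactly the claimed conclusion, with $R_m^M = F_m^M \times G_m^M$ rectangles from $\sigmaalg_1 \otimes \sigmaalg_2$.

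The main obstacle — and the reason Lemma~\ref{lemma:finite-approx} is stated as a preliminary — is making sure that the approximating simple functions can be taken to live on \emph{rectangles} rather than on arbitrary elements of the product $\sigma$-algebra. A generic simple function approximation of $f$ would use sets in $\sigmaalg_1 \otimes \sigmaalg_2$, which need not be rectangles; the content of Lemma~\ref{lemma:finite-approx} (applied with the semi-ring of rectangles and the Carath\'eodory construction of $\pi_1 \otimes \pi_2$) is precisely that each such set is, up to arbitrarily small measure, a finite disjoint union of rectangles, so the reduction to rectangle-supported simple functions costs only an $L^1$ error that can be absorbed into the $L^1$-convergence step. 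Everything else — the $L^1 \Rightarrow$ a.e.\ along a subsequence step and the a.e.\ $\Rightarrow$ uniform-off-a-small-set step — is the routine finite-measure toolkit (dominated convergence and Egorov), and I would not belabour it.
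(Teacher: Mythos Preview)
Your proposal is correct and follows the same overall strategy as the paper: use Lemma~\ref{lemma:finite-approx} to replace product-measurable sets by finite unions of rectangles, obtain almost-everywhere convergence of rectangle-supported simple functions, then apply Egorov. The paper's execution is a little more direct than your $L^1$-then-subsequence route: it starts from the standard pointwise approximation $f_M = \sum_m c_m^M \1_{E_m^M}$ (for $f \ge 0$, else split into positive and negative parts), replaces each $E_m^M$ by a rectangle cover $A_m^M$ with summable tolerance $\epsilon M^{-1} 2^{-M-1}$, and collects all discrepancies $A_m^M \setminus E_m^M$ into a single bad set $B_1$ of measure at most $\epsilon/2$, off which the rectangle versions coincide \emph{exactly} with the $f_M$; Egorov then handles the remaining $\epsilon/2$. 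This avoids any appeal to $L^1$ convergence and hence does not need $f$ to be integrable, whereas your argument tacitly uses $f \in L^1(\pi_1\otimes\pi_2)$ --- harmless in the paper's application (bounded test functions, finite measure), but worth noting since the lemma as stated only assumes measurability.
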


\begin{proof}
Assume $f \ge 0$ (if not, apply this argument to the positive and negative
parts of $f$)
then there exists simple functions $f_M$ that converge almost everywhere to $f$:
\begin{align*}
\lim_{M\to\infty} f_M &= f \;\;\;\textrm{(a.s.)} \\
f_M &= \sum_{m=1}^M c_m^M \1_{E_m^M}, \\
E_m^M &\in \sigmaalg_1 \otimes \sigmaalg_2
\end{align*}
Next, we apply Lemma~\ref{lemma:finite-approx} to each $E_m^M$, with $\mu = \pi_1 \otimes \pi_2$. We set an exponentially decreasing tolerance $\epsilon_m^M = \epsilon M^{-1} 2^{-M-1}$ so that:
\begin{align*}
\mu(A_m^M) &\le \mu(E_m^M) + \epsilon M^{-1} 2^{-M-1}, \\
A_m^M &= \bigcup_{j=1}^{n_m^M} R_{m,j}^M \supset E_m^M.
\end{align*}
Note that outside of some bad set $B_1$, we now have pointwise convergence of simple functions on rectangles (since each $A_m^M$ is a disjoint union of rectangles):
\begin{align} \label{eq:pointwise-conv}
\lim_{M\to\infty} &\sum_{m=1}^M c_m^M \1_{A_m^M} \Big|_{\compl{B_1}} = f\Big|_{\compl{B_1}} \;\;\;\textrm{(a.s.)} \\
B_1 &= \bigcup_{M=1}^\infty \bigcup_{m=1}^M (A_m^M \backslash E_m^M). \nonumber
\end{align}
Note that:
\begin{align*}
\pi_1 \otimes \pi_2(B_1) &\le  \sum_{M=1}^\infty \sum_{m=1}^M \epsilon M^{-1} 2^{-M-1} \\
&= \epsilon \sum_{M=1}^\infty 2^{-M-1} = \frac{\epsilon}{2}.
\end{align*}
Finally, pointwise convergence almost everywhere in Equation~(\ref{eq:pointwise-conv}) implies by Egorov's theorem the existence of a set $B_2$ with $\pi_1 \otimes \pi_2(B_2) < \epsilon/2$ and such that convergence is uniform outside of $B = B_1 \cup B_2$.
\end{proof}

\begin{lemma}\label{lemma:bounds}
Suppose $x < y + \beta_2$.
Then $\P(|X - x| > \beta_1) < \alpha$ implies that ${\P(X > \beta_1 + \beta_2 + y) < \alpha}$.
\end{lemma}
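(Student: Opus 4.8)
The plan is a one-line set-inclusion argument combined with monotonicity of probability. First I would unpack what the complement of the event $\{|X-x|>\beta_1\}$ says: on that complement we have $|X-x|\le\beta_1$, hence $X\le x+\beta_1$. Using the hypothesis $x<y+\beta_2$ we then get $X\le x+\beta_1<y+\beta_2+\beta_1$, so in particular $X$ does \emph{not} exceed $\beta_1+\beta_2+y$. Contrapositively, this shows the inclusion of events
\[
\{X>\beta_1+\beta_2+y\}\ \subseteq\ \{|X-x|>\beta_1\}.
\]

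Second, I would apply monotonicity of $\P$ to this inclusion and invoke the assumed bound:
\[
\P(X>\beta_1+\beta_2+y)\ \le\ \P(|X-x|>\beta_1)\ <\ \alpha,
\]
which is exactly the claimed conclusion.

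There is essentially no obstacle here: the only thing to be careful about is the direction of the strict/non-strict inequalities (the hypothesis $x<y+\beta_2$ is strict, the deviation event is $|X-x|>\beta_1$ strict, and the conclusion event $X>\beta_1+\beta_2+y$ is strict), but the chain above respects all of them since a non-strict bound followed by a strict one yields a strict bound. So the whole argument fits in two short displayed lines and needs no measure-theoretic machinery beyond monotonicity of a probability measure.
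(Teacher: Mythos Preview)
Your proof is correct and is essentially identical to the paper's own argument: both establish the event inclusion $\{X>\beta_1+\beta_2+y\}\subseteq\{|X-x|>\beta_1\}$ via the contrapositive chain $|X-x|\le\beta_1\Rightarrow X\le x+\beta_1<y+\beta_1+\beta_2$, and then apply monotonicity of $\P$ together with the assumed bound. If anything, your handling of the strict versus non-strict inequalities is slightly more careful than the paper's, which states a strict inequality between the two probabilities where only $\le$ follows from the inclusion (the strictness coming only after invoking $<\alpha$).
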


\begin{proof}
We prove that $\P(X > \beta_1 + \beta_2 + y) < \P(|X - x| > \beta_1)$ via the contrapositive on the events:
$|X - x| \le \beta_1 \Longrightarrow X - x \le \beta_1 \Longrightarrow X \le \beta_1 + \beta_2 + y.$
%\begin{align*}
%|X - x| \le \beta_1 &\Longrightarrow X - x \le \beta_1 \\
%&\Longrightarrow X \le \beta_1 + \beta_2 + y.
%\end{align*}
\end{proof}

\begin{proof}[Proof of Proposition~\ref{prop:consistent-appendix}]
Let us label the nodes of the tree using a \emph{height function} $h$ defined to be equal to zero at the leaves, and to $h(t) = 1 +
\max\{h(t_\tleft), h(t_\tright)\}$ for the internal nodes.

We proceed by induction on $h = 0, 1, 2, \dots$, showing that for all $t$ such
that $h(t) \le h$, the normalized weighted particles $(\x_{t,N}^i, \nw_{t,N}^i)$ obtained from \RMC($t$) are consistent with respect to $(\pi,
\setC_t)$.
The base case it trivially true. Suppose $t$ is one of the subtrees such that $h(t) = h$. Note that its two children $t_\tleft$ and $t_\tright$ are such that $h(t_c) < h(t)$, so the induction hypothesis implies these two children populations of weighted particles $(\x_{\tleft,N}^i, \nw_{\tleft,N}^i), (\x_{\tright,N}^i, \nw_{\tright,N}^i)$ are consistent. We now show that we adapt Theorem~1 of DM08 to establish that the weighted particles $(\x_{t,N}^i, \nw_{t,N}^i)$ are consistent as well.

Note that for each simple $f^M$ defined as in the proof of Lemma~\ref{lemma:uniform-approx}, we have:
\begin{align}\label{eq:conv-of-rectangles}
\sum_{i=1}^N \sum_{j=1}^N & \nw^i_{\tleft,N} \nw^j_{\tright,N} f^M(\x^i_{\tleft,N}, \x^i_{\tright,N}) \nonumber \\
&= \sum_{m=1}^{M} \left( \sum_{i=1}^N \w^i_{\tleft,N} \1_{A_{m}^M}(\x^i_{\tleft,N}) \right) \left( \sum_{j=1}^N \w^j_{\tright,N} \1_{B_{m}^M}(\x^j_{\tright,N}) \right) \nonumber \\
&\pcv \sum_{m=1}^{M} \left( \int \1_{A_{m}^M}(\x_\tleft) \pi_\tleft(\ud \x_\tleft) \right) \left( \int \1_{B_{m}^M}(\x_\tright)  \pi_\tright(\ud \x_\tright) \right) \nonumber \\
&= \iint f^M(\x_\tleft, \x_\tright) (\pi_\tleft \otimes \pi_\tright)(\ud \x_\tleft \times \ud \x_\tright).
\end{align}

Next, we show that this convergence in probability can be lifted from simple
$f^M$ to general bounded $\mathcal{F}_\tleft\otimes\mathcal{F}_\tright$-measurable functions. To shorten the notation, let:
\begin{align*}
\mu_A(f) &= \pi_\tleft \otimes \pi_\tright(\1_A f) \\
\mu^N_A(f) &= \sum_{i=1}^N \sum_{j=1}^N \nw^i_{\tleft,N} \nw^j_{\tright,N} \1_A f(\x^i_{\tleft,N}, \x^i_{\tright,N}),
\end{align*}(and $\mu(f), \mu^N(f)$ are defined similarly but without the indicator restriction).

Let $\epsilon, \delta > 0$ be given. Using the result of Equation~(\ref{eq:conv-of-rectangles}), first pick $\Nbound > 0$ such that for all $N \ge \Nbound$, $\P(|\mu^N_{\compl{B}}(f^M) - \mu_{\compl{B}}(f^M)| > \epsilon) < \delta/2$.
Second, using Lemma~\ref{lemma:uniform-approx} pick $B \in \sigmaalg_\tleft \otimes \sigmaalg_\tright$ and $M > 0$  such that $\sup_{\x\notin B}|f^M(\x) - f(\x)| < \epsilon/C$ and $\mu(B) < \epsilon/C$. This implies that both $|\mu_{\compl{B}}(f^M) - \mu_{\compl{B}}(f)|$ and $|\mu^N_{\compl{B}}(f^M) - \mu^N_{\compl{B}}(f)|$ are bounded above by $\epsilon$.
Third, using Lemma~\ref{lemma:finite-approx}, pick a cover $A$ of $B$, composed of a union of rectangles and such that $\mu(A) < \mu(B) + \epsilon/C$. Using Equation~(\ref{eq:conv-of-rectangles}) again, pick $\Nbound' > 0$ such that for all $N \ge \Nbound'$, $\P(|\mu^N(A) - \mu(A)| > \epsilon/C) < \delta/2$. Applying Lemma~\ref{lemma:bounds} with $X = \mu^N(A), \beta_1 = \beta_2 = \epsilon/C$, $x = \mu(A)$, $\alpha = \delta/2$ and $y = \mu(B)$, we get $\P(\mu^N(A) > 3\epsilon/C) < \delta/2$.

From these choices, we obtain that for all $N > \max\{\Nbound, \Nbound'\}$:
\begin{align*}
\P(|\mu^N(f) - \mu(f)| > 4\epsilon) &\le \P\left(\mu_B^N(f) > 4 \epsilon\right) + \P\left( D_1 + D_2 + D_3 + \mu_B(f) > 4 \epsilon \right),
\end{align*}where:
\begin{align*}
D_1 &= |\mu^N_{\compl{B}}(f) - \mu^N_{\compl{B}}(f^M)| \\
D_2 &= |\mu^N_{\compl{B}}(f^M) - \mu_{\compl{B}}(f^M) | \\
D_3 &= |\mu_{\compl{B}}(f^M) - \mu_{\compl{B}}(f)|.
\end{align*}Therefore:
\begin{align*}
\P(|\mu^N(f) - \mu(f)| > 4\epsilon)  &\le \P(\mu^N(B) > 4\epsilon/C) + \P(D_2 > \epsilon) \\
&\le  \P(\mu^N(A) > 3\epsilon/C) + \delta/2 \\
&\le \delta.
\end{align*}

Next, we use the fact that resampling is performed at every iteration, Condition~\ref{cond:resampling}, and Theorem~3 from DM08, to view resampling as reducing the $N^2$ particles into $N$ unweighted particles. We plug in the following quantities in their notation:
\begin{align*}
M_N &= N^2\\
\widetilde M_N &= N\\
\xi_{N,(i,j)} &= (\x^i_{\tleft,N}, \x^j_{\tright,N})\\
\omega_{N,(i,j)} &= \nw^i_{\tleft,N} \nw^j_{\tright,N}.
\end{align*}
This yields that the $N$ particles obtained from resampling $N$ time from the particle approximation
of $\pi_\tleft \times \pi_\tright$ creates a consistent approximation.
Finally, Theorem~1 from DM08 closes the induction argument.
\end{proof}

%%% Local Variables:
%%% mode: latex
%%% TeX-master: "dc-smc.tex"
%%% End:

\section{\dcsmc algorithm}\label{app:algo}%
In Algorithm~\ref{alg:dcsmc-general} we provide pseudo-code for a more general \dcsmc algorithm than what is given in
Algorithm~\ref{alg:dcsir}, specifically, incorporating mixture
sampling (see Section~\ref{sec:extensions:mixturesampling}) and tempering (see Section~\ref{sec:tempering}).
\begin{algorithm}[ptb]\singlespace
  \caption{$\RMC(t)$ -- using mixture sampling and tempering}
  \label{alg:dcsmc-general}
  \begin{enumerate}
  \item
    \begin{enumerate}
    \item  For $c\in\C(t)$, $(\{\x_c^i, \w_c^i \}_{i=1}^\Np, \widehat Z_c^\Np ) \gets \RMC(c)$.
    \item For $i=1$ to $\Np$, draw $(\rx_{c_1}^i, \dots, \rx_{c_C}^i)$ from
      \begin{align*}
        Q_{t}(\rmd\x_{c_1}, \dots, \rmd\x_{c_C} ) =
        \sum_{i_1=1}^{\Np} \ldots \sum_{i_C=1}^{\Np}
        \frac{v_{t}(i_1,\ldots,i_C)  \delta_{(\x_{c_1}^{i_1},\ldots,\x_{c_C}^{i_C})} ( \rmd\x_{c_1}, \dots, \rmd\x_{c_C} )}{
          \sum_{j_1=1}^{\Np} \ldots \sum_{j_C=1}^{\Np}  v_{t}(j_1,\ldots,j_C)
        },
      \end{align*}
      where
      \begin{align*}
        v_{t}(i_1,\ldots,i_C) =  \left( \prod_{c\in\C(t)} \w_c^{i_c}
        \right)\check\pi_t(\x_{c_1}^{i_1},\ldots,\x_{c_C}^{i_C}) \bigg/ \prod_{c\in\C(t)}
        \pi_{c} (\x_c^{i_c}),
      \end{align*}
      and where $\check\pi_t$ is chosen by the user (\eg, according to \eqref{eq:extensions:alphastar}).
    \item Compute
      \begin{align*}
        \widehat Z_t^\Np = \left( \prod_{c\in\C(t)} \widehat Z_c^\Np  \left\{\frac{1}{\Np} \sum_{i=1}^\Np \w_{c}^i\right\}^{-1} \right)
\left\{ \frac{1}{\Np^C} \sum_{i_1=1}^{\Np} \ldots \sum_{i_C=1}^{\Np} v_{t}(i_1,\ldots,i_C) \right\}.
    \end{align*}
  \end{enumerate}
\item\begin{enumerate}
    \item For $i=1$ to $\Np$, if $\tilde \setX_t \neq \emptyset$, simulate $\widetilde\x_t^i \sim q_t(\cdot \mid \rx_{c_1}^i, \dots, \rx_{c_C}^i)$ where $(c_1, c_2, \dots, c_C) = \C(t)$; else $\widetilde\x_t^i \gets \emptyset$.
    \item For $i=1$ to $\Np$, set $\x_{t,0}^i = (\rx_{c_1}^i, \dots, \rx_{c_C}^i, \ix_t^i)$ and $\w_{t,0}^i = 1$.
    %\item Initialise $\widehat Z_t^\Np = 1$.
    \item For SMC sampler iteration $\jj=1$ to $\nn_t$: (N.B. $\gamma_{t,0} = \check \pi_t q_t$.)
      \begin{enumerate}
      \item For $i=1$ to $\Np$, compute
        $ \w_{t,\jj}^i = \w_{t,\jj-1}^i \gamma_{t,\jj}(\x_{t,\jj-1}^i) / \gamma_{t,\jj-1}(\x_{t,\jj-1}^i) $.
      \item Optionally, resample $\{ \x_{t,\jj-1}^i, \w_{t,\jj}^i \}_{i=1}^\Np$:
        \begin{enumerate}
        \item[-] Update $\widehat Z_{t}^\Np \gets \widehat Z_t^\Np \times \{\frac{1}{\Np} \sum_{i=1}^\Np \w_{t,j}^i\}$.
        \item[-] Override the notation $\{ \x_{t,\jj-1}^i, \w_{t,\jj}^i \}_{i=1}^\Np$
        to refer to the resampled particle system.
        \end{enumerate}
      \item For $i=1$ to $\Np$, draw
        \( \x_{t,\jj}^i \sim K_{t,\jj}(\x_{t,\jj-1}^i ,\cdot) \) using a $\pi_{t,\jj}$-reversible Markov kernel $K_{t,\jj}$.
      \end{enumerate}
    \item For $i=1$ to $\Np$, set $\x_t^i = \x_{t,\nn_t}^i$ and $\w_t^i = \w_{t,\nn_t}^i$.
    %\item For $i=1$ to $\Np$,  compute
    %  $ \w_{t}^i = \w_{t,\nn_t-1}^i \gamma_{t}(\x_{t,\nn_t-1}^i) / \gamma_{t,\nn_t-1}(\x_{t,\nn_t-1}^i) $
    %  and set $\x_t^i = \x_{t,\nn_t-1}^i$.
    \end{enumerate}
  \item Update $\widehat Z_t^\Np \gets \widehat Z_t^\Np \times \{\frac{1}{\Np} \sum_{i=1}^\Np \w_{t}^i\}$.
  \item Return $( \{ \x_t^i, \w_t^i \}_{i=1}^\Np, \widehat Z_t^\Np)$.
  \end{enumerate}
\end{algorithm}

\section{Supplement on the square-lattice \mrf models} \label{app:mrf}
\subsection{Additional numerical results for the Ising model}
In this section we provide some additional numerical results for the Ising model presented in Section~\ref{sec:expts:mrf} of
the main text. First, in Figures~\ref{fig:app:ising-extra2-logNC}--\ref{fig:app:ising-extra2-E} we repeat the results shown in Figure~\ref{fig:expts:ising-E},
but complemented with the estimates for \dcsmc[\sir] and \dcsmc (mix).

\begin{figure}[ptb]
  \centering
  \includegraphics[width=0.8\columnwidth]{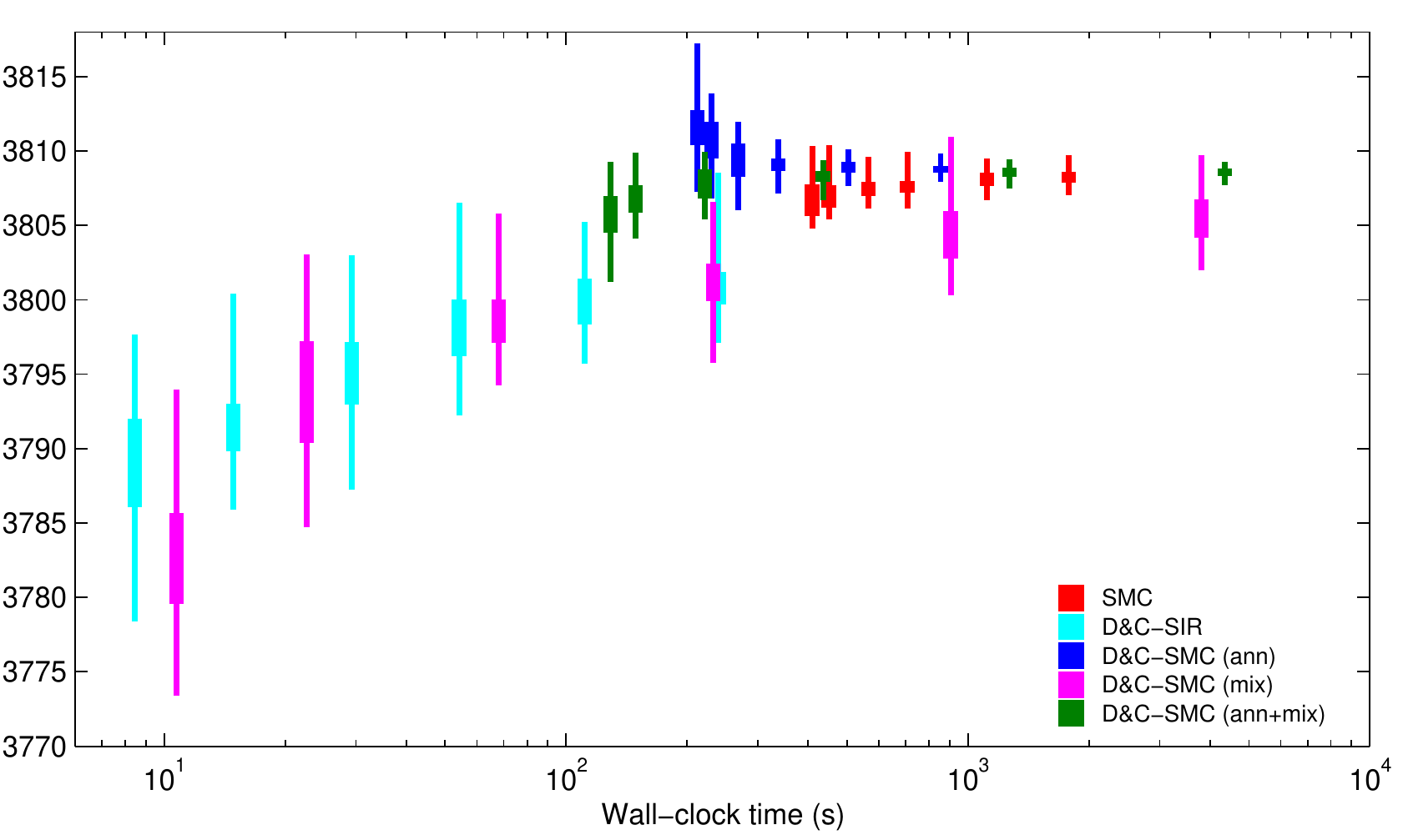}%
  \caption{Box-plots (min, max, and inter-quartile) of estimates of $\log Z$ for the Ising model with $\beta = 0.44$ over 50 runs of each sampler
    (excluding single flip MH which does not readily
    provide an estimate of $\log Z$). The boxes, as plotted from left to right, correspond to increasing number of particles $\Np$.}
  \label{fig:app:ising-extra2-logNC}
  \vspace{\baselineskip}
  \includegraphics[width=0.8\columnwidth]{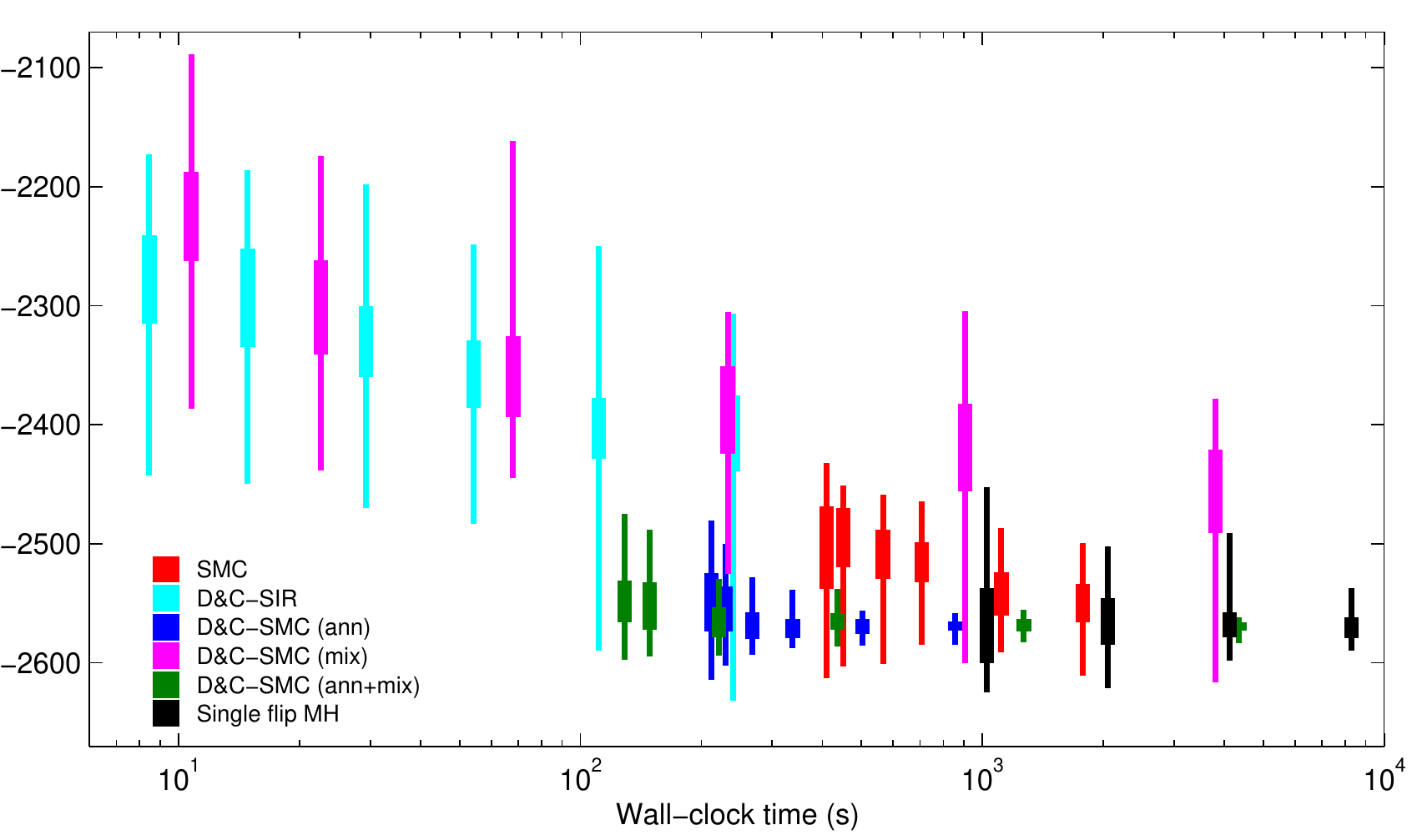}%
  \caption{Box-plots (min, max, and inter-quartile) of estimates of $\E[E(\x)]$ for the Ising model with $\beta = 0.44$ over 50 runs of each sampler.
    The boxes, as plotted from left to right, correspond to increasing number of particles $\Np$ (or number of MCMC iterations for single flip MH).}
  \label{fig:app:ising-extra2-E}
\end{figure}

Second, we have repeated the numerical evaluation described in Section~\ref{sec:expts:mrf} of
the main text for different inverse temperatures of the Ising model: $\beta = 0.40$ (90\% of critical inverse temperature) and $\beta = 0.48$
(110\% of critical inverse temperature). The results are given in Figures~\ref{fig:app:ising-extra1-logNC}--\ref{fig:app:ising-extra1-E}
and Figures~\ref{fig:app:ising-extra3-logNC}--\ref{fig:app:ising-extra3-E}, respectively.

\begin{figure}[ptb]
  \centering
  \includegraphics[width=0.8\columnwidth]{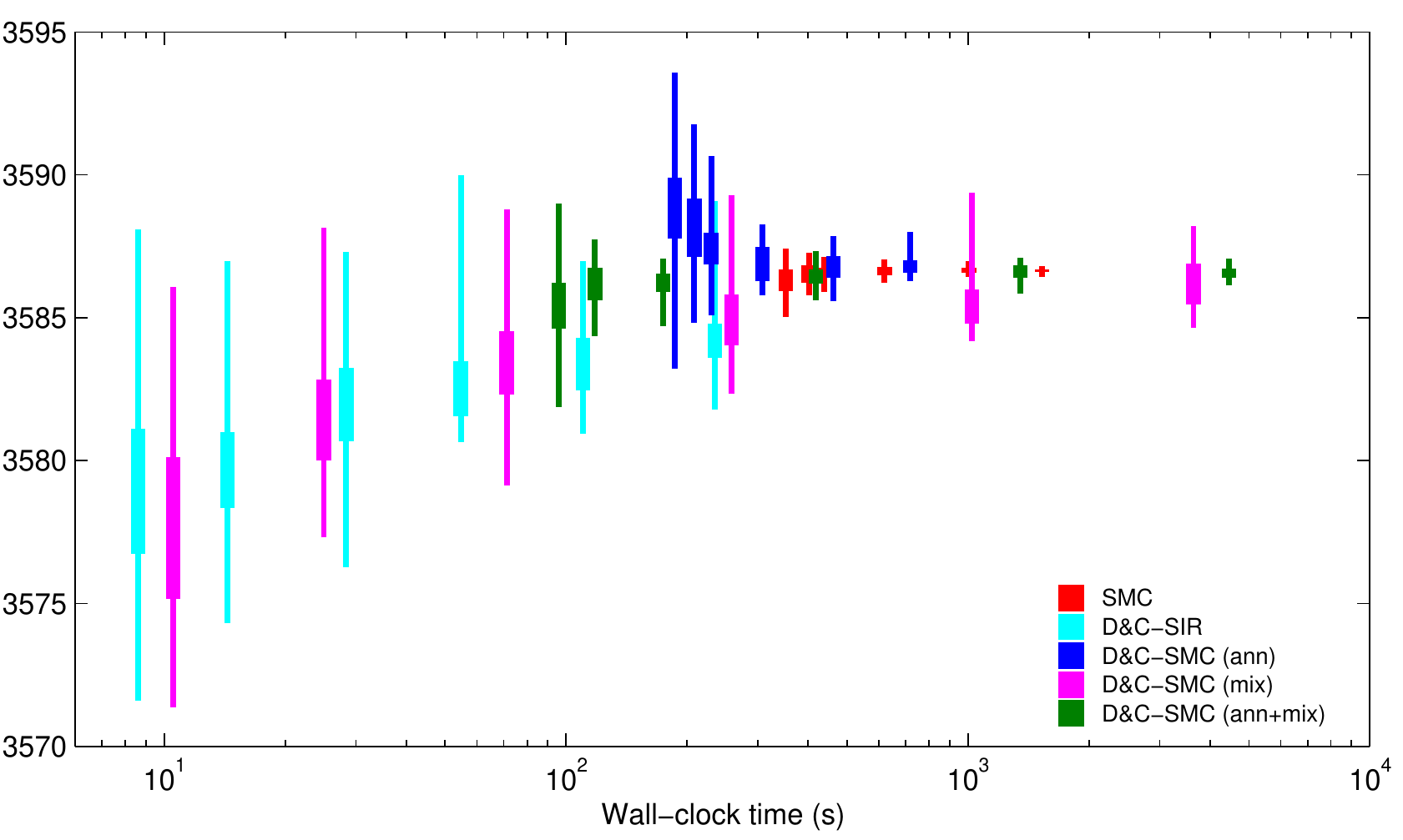}%
  \caption{Box-plots (min, max, and inter-quartile) of estimates of $\log Z$ for the Ising model with $\beta = 0.40$ over 50 runs of each sampler
    (excluding single flip MH which does not readily
    provide an estimate of $\log Z$). The boxes, as plotted from left to right, correspond to increasing number of particles $\Np$.
    }%The overlaid axes is a zoom-in on the dashed region.}
  \label{fig:app:ising-extra1-logNC}
  \vspace{\baselineskip}
  \includegraphics[width=0.8\columnwidth]{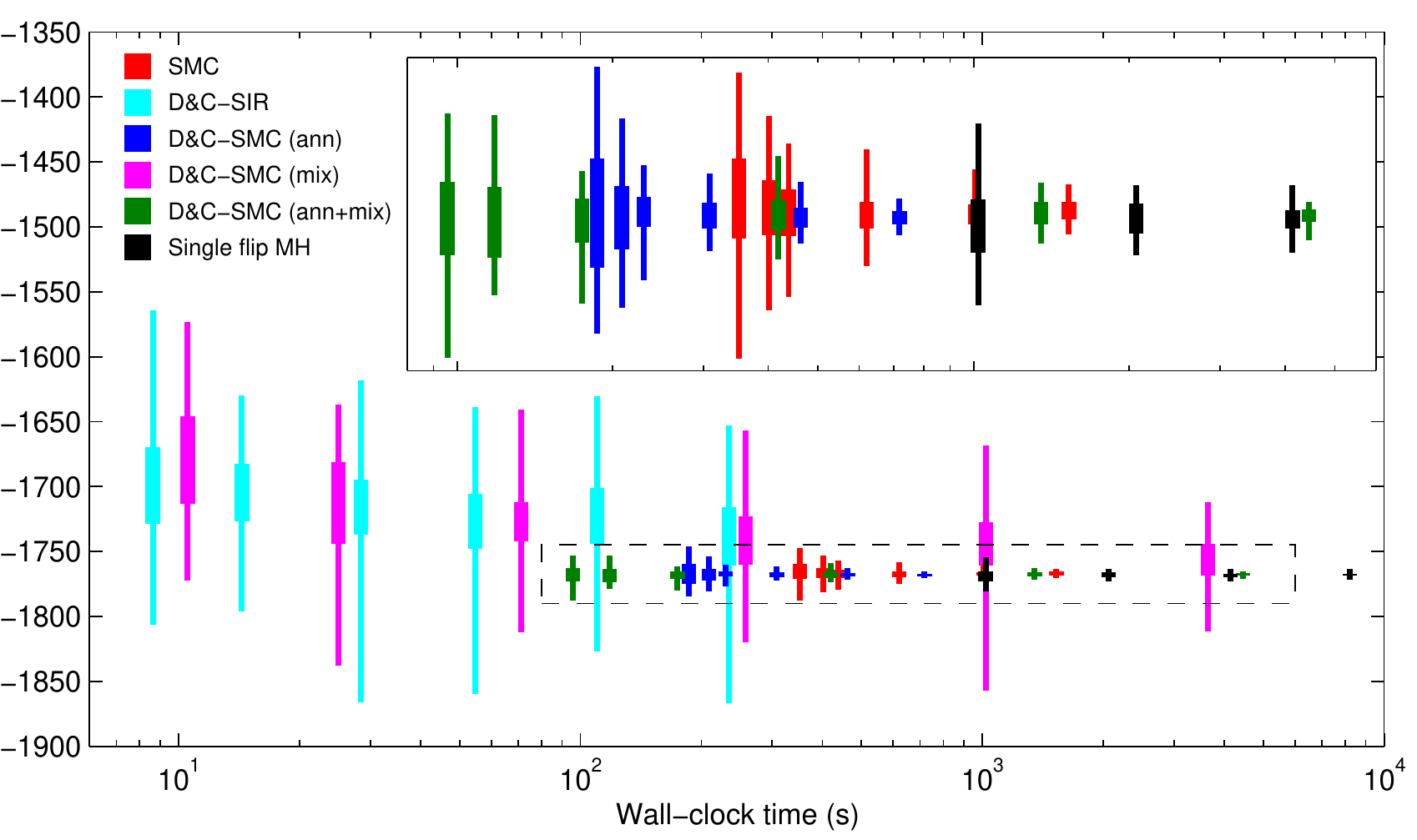}%
  \caption{Box-plots (min, max, and inter-quartile) of estimates of $\E[E(\x)]$ for the Ising model with $\beta = 0.40$ over 50 runs of each sampler.
    The boxes, as plotted from left to right, correspond to increasing number of particles $\Np$ (or number of MCMC iterations for single flip MH).
    The overlaid axes is a zoom-in on the dashed region.}
  \label{fig:app:ising-extra1-E}
\end{figure}

\begin{figure}[ptb]
  \centering
  \includegraphics[width=0.8\columnwidth]{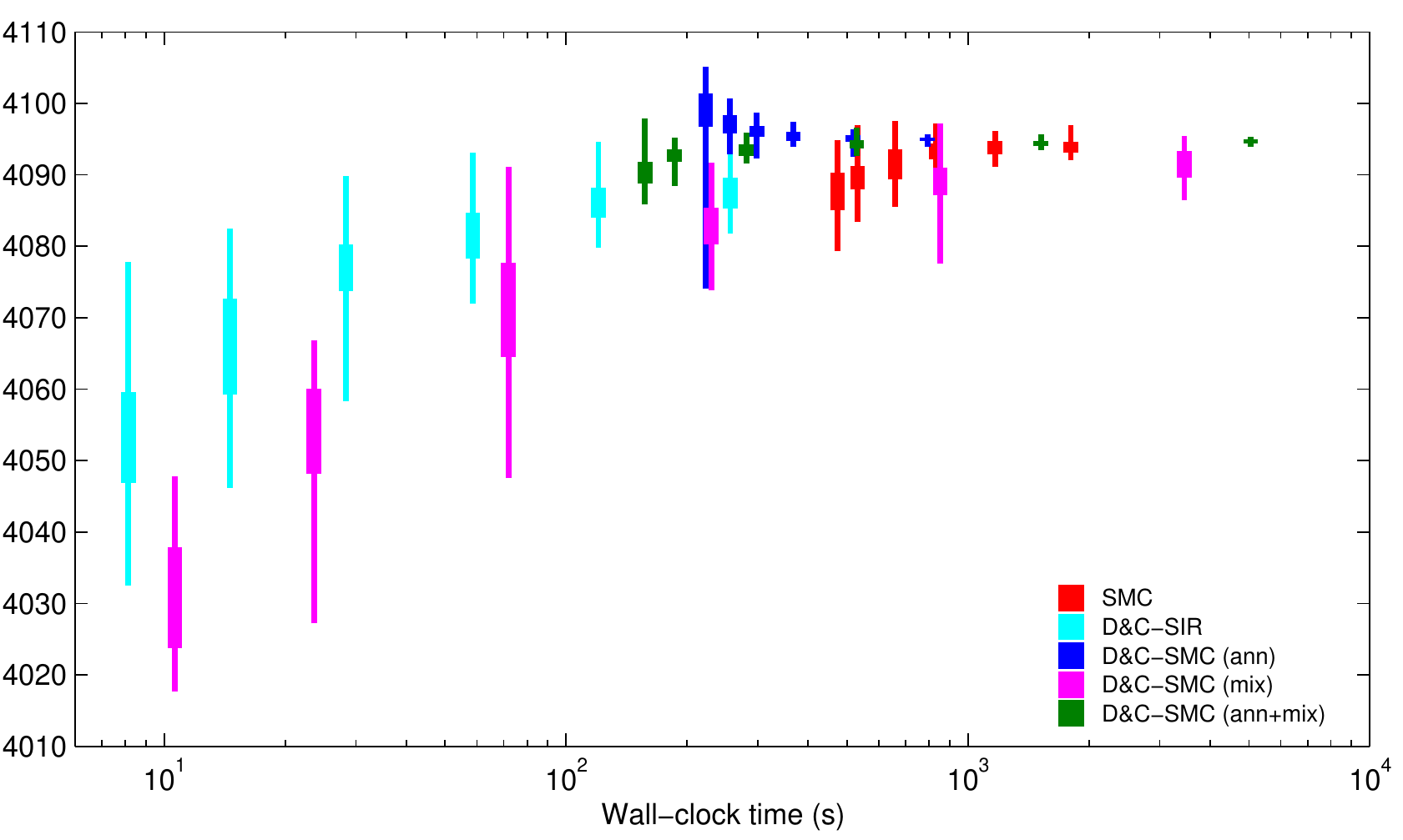}%
  \caption{Box-plots (min, max, and inter-quartile) of estimates of $\log Z$ for the Ising model with $\beta = 0.48$ over 50 runs of each sampler
    (excluding single flip MH which does not readily
    provide an estimate of $\log Z$). The boxes, as plotted from left to right, correspond to increasing number of particles $\Np$.
    }%The overlaid axes is a zoom-in on the dashed region.}
  \label{fig:app:ising-extra3-logNC}
  \vspace{\baselineskip}
  \includegraphics[width=0.8\columnwidth]{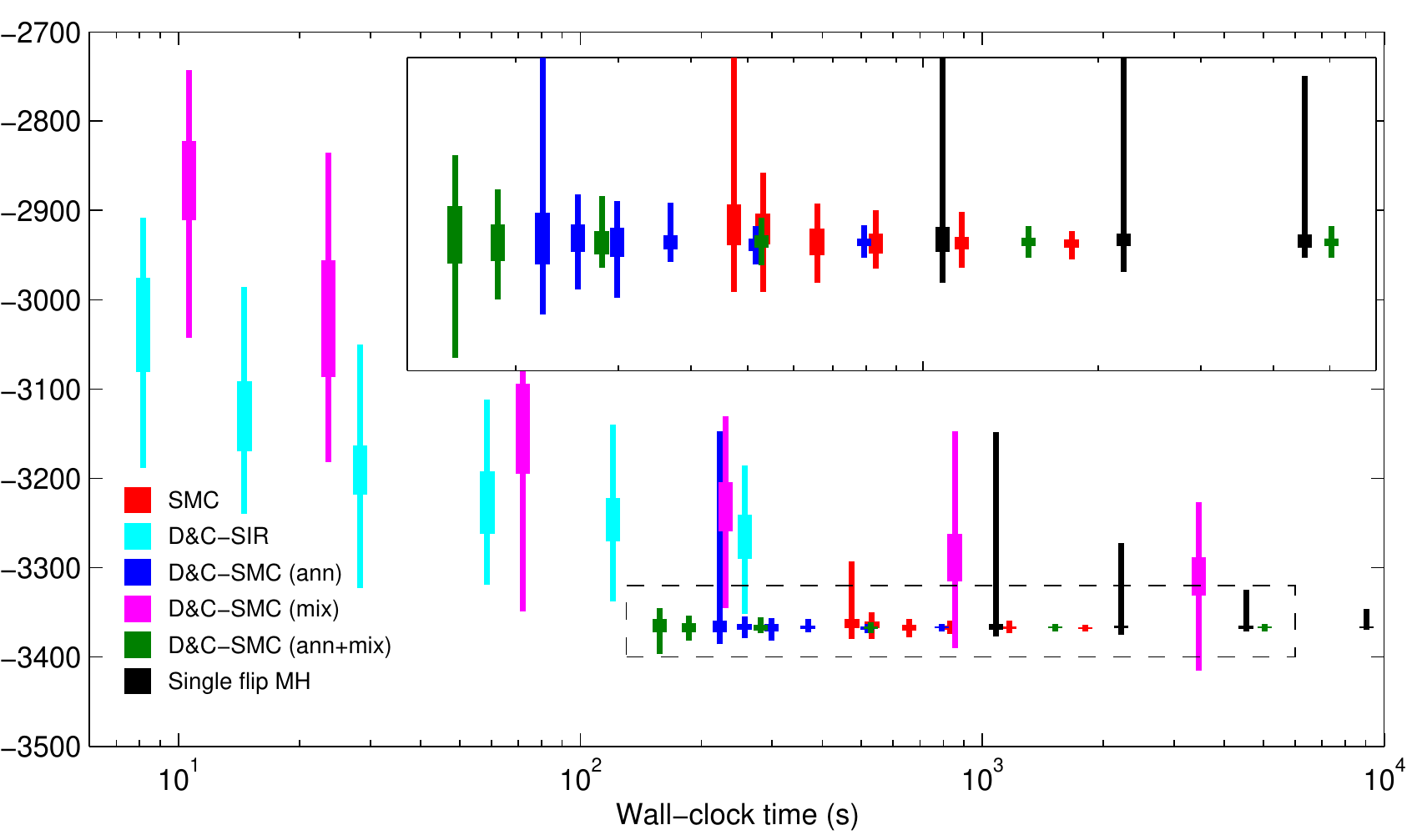}%
  \caption{Box-plots (min, max, and inter-quartile) of estimates of $\E[E(\x)]$ for the Ising model with $\beta = 0.48$ over 50 runs of each sampler.
    The boxes, as plotted from left to right, correspond to increasing number of particles $\Np$ (or number of MCMC iterations for single flip MH).
    The overlaid axes is a zoom-in on the dashed region.}
  \label{fig:app:ising-extra3-E}
\end{figure}

%%%%%%%%%%%%%%%%%%%%%%%%%%%%%%%

\subsection{Square-lattice \mrf with continuous latent variables}
In this section we evaluate the proposed \dcsmc method on a second square-lattice \mrf example.
We consider a toy-model consisting of a latent Gaussian field $\z \in \reals^{M\times M}$ ($M=32$),
with nearest-neighbour interactions and periodic boundary conditions:
\begin{align}
  \label{app:mrf2:prior}
 p(\z) \propto \exp\left( -\frac{1}{2}\left \{ \sum_{(k,\ell) \in \mathcal{E}} \lambda_{1} (x_k - x_{\ell})^2 + \lambda_2 \sum_{k \in \mathcal{V}} x_k^2 \right\} \right),
\end{align}
with $\mathcal{E}$ and $\mathcal{V}$ being the edge set and vertex set of the \mrf, respectively. We set $\lambda_1 = 10$
and $\lambda_2 = 0.01$ for the interaction strength and node potential, respectively.
To each node of the \mrf we also attach an observed variable $y_k$, conditionally distributed according to
$p(y_{k} \mid x_{k}) = \N(y_{k} \mid x_{k}^2, 0.05^2)$, $k\in\mathcal{V}$. The target distribution
for the samplers is then given by the Bayesian posterior distribution $p(\z \mid \mathbf{y})$, where $\mathbf{y} = \{y_k\}_{k\in \mathcal{V}}$.
%Note that we observe the square of the latent variables $x_k$ (plus noise), which results in a multimodal posterior.

We simulate a batch of data from the model and apply the same inference methods as used in Section~\ref{sec:expts:mrf} to sample
from the posterior distribution.
We initialise all methods by sampling independently for each $k\in\mathcal{V}$ from a distribution given by,
\begin{align}
  \label{app:mrf2:initialisation}
  \mu_k(x_k) \propto  N(y_{k} \mid x_{k}^2, 0.05^2) \exp\left(-\frac{\lambda_2}{2} x_k^2 \right),
\end{align}
which we compute to high precision by (one-dimensional) adaptive quadrature. This allows us to focus the evaluation on the
difficulties arising from the interactions between neighbouring sites, since \eqref{app:mrf2:initialisation}
is exactly the posterior distribution of $x_k$ if we neglect all interactions.
Note that we observe the square of the latent variables $x_k$ (plus noise).
Consequently, the distributions \eqref{app:mrf2:initialisation}, as well as the marginal posteriors $p(x_k \mid \mathbf{y})$,
tend to be bimodal whenever $|x_{k}|$ is large (relative to the observation variance). This poses significant difficulties for
the single site MH sampler and the standard \smc sampler, as we shall see below.

All algorithmic settings are the same as in Section~\ref{sec:expts:mrf}, with the exception that the MCMC kernel used for the annealed methods (and for the single site MH sampler)
uses a Gaussian random walk proposal with standard deviation 0.132 (chosen by manual tuning; we obtained an average acceptance rate of 0.6--0.7 for all methods).
It is interesting to note that we only needed to make small modification to the code used for the Ising model, changing only the initialisation procedure, the energy function, and the MCMC kernel.

Figures~\ref{fig:expts:mrf2-logNC} and \ref{fig:expts:mrf2-E} show results on the estimates of the log-normalising-constant
and the expected energy for the posterior distribution, obtained from 50 runs of each algorithm on a single data set.
The results for the single site random-walk MH sampler are not shown since the method fails completely to converge to samples
from the correct posterior distribution (inter-quartile range in the estimates of the expected energy over the 50 independent runs of the MCMC sampler
is more than \thsnd{7}; \cf, Figure~\ref{fig:expts:mrf2-E}).

We also note a superior performance of all \dcsmc samplers compared to \smc,
which is attributed to the fact that the \dcsmc samplers are better able to handle the multimodality of the
initial distributions than the ``standard'' \smc approach considered:
in standard \smc we initialise a single particle population by simulating $\Np$ times from the product measure
$\kronecker_{k\in\mathcal{V}} \mu_k(\rmd x_k)$, where $\mu_k$ is given by \eqref{app:mrf2:initialisation}.
Since many of the $\mu_k$'s are expected to be multimodal, this will likely result in
particles that have very low posterior probability under the ``correct model'' (\ie, with the interactions taken into account).
Indeed, any neighbouring pair $(x_k, x_\ell)$ where $x_k$ is drawn from the ``positive mode'' and $x_\ell$ is drawn from the
``negative mode'' will have low probability under the prior \eqref{app:mrf2:prior}.
Furthermore, since we use local MCMC moves to update the particle positions, the method is not able
to correct for this in a sufficiently efficient manner during the annealing process. The \dcsmc samplers
are much better suited to handling this difficulty, since they make use of \emph{multiple} particle populations at initialisation,
one for each site. The interactions are thereafter gradually taken into
account in the merge steps of the algorithm. This improvement comes without
any application-specific implementation effort.

\begin{figure}[ptb]
  \centering
  \includegraphics[width=0.85\columnwidth]{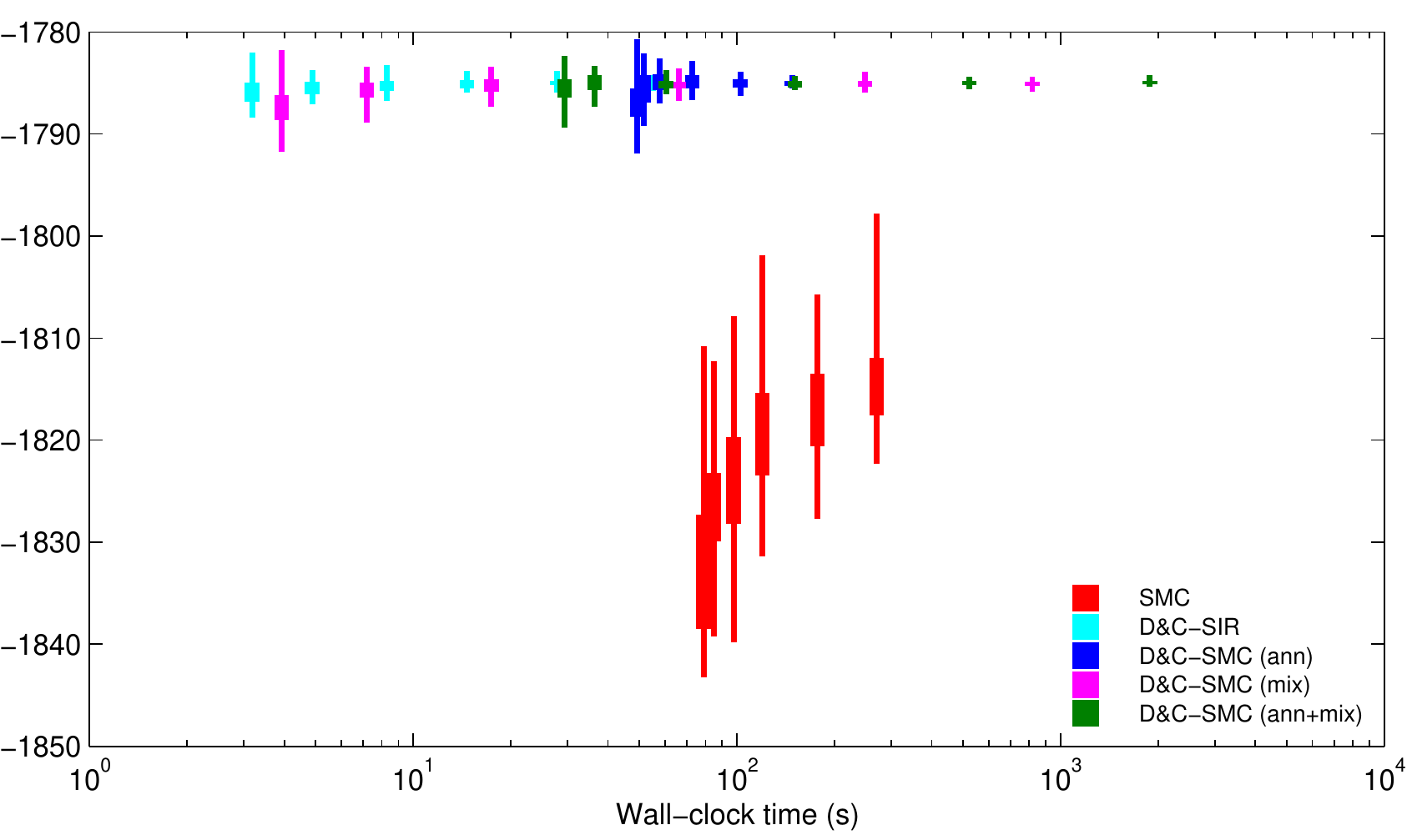}%
  \caption{{\footnotesize Box-plots of estimates of $\log Z$ over 50 runs of each sampler.
      The boxes, as plotted from left to right, correspond to increasing number of particles $\Np = 2^6$ to $2^{11}$ ($\Np = 2^{10}$ to $2^{15}$ for \dcsmc[\sir]).
      The overlaid axes is a zoom-in on the dashed region.}}
  \label{fig:expts:mrf2-logNC}
% \end{figure}%
% \begin{figure}[ptb]
%  \centering
  \vspace{\baselineskip}
  \includegraphics[width=0.85\columnwidth]{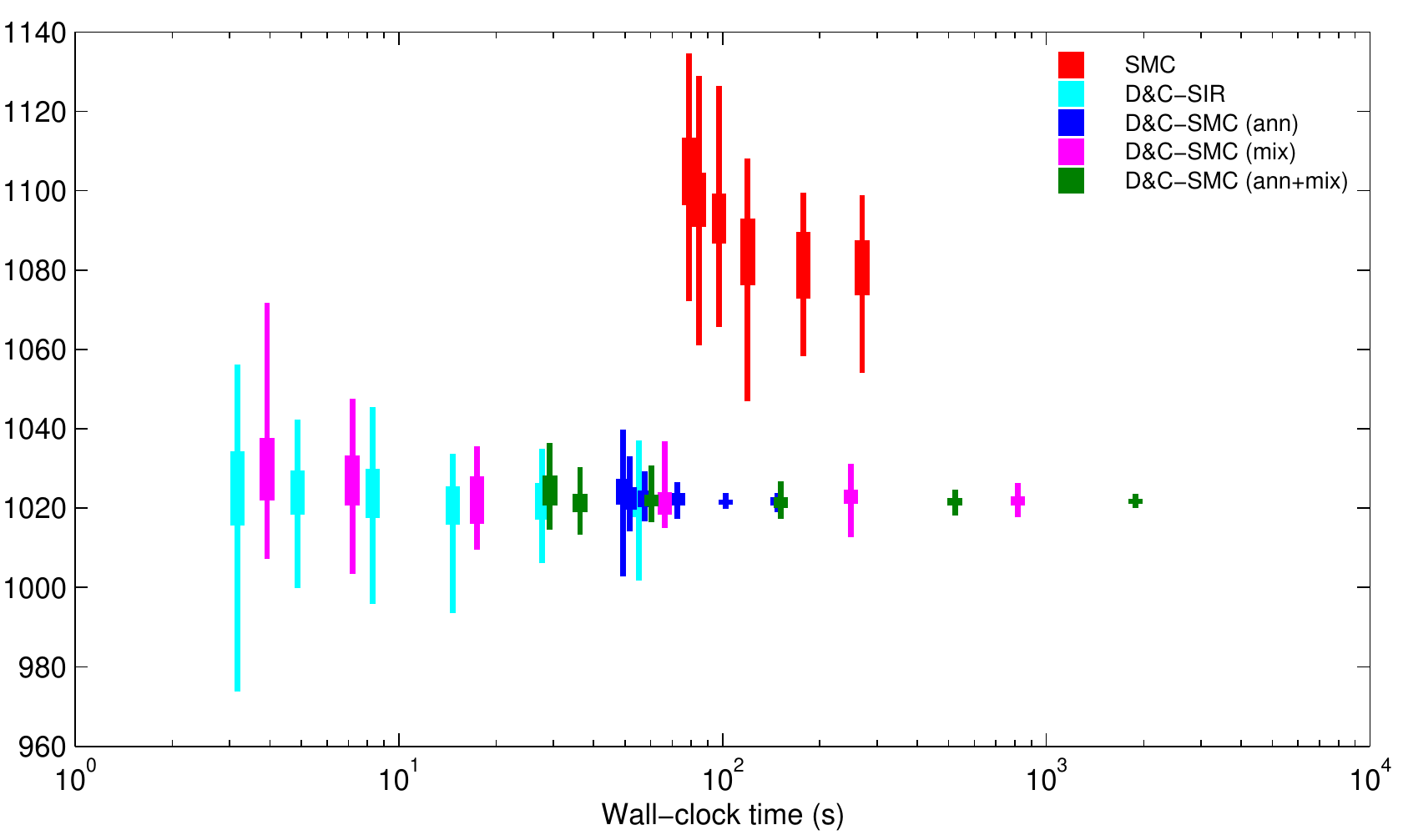}%
  \caption{{\footnotesize Box-plots of estimates of $\E[E(\x)]$ over 50 runs of each sampler.
      The boxes, as plotted from left to right, correspond to increasing number of particles $\Np = 2^6$ to $2^{11}$ ($\Np = 2^{10}$ to $2^{15}$ for \dcsmc[\sir]).
      The overlaid axes is a zoom-in on the dashed region.}}
  \label{fig:expts:mrf2-E}
\end{figure}

To further illustrate the ability of \dcsmc to capture the multimodality of the marginal posteriors,
we show in Figure~\ref{fig:app:mrf2-cdf} the empirical cumulative distribution functions for the samplers
for four sites with increasing $|x_{k}|$.
The results are for $\Np = 2\thinspace048$ ($\Np = 32\thinspace768$ for \dcsmc[\sir]) and each line correspond
to one of the 50 independent runs. It is clear that the \dcsmc samplers maintains the multimodality
of the posterior much better than standard \smc.

\begin{figure}[ptb]
  \centering
  \includegraphics[width=0.85\columnwidth]{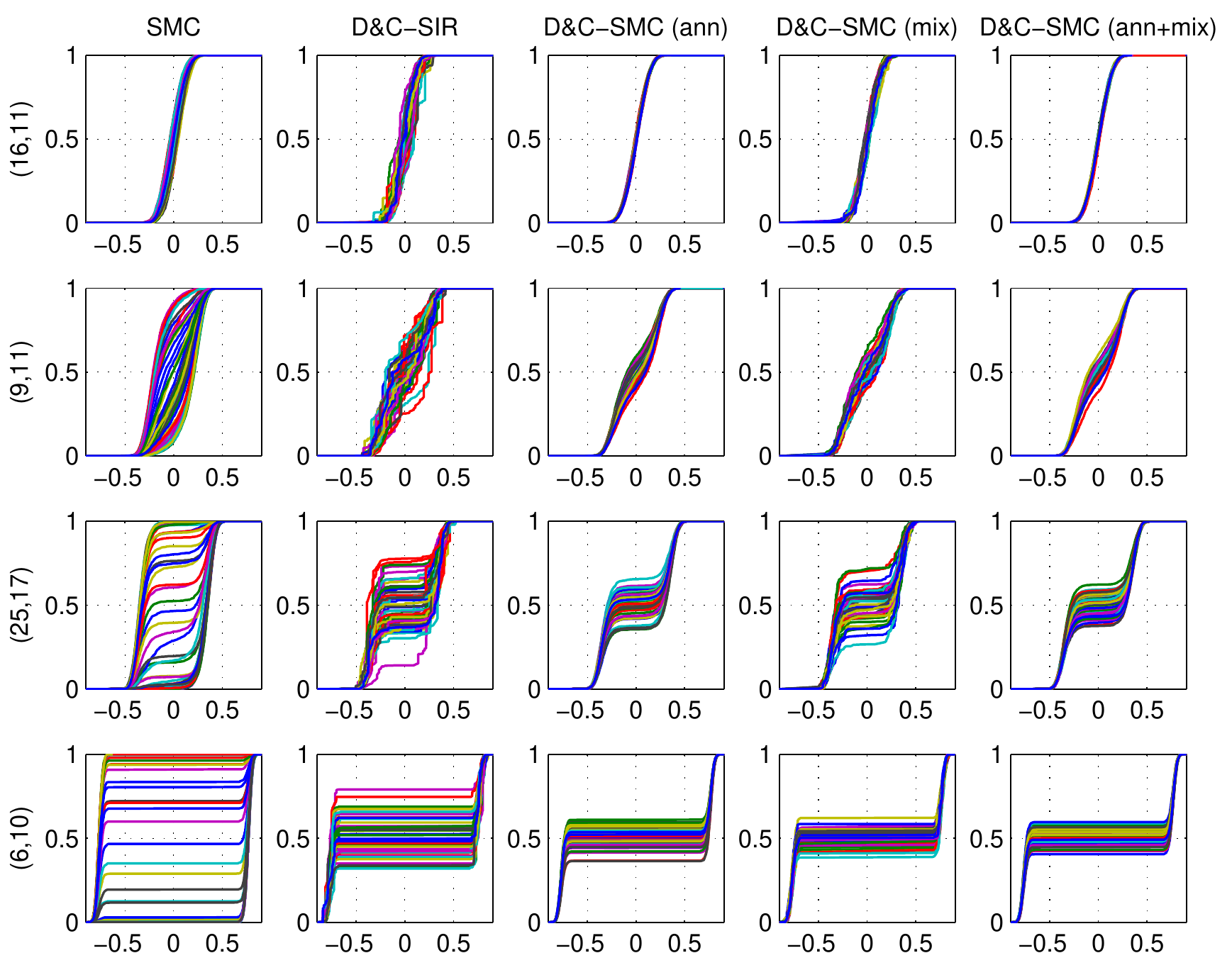}%
  \caption{Empirical cumulative distribution functions for the marginal posteriors $p(x_k \mid \mathbf{y})$ for four sites with increasing $|x_k|$ (from top to bottom). Each line correspond to one of the 50 independent runs of each algorithm.}
  \label{fig:app:mrf2-cdf}
\end{figure}

\section{Supplement on the hierarchical Bayesian model data analysis} \label{app:expts:dataAnalysis}

\subsection{Data pre-processing}

% Nodes per level: [4 : 2807.0, 3 : 710.0, 2 : 32.0, 1 : 5.0, 0 : 1.0]
% Total number of levels: 3555.0
% Total number of students: 278399

The data was downloaded from {\url{https://data.cityofnewyork.us/Education/NYS-Math-Test-Results-By-Grade-2006-2011-School-Le/jufi-gzgp}} on May 26, 2014. It contains New York City Results on the \emph{New York State Mathematics Tests}, Grades 3-8. We used data from grade 3 only.

Data is available for the years 2006--2011. We removed years 2010 and 2011, since the documentation attached to the above URL indicates that: ``Starting in 2010, NYSED changed the scale score required to meet each of the proficiency levels, increasing the number of questions students needed to answer correctly to meet proficiency.'' 

Each row in the dataset contains a school code, a year, a grade, the number of students tested, and summary statistics on their grades.
We use the last column of these summary statistics, which corresponds to the number of students that obtained a score higher than a fixed threshold. 

Moreover, for each school code, we were able to extract its school district. 
We removed the data from the schools in School District 75. This is motivated by the specialized character of School District 75: ``District 75 provides citywide educational, vocational, and behavior support programs for  students who are on the autism spectrum, have significant cognitive delays, are severely emotionally challenged, sensory impaired and/or multiply disabled.'' (\url{http://schools.nyc.gov/Academics/SpecialEducation/D75/AboutD75/default.htm})

For each school district we can also extract its county, one of Manhattan, Bronx, Kings, Queens, Richmond (note that some of these correspond to NYC boroughs with the same name, while Kings corresponds to Brooklyn; Richmond, to Staten Island; and Bronx, to The Bronx). The pre-processing steps can be reproduced using the script \texttt{scripts/prepare-data.sh} in the repository.

\subsection{Additional information on the experimental setup}

We checked correctness of the implementations by verifying on small examples that the posterior distributions obtained from all four methods (the three baselines and ours) become arbitrarily close when increasing the number of particles or MCMC iterations. We also subjected our Gibbs implementation to the Prior/Prior-Posterior correctness test of \citet{Geweke2004}. 

The precise command line arguments used in the experiments as well as the scripts used for creating the plots can be found at {\url{https://github.com/alexandrebouchard/multilevelSMC-experiments}}. To describe in more detail the methods we compared to, we introduce the following notation: given a permutation $t_1, t_2, \dots$ of the index set $T$, we set $\Xi_j$ to $\theta_{t_j}$ if $t_j$ is a leaf, or to $\sigma^2_{t_j}$ otherwise.
The three baseline methods are:

\begin{description}
  \item[Gibbs:] A Metropolis-within-Gibbs algorithm, proposing a change on a single variable, using a normal proposal of unit variance. As with \dcsmc[\sir], we marginalize the internal $\theta_t$ parameters. More precisely, we first sample a permutation $t_1, t_2, \dots$ uniformly at random, then using the Metropolis-within-Gibbs kernel, we sample $\Xi_1$, followed by $\Xi_2$, and so, until we sample $\Xi_{|T|}$, at which point we sample an indendent permutation uniformly at random and restart the process.  The Java implementation is available at {\url{https://github.com/alexandrebouchard/multilevelSMC}}, in the package ``multilevel.mcmc''.  We use a burn-in of 10\% 
and collect sample each time we resample a permutation of the nodes.
  \item[Stan:] An open-source implementation of  the Hamiltonian Monte Carlo algorithm \citep{Neal:2011}, more precisely of the No U-Turn Sampler \citep{Hoffman2012NUTS}, which adaptively selects the number of ``leap frog'' steps. Stan generates efficient C++ code. In contrast to the other methods, we did not implement marginalization of the internal $\theta_t$ parameters. Stan includes a Kalman inference engine, however it is limited to chain-shaped graphical models as of version 2.6.0. We use the default setting for the burn-in and adaptation period (50\%)
and a thinning of 10.\footnote{In other words, we store one sample every 10 accept-reject rounds. Since a large number of Hamilton Monte Carlo iterations was needed, this was useful to decrease storage. We verified using an ACF plot that this did not significantly penalize this method.} 
  \item[STD:] A standard (single population) bootstrap filter with the intermediate distributions being sub-forests incrementally built in post-order. More precisely, let $t_1, t_2, \dots$ denote a fixed post-order traversal of the nodes in the set $T$. Note that a prefix of this traversal, $t_1, t_2, \dots, t_k$, corresponds to a forest, with a corresponding parameter vector $\Xi_{1:k} = (\Xi_1, \Xi_2, \dots, \Xi_k)$. We use a standard SIR algorithm to sample $\Xi_{1:1}, \Xi_{1:2}, \Xi_{1:3}, \dots$, using the canonical sequence of intermediate distributions. Again, as with \dcsmc[\sir], we marginalize the internal $\theta_t$ parameters.  To propose an additional parameter $\Xi_i$ given its children, we use the same proposal distributions as those we used for \dcsmc[\sir] (described in Section~\ref{sec:hierarchical} of the main text). The Java implementation is available at {\url{https://github.com/alexandrebouchard/multilevelSMC}}, in the package ``multilevel.smc''.
\end{description}

\subsection{Additional results (serial)}

\begin{figure}[p]
\begin{center}
  \includegraphics[width=6.5in]{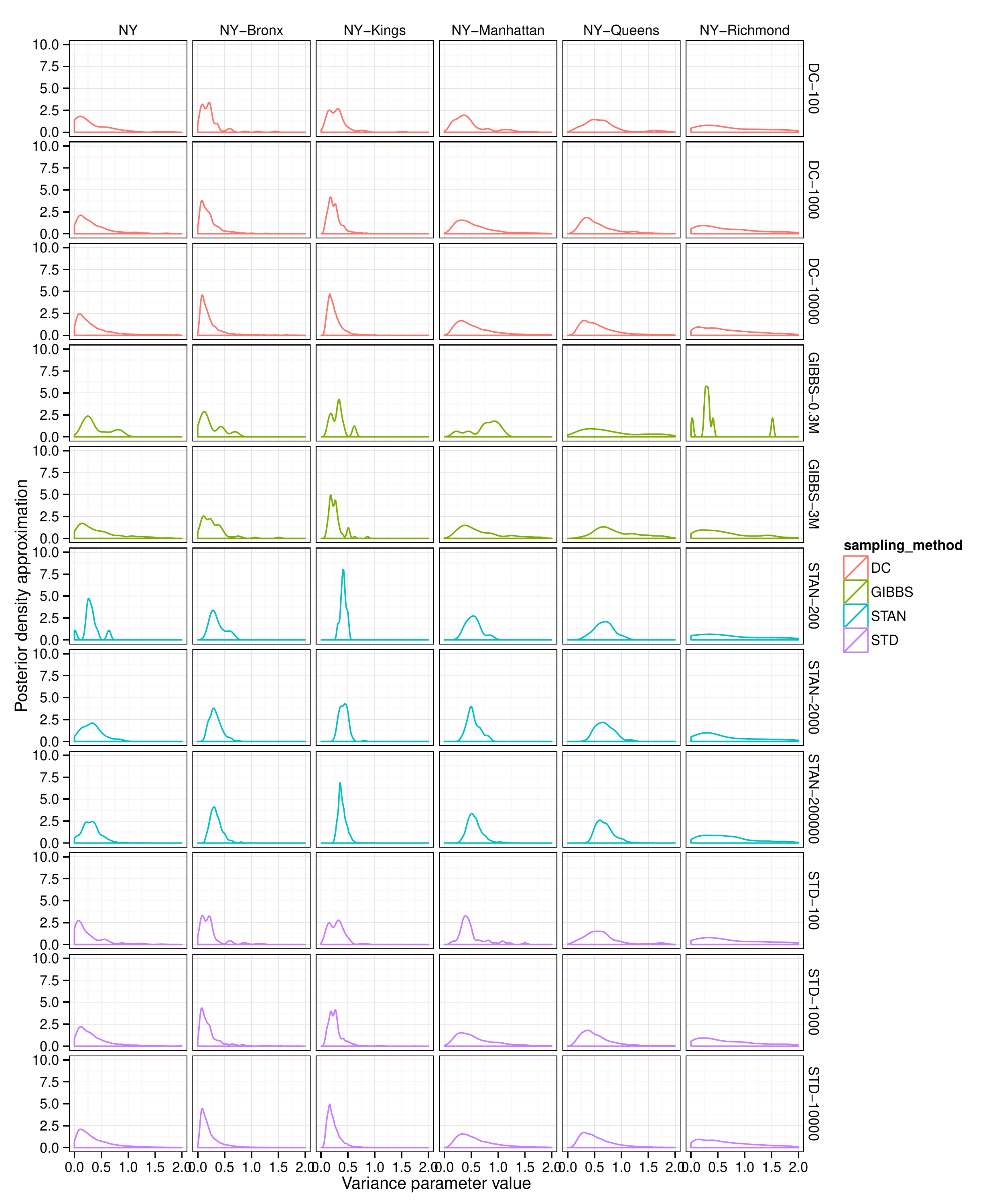}
  \caption{Posterior densities for the parameters $\sigma_t^2$. The rows index different methods and numbers of particles/iterations. The columns index different parameters $t$ (only the first two levels shown).}
  \label{fig:densities-var}
\end{center}
\end{figure}

We show in Figure~\ref{fig:densities-var} the posterior densities of the parameters $\sigma_t^2$, for $t$ ranging in the nodes in the two top levels of the tree, and for the four methods with different numbers of particles or MCMC iterations. The plots support that 1\,000 particles are sufficient for \dcsmc to output a reasonable approximation (see rows DC-$N$), while a larger number of MCMC iterations seem required for Gibbs (GIBBS-$N$) or Stan (STAN-$N$). 
Note however that DC-$N$ and STD-$N$ performed similarly in this first set of results.

Next, we show in Figure~\ref{fig:timing} a comparison of the wall clock times for the different algorithm configurations.
These experiments were performed on Intel Xeon E5430 quad-core processors, running at 2.66 GHz.
While the detailed rankings of the methods are implementation and architecture dependent, the results show that the DC-1\,000 was computed in approximately one minute, while all the reasonably mixed MCMC methods required at least one order of magnitude more time.
For example, running 2\,000 Stan iterations (1\,000 burn-in + 1\,000 samples) took around 6 hours. All Stan running times exclude the one-time model compilation. 

The results for both SMC and MCMC are all performed on a single thread. 
We attribute the high computational efficiency of the SMC methods to the favorable memory locality of the particle arrays. 
This is supported by the non-linearity of the running time going from 1000 to 10\thinspace000 particles (the theoretical running time is linear in the number of particles).

Note that for any given particle budget, our implementation of \dcsmc was slightly faster than STD. 
This could be caused by implementation details related to the fact that the particle datastructure for \dcsmc are simpler than those required by STD (the particles being forests in STD, instead of trees for \dcsmc).

\begin{figure}[tp]
\begin{center}
  \includegraphics[width=6.5in]{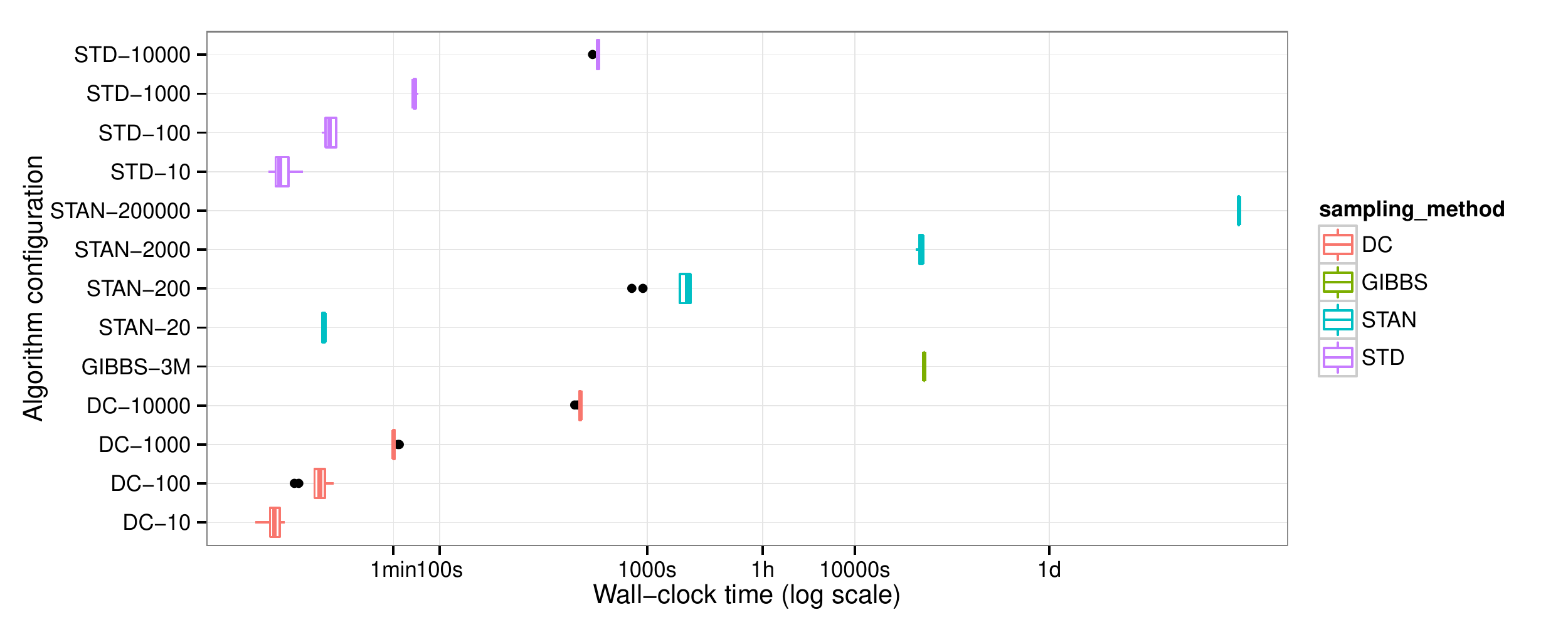}
  \caption{Comparison of the wall clock times for the different algorithm configurations. Note that the time axis is in log scale, and that iterations/particles were incremented exponentially. All experiments were replicated 10 times (varying only the Monte Carlo random seed), with the exception of the largest experiments taking more than 6 hours, which we ran only once (GIBBS-3M and STAN-200000).}
  \label{fig:timing}
\end{center}
\end{figure} 

These results further demonstrate that the model and data used in this section yield a challenging posterior inference problem.   

\subsection{Additional results (distributed)}

In this section, we provide a proof-of-concept demonstrating the suitability of \dcsmc to distributed computing environments (supplementing the discussion in Section~\ref{sec:distributed} of the main text).

In this discussion, it is useful to emphasize the usual distinction between \emph{parallel} architectures, where a single computer is composed of several cores sharing fast access to memory, and \emph{distributed} architectures, where several computers (nodes) are connected by a relatively slower network. 

Parallelizing SMC is typically done in the proposal step, at the granularity of particles: when sampling from $q$ or reweighting is expensive, or both, it becomes advantageous to parallelize these operations. However, the resampling step requires frequent communication, making this strategy less attractive in a distributed setting (but see for example  \citet{BolicDH05,Jun2012Entangled,VergeDDM:2014,Lee:2014,WhiteleyAD:2015} for alternatives).    

In contrast, distributed computation in \dcsmc can be scheduled to incur low communication costs. The main idea is to split the work at the granularity of populations, instead of the more standard particle granularity. 
In the following, we describe a simple strategy to decrease communication costs when the \dcsmc recursion is computed in parallel at several locations of the tree $t \in T$. To simplify the exposition, we assume that the tree is binary. 

To describe our distribution strategy, we introduce the notion of the \emph{depth} of a vertex in the tree $t\in T$. We set the depth to be zero at the root $r$, $\textrm{depth}(r) = 0$, and recursively, for each vertex $v$ with parent $v'$, $\textrm{depth}(v) = \textrm{depth}(v') + 1$. We consider the largest depth $d$ such that the number of vertices at that depth is greater or equal to the number $c$ of compute nodes available. For each vertex $v$ such that $\textrm{depth}(v) = d$, we assign the computation of all the \dcsmc recursions for the vertices under $v$ to the same computer.  

With this architecture, communication is only required at edges of the tree $T$ connecting nodes above depth $d$. 
This implies that the number of particles transmitted over the network will be $O(c N)$.  In contrast, naively distributing using a  particle granularity would require $O(c |T| N)$ particle transmissions. Moreover, the population granularity strategy can be  done in a completely decentralized fashion.

Based on this method, we have implemented an open source library for performing distributed, decentralized \dcsmc computation. Thanks to its decentralized nature, the package is simple to use: nodes discover each other automatically and only require a rough estimate on the number of nodes available, to specify the value $d$ discussed above. However, the algorithm is fault tolerant, in the sense that if some of the compute nodes fail, the computation will still be completed.

We applied this algorithm to the New York Mathematics Test dataset. We varied the number of active nodes in the cluster in $\{1, 2, 4, 8, 16, 32\}$, using $100\,000$ particles in all cases. The output of the distributed algorithm is identical in all cases, so it suffices to compare wall-clock times. The compute nodes consist in Xeon X5650 2.66GHz processors connected by a non-blocking Infiniband 4X QDR network. We show the results in Figure~\ref{fig:distributed}.

Each node used a single thread for simplicity, but note that combining parallelism and distribution can be done by parallelizing either at the level of particles, or populations with depths greater than $d$.

\begin{figure}[t]
\begin{center}
  \includegraphics[width=2in]{times.pdf} 
  \includegraphics[width=2in]{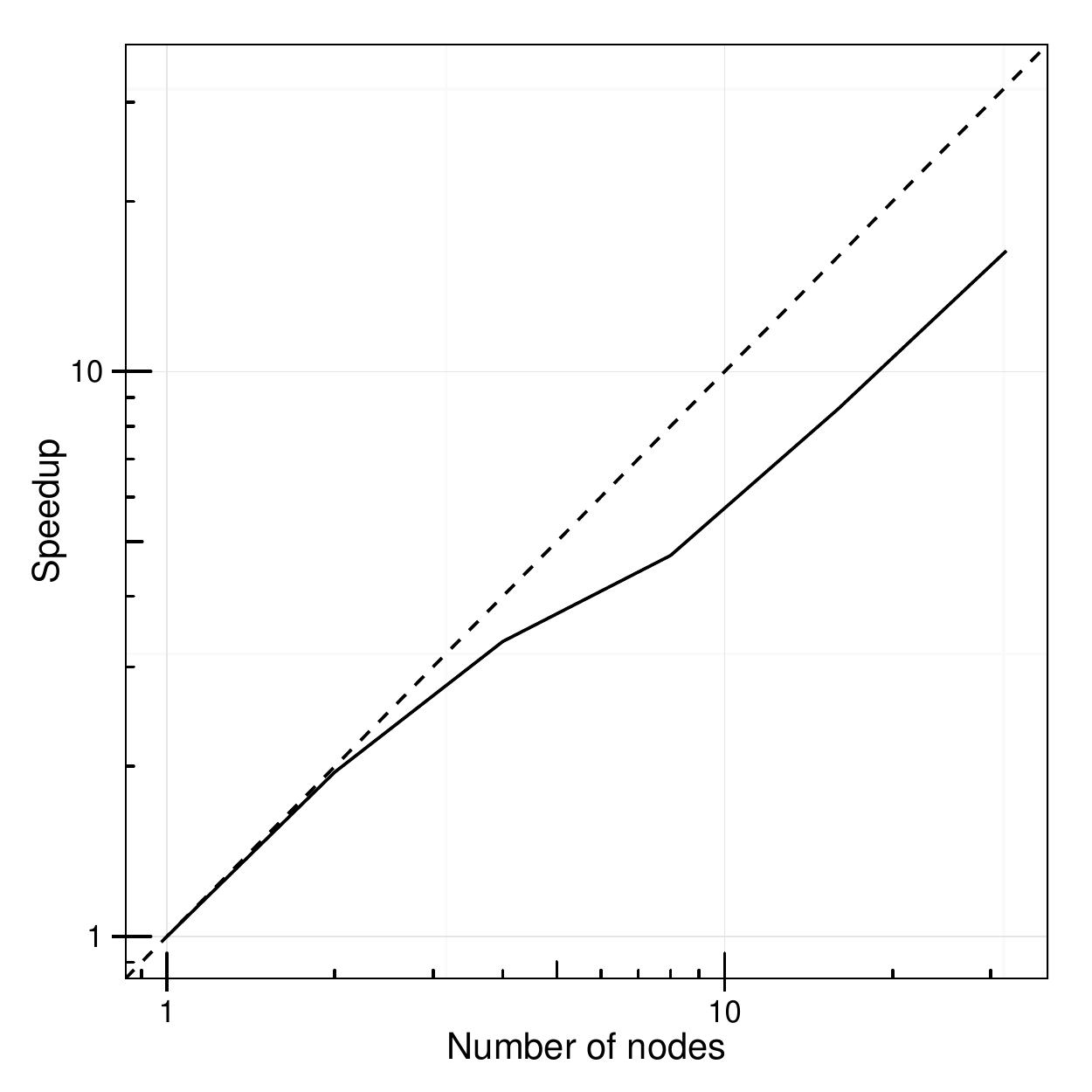} 
  \caption{(a) Wall-clock times for the distributed \dcsmc algorithm. (b) Speedup for the distributed \dcsmc algorithm (log scale).}
  \label{fig:distributed}
\end{center}
\end{figure}

% TODO? speedup, Karp-Flatt metric analysis, scalability, etc? Probably for a different paper.

%%% Local Variables:
%%% mode: latex
%%% TeX-master: "dc-smc.tex"
%%% End:

\end{document}